\newcommand{\calA}{\mathcal{A}}
\newcommand{\calB}{\mathcal{B}}
\newcommand{\calC}{\mathcal{C}}
\newcommand{\calD}{\mathcal{D}}
\newcommand{\calL}{\mathcal{L}}
\newcommand{\calO}{\mathcal{O}}
\newcommand{\calP}{\mathcal{P}}
\newcommand{\calS}{\mathcal{S}}
\newcommand{\calT}{\mathcal{T}}
\newcommand{\calV}{\mathcal{V}}
\newcommand{\bbB}{\mathbb{B}}
\newcommand{\bbL}{\mathbb{L}}
\newcommand{\bbN}{\mathbb{N}}
\newcommand{\bbP}{\mathbb{P}}
\newcommand{\bbT}{\mathbb{T}}
\title
{%
    Parameterized Approximation Algorithms for some Location Problems in Graphs
}
\author
{%
    Arne Leitert
    \and
    Feodor F. Dragan
}
\institute
{%
    Department of Computer Science, \\
    Kent State University, Kent, Ohio, USA \\
    \email{aleitert@cs.kent.edu},
    \email{dragan@cs.kent.edu}
}
\newcommand{\ie}{i.\,e.\@ifnextchar{,}{}{~}}
\newcommand{\eg}{e.\,g.\@ifnextchar{,}{}{~}}
\DeclareRobustCommand{\rHrt}
{%
    \ifmmode
        \text{\boldmath (\raisebox{-0.25ex}{$\heartsuit$})}~%
    \else%
        {\boldmath (\raisebox{-0.25ex}{$\heartsuit$})}\xspace%
    \fi%
}
\DeclareRobustCommand{\rDmd}
{%
    \ifmmode%
        \text{\boldmath ($\diamondsuit$)}~%
    \else%
        {\boldmath ($\diamondsuit$)}\xspace%
    \fi%
}
\newcommand{\Sup}{S^{\raisebox{0.1ex}{$\scriptscriptstyle \uparrow$}}}
\newcommand{\Sdown}{S^{\scriptscriptstyle\downarrow}}
\newcommand{\calSdown}{\calS^{\scriptscriptstyle\downarrow}}
\DeclareMathOperator{\ecc}{ecc}
\DeclareMathOperator{\diam}{diam}
\DeclareMathOperator{\tb}{tb}
\DeclareMathOperator{\tl}{tl}
\DeclareRobustCommand{\qedClaim}
{%
  \ifmmode \lozenge%
  \else%
    \leavevmode\unskip\penalty9999 \hbox{}\nobreak\hfill%
    \quad\hbox{$\lozenge$}%
  \fi%
}%
\DeclareRobustCommand{\qed}
{%
  \ifmmode \square%
  \else%
    \leavevmode\unskip\penalty9999 \hbox{}\nobreak\hfill%
    \quad\hbox{$\square$}%
  \fi%
}%
\spnewtheorem*{conj}{Conjecture}{\normalfont\bfseries}{\itshape}
\begin{document}
\pagestyle{plain}
\maketitle


\begin{abstract}
We develop efficient parameterized, with additive error, approximation algorithms for the (Connected) $r$-Domination problem and the (Connected) $p$-Center problem for unweighted and undirected graphs.
Given a graph~$G$, we show how to construct a (connected) $\big(r + \calO(\mu) \big)$-dominating set~$D$ with $|D| \leq |D^*|$ efficiently.
Here, $D^*$ is a minimum (connected) $r$-dominating set of~$G$ and $\mu$ is our graph parameter, which is the \emph{tree-breadth} or the \emph{cluster diameter in a layering partition} of~$G$.
Additionally, we show that a $+ \calO(\mu)$-approximation for the (Connected) $p$-Center problem on~$G$ can be computed in polynomial time.
Our interest in these parameters stems from the fact that in many real-world networks, including Internet application networks, web networks, collaboration networks, social networks, biological networks, and others, and in many structured classes of graphs these parameters are small constants.
\end{abstract}

\section{Introduction}
The (Connected) $r$-Domination problem and the (Connected) $p$-Center problem, along with the $p$-Median problem, are among basic facility location problems with many applications in data clustering, network design, operations research~-- to name a few.
Let $G = (V, E)$ be an unweighted and undirected graph.
Given a radius~$r(v) \in \bbN$ for each vertex~$v$ of~$G$, indicating within what radius a vertex~$v$ wants to be served, the \emph{\( r \)-Domination problem} asks to find a set $D \subseteq V$ of minimum cardinality such that $d_G(v, D) \leq r(v)$ for every $v \in V$.
The \emph{Connected \( r \)-Domination problem} asks to find an $r$-dominating set~$D$ of minimum cardinality with an additional requirement that $D$ needs to induce a connected subgraph of~$G$.
When $r(v) = 1$ for every $v \in V$, one gets the classical (Connected) Domination problem.
Note that the Connected $r$-Domination problem is a natural generalization of the Steiner Tree problem (where each vertex~$t$ in the target set has $r(t) = 0$ and each other vertex~$s$ has $r(s) = \diam(G)$).
The connectedness of~$D$ is important also in network design and analysis applications (\eg in finding a small backbone of a network).
It is easy to see also that finding minimum connected dominating sets is equivalent to finding spanning trees with the maximum possible number of leaves.

The (closely related) \emph{\( p \)-Center problem} asks to find in~$G$ a set~$C \subseteq V$ of at most $p$~vertices such that the value $\max_{v \in V} d_G(v, C)$ is minimized.
If, additionally, $C$ is required to induce a connected subgraph of~$G$, then one gets the \emph{Connected \( p \)-Center problem}.

The domination problem is one of the most well-studied NP-hard problems in algorithmic graph theory.
To cope with the intractability of this problem it has been studied both in terms of approximability (relaxing the optimality) and fixed-parameter tractability (relaxing the runtime).
From the approximability prospective, a logarithmic approximation factor can be found by using a simple greedy algorithm, and finding a sublogarithmic approximation factor is NP-hard~\cite{RaSa1997}.
The problem is in fact Log-APX-complete~\cite{EsPa2006}.
The Domination problem is notorious also in the theory of fixed-parameter tractability (see, \eg, \cite{DoFe1999,Niederm2006} for an introduction to parameterized complexity).
It was the first problem to be shown W[2]-complete~\cite{DoFe1999}, and it is hence unlikely to be FPT, \ie, unlikely to have an algorithm with runtime $f(k)n^c$ for $f$ a computable function, $k$ the size of an optimal solution, $c$ a constant, and $n$ the number of vertices of the input graph.
Similar results are known also for the connected domination problem~\cite{GuKh1998}.

The $p$-Center problem is known to be NP-hard on graphs.
However, for it, a simple and efficient factor~$2$ approximation algorithm exists~\cite{Gon1985}.
Furthermore, it is a best possible approximation algorithm in the sense that an approximation with factor less than~$2$ is proven to be NP-hard (see~\cite{Gon1985} for more details).
The NP-hardness of the Connected $p$-Center problem is shown in~\cite{YenChen2007}.

Recently, in~\cite{ChepoiEstell2007}, a new type of approximability result (call it a \emph{parameterized approximability} result) was obtained:
there exists a polynomial time algorithm which finds in an arbitrary graph~$G$ having a minimum $r$-dominating set~$D$ a set~$D'$ such that $|D'| \leq |D|$ and each vertex~$v \in V$ is within distance at most~$r(v) + 2 \delta$ from~$D'$, where $\delta$ is the hyperbolicity parameter of~$G$ (see~\cite{ChepoiEstell2007} for details).
We call such a~$D'$ an \emph{\( (r + 2 \delta) \)-dominating set} of~$G$.
Later, in~\cite{EdwaKennSani2016}, this idea was extended to the $p$-Center problem:
there is a quasi-linear time algorithm for the $p$-Center problem with an additive error less than or equal to six times the input graph's hyperbolicity (\ie, it finds a set~$C'$ with at most $p$~vertices such that $\max_{v \in V} d_G(v,C') \leq \min_{C \subseteq V, |C| \leq p} \max_{v \in V} d_G(v, C) + 6 \delta$).
We call such a~$C'$ a \emph{\( + 6\delta \)-approximation for the \( p \)-Center problem}.

In this paper, we continue the line of research started in \cite{ChepoiEstell2007} and~\cite{EdwaKennSani2016}.
Unfortunately, the results of~\cite{ChepoiEstell2007,EdwaKennSani2016} are hardly extendable to connected versions of the $r$-Domination and $p$-Center problems.
It remains an open question whether similar approximability results parameterized by the graph's hyperbolicity can be obtained for the Connected $r$-Domination and Connected $p$-Center problems.
Instead, we consider two other graph parameters:
the \emph{tree-breadth~\( \rho \)} and the \emph{cluster diameter~$\Delta$ in a layering partition} (formal definitions will be given in the next sections).
Both parameters (like the hyperbolicity) capture the metric tree-likeness of a graph (see, \eg, \cite{AbuAtaDragan2016} and papers cited therein).
As demonstrated in~\cite{AbuAtaDragan2016}, in many real-world networks, including Internet application networks, web networks, collaboration networks, social networks, biological networks, and others, as well as in many structured classes of graphs the parameters $\delta$, $\rho$, and~$\Delta$ are small constants.

We show here that, for a given $n$-vertex, $m$-edge graph~$G$, having a minimum $r$-dominating set~$D$ and a minimum connected $r$-dominating set~$C$:
\begin{itemize}
    \item
        an $(r + \Delta)$-dominating set~$D'$ with $|D'| \leq |D|$ can be computed in linear time;
    \item
        a connected $(r + 2 \Delta)$-dominating set~$C'$ with $|C'| \leq |C|$ can be computed in $\calO \big( m \, \alpha(n) \log \Delta \big)$ time (where $\alpha(n)$ is the inverse Ackermann function);
    \item
        a $+ \Delta$-approximation for the $p$-Center problem can be computed in linear time;
    \item
        a $+ 2 \Delta$-approximation for the connected $p$-Center problem can be computed in $\calO \big( m \, \alpha(n) \log \min(\Delta, p) \big)$ time.
\end{itemize}

Furthermore, given a tree-decomposition with breadth~$\rho$ for~$G$:
\begin{itemize}
    \item
        an $(r + \rho)$-dominating set~$D'$ with $|D'| \leq |D|$ can be computed in $\calO(nm)$ time;
    \item
        a connected $\big( r + 5 \rho \big)$-dominating set~$C'$ with $|C'| \leq |C|$ can be computed in $\calO(nm)$ time;
    \item
        a $+ \rho$-approximation for the $p$-Center problem can be computed in $\calO( nm \log n)$ time;
    \item
        a $+ 5 \rho$-approximation for the Connected $p$-Center problem can be computed in $\calO( nm \log n)$ time.
\end{itemize}

To compare these results with the results of~\cite{ChepoiEstell2007,EdwaKennSani2016}, notice that, for any graph~$G$, its hyperbolicity~$\delta$ is at most $\Delta$~\cite{AbuAtaDragan2016} and at most two times its tree-breadth~$\rho$~\cite{ChDr++2008}, and the inequalities are sharp.

Note that, for split graphs (graphs in which the vertices can be partitioned into a clique and an independent set), all three parameters are at most~$1$.
Additionally, as shown in~\cite{ChlebiChlebi2008}, there is (under reasonable assumptions) no polynomial-time algorithm to compute a sublogarithmic-factor approximation for the (Connected) Domination problem in split graphs.
Hence, there is no such algorithm even for constant $\delta$, $\rho$, and~$\Delta$.

One can extend this result to show that there is no polynomial-time algorithm~$\calA$ which computes, for any constant~$c$, a $+ c \log n$-approximation for split graphs.
Hence, there is no polynomial-time $+ c \Delta \log n$-approximation algorithm in general.
Consider a given split graph $G = (C \cup I, E)$ with $n$~vertices where $C$ induces a clique and $I$ induces an independent set.
Create a graph~$H = (C_H \cup I_H, E_H)$ by, first, making $n$~copies of~$G$.
Let $C_H = C_1 \cup C_2 \cup \ldots \cup C_n$ and $I_H = I_1 \cup I_2 \cup \ldots \cup I_n$.
Second, make the vertices in~$C_H$ pairwise adjacent.
Then, $C_H$ induces a clique and $I_H$ induces an independent set.
If there is such an algorithm~$\calA$, then $\calA$ produces a (connected) dominating set~$D_\calA$ for~$H$ which hast at most $2 c \log n$ more vertices that a minimum (connected) dominating set~$D$.
Thus, by pigeonhole principle, $H$ contains a clique~$C_i$ for which $|C_i \cap D_\calA| = |C_i \cap D|$.
Therefore, such an algorithm~$\calA$ would allow to solve the (Connected) Domination problem for split graphs in polynomial time.

\section{Preliminaries}
All graphs occurring in this paper are connected, finite, unweighted, undirected, without loops, and without multiple edges.
For a graph~$G = (V, E)$, we use $n = |V|$ and $m = |E|$ to denote the cardinality of the vertex set and the edge set of~$G$, respectively.

The \emph{length} of a path from a vertex~$v$ to a vertex~$u$ is the number of edges in the path.
The \emph{distance}~$d_G(u, v)$ in a graph~$G$ of two vertices $u$ and~$v$ is the length of a shortest path connecting $u$ and~$v$.
The distance between a vertex~$v$ and a set~$S \subseteq V$ is defined as $d_G(v, S) = \min_{u \in S} d_G(u, v)$.
For a vertex~$v$ of~$G$ and some positive integer~$r$, the set $N_G^r[v] = \big \{ \, u \mid d_G(u, v) \leq r \, \big \}$ is called the \emph{\( r \)-neighbourhood} of~$v$.
The \emph{eccentricity}~$\ecc_G(v)$ of a vertex~$v$ is $\max_{u \in V} d_G(u, v)$.
For a set~$S \subseteq V$, its eccentricity is $\ecc_G(S) = \max_{u \in V} d_G(u, S)$.

For some function $r \colon V \rightarrow \bbN$, a vertex~$u$ is \emph{\( r \)-dominated} by a vertex~$v$ (by a set~$S \subseteq V$), if $d_G(u, v) \leq r(u)$ ($d_G(u, S) \leq r(u)$, respectively).
A vertex set~$D$ is called an \emph{\( r \)-dominating set} of~$G$ if each vertex~$u \in V$ is $r$ dominated by~$D$.
Additionally, for some non-negative integer~$\phi$, we say a vertex is \emph{\( (r + \phi) \)-dominated} by a vertex~$v$ (by a set~$S \subseteq V$), if $d_G(u, v) \leq r(u) + \phi$ ($d_G(u, S) \leq r(u) + \phi$, respectively).
An \emph{\( (r + \phi) \)-dominating set} is defined accordingly.
For a given graph~$G$ and function~$r$, the \emph{(Connected) \( r \)-Domination} problem asks for the smallest (connected) vertex set~$D$ such that $D$ is an $r$-dominating set of~$G$.

The \emph{degree} of a vertex~$v$ is the number of vertices adjacent to it.
For a vertex set~$S$, let $G[S]$ denote the subgraph of~$G$ induced by~$S$.
A vertex set~$S$ is a \emph{separator} for two vertices $u$ and~$v$ in~$G$ if each path from $u$ to~$v$ contains a vertex~$s \in S$; in this case we say $S$ \emph{separates} $u$ from~$v$.

A \emph{tree-decomposition} of a graph~$G = (V, E)$ is a tree~$T$ with the vertex set~$\calB$ where each vertex of $T$, called bag, is a subset of~$V$ such that:
(i)~$V = \bigcup_{B \in \calB} B$, (ii)~for each edge~$uv \in E$, there is a bag~$B \in \calB$ with $u,v \in B$, and (iii)~for each vertex~$v \in V$, the bags containing $v$ induce a subtree of~$T$.
A tree-decomposition~$T$ of~$G$ has \emph{breadth~\( \rho \)} if, for each bag~$B$ of~$T$, there is a vertex~$v$ in~$G$ with $B \subseteq N_G^\rho[v]$.
The \emph{tree-breadth} of a graph~$G$ is $\rho$, written as $\tb(G) = \rho$, if $\rho$ is the minimal breadth of all tree-decomposition for~$G$.
A tree-decomposition~$T$ of~$G$ has \emph{length~\( \lambda \)} if, for each bag~$B$ of~$T$ and any two vertices~$u, v \in B$, $d_G(u, v) \leq \lambda$.
The \emph{tree-length} of a graph~$G$ is $\lambda$, written as $\tl(G) = \lambda$, if $\lambda$ is the minimal length of all tree-decomposition for~$G$.

For a rooted tree~$T$, let $\Lambda(T)$ denote the number of leaves of~$T$.
For the case when $T$ contains only one node, let $\Lambda(T) := 0$.
With $\alpha$, we denote the inverse Ackermann function (see, \eg,~\cite{CorLeiRivSte2009}).
It is well known that $\alpha$ grows extremely slowly.
For $x = 10^{80}$ (estimated number of atoms in the universe), $\alpha(x) \leq 4$.

\section{Using a Layering Partition}

The concept of a \emph{layering partition} was introduced in~\cite{BranChepDrag1999,ChepoiDragan2000}.
The idea is the following.
First, partition the vertices of a given graph~$G = (V, E)$ in distance layers~$L_i = \{ \, v \mid d_G(s, v) = i \, \}$ for a given vertex~$s$.
Second, partition each layer~$L_i$ into \emph{clusters} in such a way that two vertices $u$ and~$v$ are in the same cluster if and only if they are connected by a path only using vertices in the same or upper layers.
That is, $u$ and~$v$ are in the same cluster if and only if, for some~$i$, $\{ u, v \} \subseteq L_i$ and there is a path~$P$ from~$u$ to~$v$ in~$G$ such that, for all $j < i$, $P \cap L_j = \emptyset$.
Note that each cluster~$C$ is a set of vertices of~$G$, \ie, $C \subseteq V$, and all clusters are pairwise disjoint.
The created clusters form a rooted tree~$\calT$ with the cluster~$\{ s \}$ as the root where each cluster is a node of~$\calT$ and two clusters $C$ and~$C'$ are adjacent in~$\calT$ if and only if $G$ contains an edge~$uv$ with $u \in C$ and~$v \in C'$.
Figure~\ref{fig:LayPartEx} gives an example for such a partition.
A layering partition of a graph can be computed in linear time~\cite{ChepoiDragan2000}.

\begin{figure}
    [htb]
    \centering

    \subfloat[]
    [%
        A graph $G$.
    ]
    {%
        \centering
        \includegraphics[]{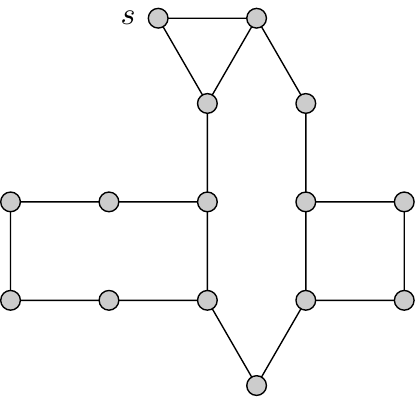}
        \label{fig:LayPartExG}
    }
    \hfil
    \subfloat[]
    [%
        A layering partition~$\calT$ for~$G$.
    ]
    {%
        \centering
        \includegraphics[]{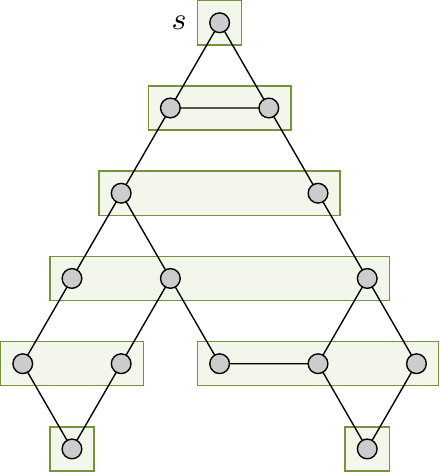}
        \label{fig:LayPartExT}
    }

    \caption
    {%
        Example of a layering partition.
        A given graph~$G$~\protect\subref{fig:LayPartExG} and the layering partition of~$G$ generated when starting at vertex~$s$~\protect\subref{fig:LayPartExT}.
        Example taken from~\cite{ChepoiDragan2000}.
    }
    \label{fig:LayPartEx}
\end{figure}

For the remainder of this section, assume that we are given a graph~$G = (V, E)$ and a layering partition~$\calT$ of~$G$ for an arbitrary start vertex.
We denote the largest diameter of all clusters of~$\calT$ as~$\Delta$, \ie, $\Delta := \max \big \{ \, d_G(x, y) \mid \text{$x, y$ are in a cluster~$C$ of~$\calT$} \, \big \}$.
For two vertices $u$ and~$v$ of~$G$ contained in the clusters $C_u$ and~$C_v$ of~$\calT$, respectively, we define $d_\calT(u, v) := d_\calT(C_u, C_v)$.

\begin{lemma}
    \label{lem:LayPartVertDist}
For all vertices \( u \) and~\( v \) of~\( G \), \( d_\calT(u, v) \leq d_G(u, v) \leq d_\calT(u, v) + \Delta \).
\end{lemma}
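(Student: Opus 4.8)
The plan is to prove the two inequalities separately, both by exhibiting an appropriate path. For the lower bound $d_\calT(u,v) \le d_G(u,v)$, I would take a shortest path $P = (u = w_0, w_1, \ldots, w_k = v)$ in $G$ with $k = d_G(u,v)$, and track the clusters $C_{w_0}, C_{w_1}, \ldots, C_{w_k}$ that its vertices belong to. Each edge $w_{i-1}w_i$ of $P$ either stays inside one cluster or connects two vertices in adjacent clusters of $\calT$ (by the definition of adjacency in $\calT$). Hence the sequence of clusters, after deleting consecutive repetitions, forms a walk in the tree $\calT$ from $C_u$ to $C_v$ of length at most $k$, so $d_\calT(C_u, C_v) \le k = d_G(u,v)$.

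For the upper bound $d_G(u,v) \le d_\calT(u,v) + \Delta$, I would argue by induction on $d := d_\calT(u,v)$. The base case $d = 0$ means $u$ and $v$ lie in the same cluster, so $d_G(u,v) \le \Delta$ by definition of $\Delta$. For the inductive step, let $C_u = B_0, B_1, \ldots, B_d = C_v$ be the path in $\calT$ from $C_u$ to $C_v$. Since $B_{d-1}$ and $B_d$ are adjacent in $\calT$, there is an edge $xy$ in $G$ with $x \in B_{d-1}$ and $y \in B_d$. Then $d_\calT(u, x) = d - 1$, so by induction $d_G(u, x) \le (d-1) + \Delta$; also $y$ and $v$ are both in $C_v$, so $d_G(y, v) \le \Delta$. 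Combining, $d_G(u, v) \le d_G(u,x) + 1 + d_G(y,v) \le (d-1) + \Delta + 1 + \Delta$, which is too weak. So instead I would route more carefully: pick a vertex $z \in C_v$ that is closest to $C_u$ along the layer structure and use the fact that within the tree-path one can always descend to the common layer-structured ancestor using edges that do not accumulate extra $\Delta$-terms.

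The cleaner route, which I would actually carry out, is to recall how the clusters sit inside distance layers. Each cluster $B_i$ lies entirely in one layer $L_{j(i)}$, and adjacent clusters in $\calT$ lie in consecutive layers; moreover $\calT$ is rooted at $\{s\}$, so the tree-path from $C_u$ to $C_v$ goes up to their lowest common ancestor $C_w$ in $\calT$ and down again. Using that a vertex in layer $L_j$ has a neighbour in $L_{j-1}$ that lies in the parent cluster, I can build a path from $u$ up to some vertex of $C_w$ of length exactly (the number of up-steps), then symmetrically down from some vertex of $C_w$ to $v$, then bridge the (at most one) gap inside $C_w$ at a cost of at most $\Delta$. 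The total length is then (number of up-steps) $+$ (number of down-steps) $+ \Delta = d_\calT(u,v) + \Delta$.

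The main obstacle is this last counting: one must make sure the single additive $\Delta$ is incurred only once, at the "turnaround" cluster $C_w$, and not once per cluster along the path. The key observation that makes it work is that moving from a vertex to its parent cluster can be done by a single edge (down one layer), so every up-step and every down-step along the tree-path contributes exactly one edge to the constructed $G$-path; only the horizontal movement inside the top cluster $C_w$ — needed to connect the endpoint of the upward path to the start of the downward path — costs up to $\Delta$. I would state this as the crux of the argument and verify it by a careful but short induction on the number of up-steps together with the number of down-steps.
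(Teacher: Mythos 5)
Your plan is correct and, in its final form, follows essentially the same route as the paper: for the upper bound you ascend from $u$ and from $v$ to their lowest common ancestor cluster in~$\calT$ using one edge per tree-step (each vertex has a neighbour in its parent cluster), pay at most~$\Delta$ once inside that cluster, and for the lower bound you project a shortest $G$-path onto a walk in~$\calT$. The only difference is that you spell out the lower bound and the failed naive induction, which the paper dismisses with ``clearly''; no gap remains.
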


\begin{proof}
Clearly, by construction of a layering partition, $d_{\calT}(u, v) \leq d_G(u, v)$ for all vertices $u$ and~$v$ of~$G$.

Next, let $C_u$ and~$C_v$ be the clusters containing $u$ and~$v$, respectively.
Note that $\calT$ is a rooted tree.
Let $C'$ be the lowest common ancestor of $C_u$ and~$C_v$.
Therefore, $d_\calT(u, v) = d_\calT(u, C') + d_\calT(C', v)$.
By construction of a layering partition, $C'$ contains a vertex~$u'$ and vertex~$v'$ such that $d_G(u, u') = d_\calT(u, u')$ and $d_G(v, v') = d_\calT(v, v')$.
Since the diameter of each cluster is at most~$\Delta$, $d_G(u, v) \leq d_\calT(u, u') + \Delta + d_\calT(v, v') = d_{\calT}(u, v) + \Delta$.
\qed
\end{proof}

Theorem~\ref{theo:rDeltaDom} below shows that we can use the layering partition~$\calT$ to compute an $(r + \Delta)$-dominating set for~$G$ in linear time which is not larger than a minimum $r$-dominating set for~$G$.
This is done by finding a minimum $r$-dominating set of~$\calT$ where, for each cluster~$C$ of~$\calT$, $r(C)$ is defined as~$\min_{v \in C} r(v)$.

\begin{theorem}
    \label{theo:rDeltaDom}
Let \( D \) be a minimum \( r \)-dominating set for a given graph~\( G \).
An \( (r + \Delta) \)-dominating set~\( D' \) for~\( G \) with \( |D'| \leq |D| \) can be computed in linear time.
\end{theorem}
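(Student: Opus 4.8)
The plan is to transfer the problem from $G$ to the layering-partition tree $\calT$, solve it optimally there, and project the solution back to $G$ while paying an additive error of~$\Delta$. Concretely, I would treat each cluster~$C$ of~$\calT$ as a single node with demand $r(C) := \min_{v \in C} r(v)$, and compute a \emph{minimum} $r$-dominating set~$D_\calT$ of the tree~$\calT$ with respect to these demands; $r$-domination with arbitrary integer demands is solvable in linear time on trees by a standard leaves-to-root sweep that, at each cluster, keeps track of the distance to the deepest not-yet-dominated descendant and of the residual radius of the nearest already-placed dominator, placing a new dominator only when forced. Since the layering partition itself is computable in linear time, $D_\calT$ is obtained within the claimed bound. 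Finally I would form~$D'$ by picking, for each cluster $C \in D_\calT$, one arbitrary vertex of~$C$; then $|D'| = |D_\calT|$ and $D'$ is produced in linear time.

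To bound $|D'|$ it suffices to show $|D_\calT| \le |D|$, for which it is enough to exhibit \emph{some} $r$-dominating set of~$\calT$ of size at most~$|D|$. Consider the family $\calC_D := \{\, C_v \mid v \in D \,\}$ of clusters containing a vertex of~$D$; clearly $|\calC_D| \le |D|$. For an arbitrary cluster~$C$ of~$\calT$, pick $w \in C$ with $r(w) = r(C)$; since $D$ is an $r$-dominating set of~$G$, there is $v \in D$ with $d_G(w, v) \le r(w) = r(C)$, and Lemma~\ref{lem:LayPartVertDist} then gives $d_\calT(C, C_v) = d_\calT(w, v) \le d_G(w, v) \le r(C)$. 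Hence $\calC_D$ $r$-dominates~$\calT$, so $|D'| = |D_\calT| \le |\calC_D| \le |D|$.

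It remains to check that $D'$ is an $(r + \Delta)$-dominating set of~$G$. Let $u \in V$ lie in cluster~$C_u$. Because $D_\calT$ $r$-dominates~$\calT$, there is a cluster $C' \in D_\calT$ with $d_\calT(C_u, C') \le r(C_u) \le r(u)$; let $v' \in C' \cap D'$ be the representative chosen for~$C'$. Applying Lemma~\ref{lem:LayPartVertDist} once more, $d_G(u, v') \le d_\calT(u, v') + \Delta = d_\calT(C_u, C') + \Delta \le r(u) + \Delta$, so $u$ is $(r + \Delta)$-dominated by~$D'$, as required. I expect the only genuinely delicate point to be the linear-time exact solution of $r$-domination on~$\calT$ with arbitrary demands $r(C)$ (in particular, handling demands exceeding the depth of the relevant subtree, so that a single well-placed cluster may dominate everything above it); the rest is bookkeeping together with the two applications of Lemma~\ref{lem:LayPartVertDist} above.
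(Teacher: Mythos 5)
Your proposal is correct and follows essentially the same route as the paper: build the layering partition, set $r(C) := \min_{v \in C} r(v)$, solve minimum $r$-domination exactly on the tree~$\calT$, pick one representative per selected cluster, and use Lemma~\ref{lem:LayPartVertDist} in both directions to get the cardinality bound and the $+\Delta$ domination guarantee. The only cosmetic difference is that the paper cites an existing linear-time algorithm for $r$-domination on trees rather than sketching the leaves-to-root sweep itself.
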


\begin{proof}
First, create a layering partition~$\calT$ of~$G$ and, for each cluster~$C$ of~$\calT$, set $r(C) := \min_{v \in C} r(v)$.
Second, find a minimum $r$-dominating set~$\calS$ for~$\calT$, \ie, a set~$\calS$ of clusters such that, for each cluster~$C$ of~$\calT$, $d_\calT(C, \calS) \leq r(C)$.
Third, create a set~$D'$ by picking an arbitrary vertex of~$G$ from each cluster in~$\calS$.
All three steps can be performed in linear time, including the computation of~$\calS$ (see~\cite{BranChepDrag1998}).

Next, we show that $D'$ is an $(r + \Delta)$-dominating set for~$G$.
By construction of~$\calS$, each cluster~$C$ of~$\calT$ has distance at most~$r(C)$ to~$\calS$ in~$\calT$.
Thus, for each vertex~$u$ of~$G$, $\calS$ contains a cluster~$C_\calS$ with $d_\calT(u, C_\calS) \leq r(u)$.
Additionally, by Lemma~\ref{lem:LayPartVertDist}, $d_G(u, v) \leq r(u) + \Delta$ for any vertex~$v \in C_\calS$.
Therefore, for any vertex~$u$, $d_G(u, D') \leq r(u) + \Delta$, \ie, $D'$ is an $(r + \Delta)$-dominating set for~$G$.

It remains to show that $|D'| \leq |D|$.
Let $\calD$ be the set of clusters of~$\calT$ that contain a vertex of~$D$.
Because $D$ is an $r$-dominating set for~$G$, it follows from Lemma~\ref{lem:LayPartVertDist} that $\calD$ is an $r$-dominating set for~$\calT$.
Clearly, since clusters are pairwise disjoint, $|\calD| \leq |D|$.
By minimality of~$\calS$, $|\calS| \leq |\calD|$ and, by construction of~$D'$, $|D'| = |\calS|$.
Therefore, $|D'| \leq |D|$.
\qed
\end{proof}

We now show how to construct a connected $(r + 2 \Delta)$-dominating set for~$G$ using~$\calT$ in such a way that the set created is not larger than a minimum connected $r$-dominating set for~$G$.
For the remainder of this section, let $D_r$ be a minimum connected $r$-dominating set of~$G$ and let, for each cluster~$C$ of~$\calT$, $r(C)$ be defined as above.
Additionally, we say that a subtree~$T'$ of some tree~$T$ is an \emph{\( r \)-dominating subtree of~\( T \)} if the nodes (clusters in case of a layering partition) of~$T'$ form a connected $r$-dominating set for~$T$.

The first step of our approach is to construct a minimum $r$-dominating subtree~$T_r$ of~$\calT$.
Such a subtree~$T_r$ can be computed in linear time~\cite{Dragan1993}.
Lemma~\ref{lem:TrCardinality} below shows that $T_r$ gives a lower bound for the cardinality of~$D_r$.

\begin{lemma}
    \label{lem:TrCardinality}
If \( T_r \) contains more than one cluster, each connected \( r \)-dominating set of~\( G \) intersects all clusters of~\( T_r \).
Therefore, \( |T_r| \leq |D_r| \).
\end{lemma}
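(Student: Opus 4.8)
The plan is to show that any connected $r$-dominating set $D$ of $G$ must "touch" every cluster that belongs to the minimum $r$-dominating subtree $T_r$ of $\calT$; once we have that, the pairwise-disjointness of clusters immediately yields $|T_r| \le |D_r|$ by taking $D = D_r$. So the core claim to establish is: if $T_r$ has more than one cluster, then for every cluster $C$ that is a node of $T_r$, we have $D \cap C \ne \emptyset$.

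First I would set up the contrapositive-style argument. Suppose, for contradiction, that some cluster $C$ of $T_r$ satisfies $D \cap C = \emptyset$. The idea is that $C$ must be doing "domination work" inside $T_r$ — otherwise $T_r$ would not be minimal — and that work can only be justified if some vertex of $G$ in (or hanging below) $C$ forces a dominator to sit at $C$. Concretely, I would first argue that $C$ cannot be a leaf of $T_r$ that is redundant: by minimality of $T_r$ as an $r$-dominating subtree of $\calT$, removing $C$ from $T_r$ either disconnects it or leaves some cluster $C''$ of $\calT$ no longer $r$-dominated in $\calT$; in either case there is a cluster $C''$ of $\calT$ with $d_\calT(C'', T_r) \le r(C'')$ but $d_\calT(C'', T_r \setminus C) > r(C'')$, i.e. $C$ is the unique closest node of $T_r$ to $C''$ at the required distance, and moreover $C$ separates $C''$ from $T_r \setminus C$ in $\calT$ — so $d_\calT(C'', C) = d_\calT(C'', T_r)$ and every path in $\calT$ from $C''$ to $T_r \setminus C$ runs through $C$.

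Next I would translate this tree statement back to $G$. Pick a vertex $w \in C''$ realizing $r(C'') = \min_{v \in C''} r(v)$; since $d_\calT(w, C) = d_\calT(C'', C) \le r(C'') = r(w)$ (using Lemma~\ref{lem:LayPartVertDist}, the part $d_\calT(u,v) \le d_G(u,v)$ is what gives the lower bound, but here I actually want to exhibit a path), I would show that any shortest path in $G$ from $w$ to $D$ must pass through $C$: because $C$ separates $C''$ from $T_r \setminus C$ in the layering-partition tree, and because the layering partition has the property that the clusters on the $\calT$-path between two clusters separate the corresponding vertex sets in $G$, any $w$–$D$ path in $G$ enters $C$. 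Hence $d_G(w, D) \ge d_G(w, C)$. The remaining task is to show this forces $D \cap C \ne \emptyset$, i.e. that $D$ cannot afford to "reach past" $C$: if $D$ had no vertex in $C$, its closest vertex to $w$ would lie on the far side of $C$ relative to $T_r \setminus C$, but then that vertex's cluster would be a node of $\calT$ farther from $C''$ than $C$ is, contradicting that $C$ is the unique closest $T_r$-node at distance $\le r(C'')$ — more precisely, I would derive $d_\calT(w, D) \le r(w)$ from $D$ being an $r$-dominating set (via Lemma~\ref{lem:LayPartVertDist}) plus connectivity, compare it against the separation structure, and conclude the cluster of $D$'s closest vertex to $w$ must be $C$ itself.

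I expect the main obstacle to be making the separation argument fully rigorous: specifically, establishing that the clusters lying on the $\calT$-path between $C_u$ and $C_v$ form a separator in $G$ for $u$ and $v$. This is essentially a structural property of layering partitions (each cluster, together with the "upper" layers, behaves like a bag of a tree-decomposition-like object), but it is used implicitly and I would need to prove it as an auxiliary claim, probably by the definition of clusters via paths staying in the same-or-upper layers: a $G$-path from $u \in C_u$ to $v \in C_v$ must at some point drop to the layer of their lowest common ancestor cluster region and cross it. Handling the connectedness of $D$ correctly — ensuring that "reaching past $C$" is genuinely impossible rather than merely costly — is the other delicate point, and is where the hypothesis that $T_r$ has more than one cluster gets used (it rules out the degenerate case where a single vertex dominates everything).
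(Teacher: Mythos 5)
Your overall strategy matches the paper's (argue by contradiction, use minimality of $T_r$ to produce a witness cluster whose only nearby $T_r$-cluster is $C$, use cluster-separation in $G$, finish with Lemma~\ref{lem:LayPartVertDist}), but there is a genuine gap at the very first step: you treat an \emph{arbitrary} cluster $C$ of $T_r$ with $C \cap D = \emptyset$ and claim that ``in either case'' minimality yields a cluster $C''$ with $d_\calT(C'', T_r) \le r(C'')$ but $d_\calT(C'', T_r \setminus C) > r(C'')$. That is false when $C$ is an internal node of $T_r$: such a cluster may be present purely to keep the subtree connected and do no domination work at all (take $T_r$ to be a path $A$--$C$--$B$ where every cluster of $\calT$ is already within its radius of $A$ or of $B$); removing $C$ then disconnects $T_r$ without un-dominating anything, and no witness $C''$ exists. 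The paper closes exactly this hole by using the connectivity of $D$ \emph{before} invoking minimality: the clusters of $\calT$ intersecting $D$ induce a connected subtree, so if $D$ meets every leaf of $T_r$ it meets every cluster of $T_r$; hence one may assume without loss of generality that the missed cluster $C$ is a leaf, and only for leaves does minimality produce the witness $C'$ with $d_\calT(C', C) = d_\calT(C', T_r) = r(C')$.

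A second, smaller issue: for the final contradiction you need the \emph{equality} $d_\calT(C'', C) = r(C'')$, not merely $d_\calT(C'', C) \le r(C'')$. Knowing that every $G$-path from $w \in C''$ to $D$ crosses $C$ gives only $d_\calT(w, D) \ge d_\calT(w, C) + 1$, which exceeds $r(w)$ precisely when $d_\calT(w, C) = r(w)$; otherwise $D$ could legitimately sit one cluster beyond $C$. (Your condition $d_\calT(C'', T_r \setminus C) > r(C'')$ constrains $T_r$, not $D$, so it does not substitute.) For a leaf $C$ the equality does follow from minimality---if every cluster dominated only by $C$ were at distance strictly less than its radius from $C$, the parent of $C$ would dominate it as well and $C$ could be deleted---and this is the form the paper uses. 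You correctly flag the $G$-separation property of clusters along a $\calT$-path as the structural point needing proof; the paper also invokes it without detailed justification, so that part is not a gap relative to the paper.
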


\begin{proof}
Let $D$ be an arbitrary connected $r$-dominating set of~$G$.
Assume that $T_r$ has a cluster~$C$ such that $C \cap D = \emptyset$.
Because $D$ is connected, the subtree of~$\calT$ induced by the clusters intersecting~$D$ is connected, too.
Thus, if $D$ intersects all leafs of~$T_r$, then it intersects all clusters of~$T_r$.
Hence, we can assume, without loss of generality, that $C$ is a leaf of~$T_r$.
Because $T_r$ has at least two clusters and by minimality of~$T_r$, $\calT$~contains a cluster~$C'$ such that $d_\calT(C', C) = d_\calT(C', T_r) = r(C')$.
Note that each path in~$G$ from a vertex in~$C'$ to a vertex in~$D$ intersects~$C$.
Therefore, by Lemma~\ref{lem:LayPartVertDist}, there is a vertex~$u \in C'$ with $r(u) = d_\calT(u, C) < d_\calT(u, D) \leq d_G(u, D)$.
That contradicts with $D$ being an $r$-dominating set.

Because any $r$-dominating set of~$G$ intersects each cluster of~$T_r$ and because these clusters are pairwise disjoint, it follows that $|T_r| \leq |D_r|$.
\qed
\end{proof}

As we show later in Corollary~\ref{cor:TdeltaDomSet}, each connected vertex set~$S \subseteq V$ that intersects each cluster of~$T_r$ gives an $(r + \Delta)$-dominating set for~$G$.
It follows from Lemma~\ref{lem:TrCardinality} that, if such a set~$S$ has minimum cardinality, $|S| \leq |D_r|$.
However, finding a minimum cardinality connected set intersecting each cluster of a layering partition (or of a subtree of it) is as hard as finding a minimum Steiner tree.

The main idea of our approach is to construct a minimum $(r + \delta)$-dominating subtree~$T_\delta$ of~$\calT$ for some integer~$\delta$.
We then compute a small enough connected set~$S_\delta$ that intersects all cluster of~$T_\delta$.
By trying different values of~$\delta$, we eventually construct a connected set~$S_\delta$ such that $|S_\delta| \leq |T_r|$ and, thus, $|S_\delta| \leq |D_r|$.
Additionally, we show that $S_\delta$ is a connected $(r + 2 \Delta)$-dominating set of~$G$.

For some non-negative integer~$\delta$, let $T_\delta$ be a minimum $(r + \delta)$-dominating subtree of~$\calT$.
Clearly, $T_0 = T_r$.
The following two lemmas set an upper bound for the maximum distance of a vertex of~$G$ to a vertex in a cluster of~$T_\delta$ and for the size of~$T_\delta$ compared to the size of~$T_r$.

\begin{lemma}
    \label{lem:TdeltaDist}
For each vertex~\( v \) of~\( G \), \( d_\calT(v, T_\delta) \leq r(v) + \delta \).
\end{lemma}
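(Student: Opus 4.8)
The plan is to recognize that this lemma is purely a bookkeeping step: it translates the defining domination property of~$T_\delta$, which lives at the level of the clusters (nodes) of~$\calT$, into the corresponding statement about the vertices of~$G$, using only the elementary inequality $r(C) \le r(v)$ for $v \in C$ together with the convention $d_\calT(v, \cdot) := d_\calT(C_v, \cdot)$ fixed just before Lemma~\ref{lem:LayPartVertDist}. No metric estimate in~$G$ (and, in particular, no appeal to Lemma~\ref{lem:LayPartVertDist}) is required, and neither is the minimality of~$T_\delta$.

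Concretely, I would fix a vertex~$v$ of~$G$ and let $C_v$ denote the unique cluster of~$\calT$ containing~$v$, so that by definition $d_\calT(v, T_\delta) = d_\calT(C_v, T_\delta)$; it thus suffices to bound the latter. Next I would invoke the definition of~$T_\delta$ as a minimum $(r + \delta)$-dominating subtree of~$\calT$: its clusters form a connected $(r + \delta)$-dominating set for~$\calT$ with respect to the cluster radii $r(C) := \min_{u \in C} r(u)$. In particular $C_v$ is $(r + \delta)$-dominated by~$T_\delta$, i.e. $d_\calT(C_v, T_\delta) \le r(C_v) + \delta$ (this also holds trivially in the degenerate case where $T_\delta$ is a single cluster). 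Finally, since $r(C_v) = \min_{u \in C_v} r(u) \le r(v)$, chaining these three observations yields $d_\calT(v, T_\delta) = d_\calT(C_v, T_\delta) \le r(C_v) + \delta \le r(v) + \delta$, which is the claim.

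As for the main obstacle: honestly there is no substantial difficulty here. The only places that need care are (i) using the \emph{correct} aggregate radius, namely $r(C) = \min_{u \in C} r(u)$ rather than, say, a maximum or an average, when passing between~$G$ and~$\calT$, and (ii) keeping the vertex-to-cluster identification $d_\calT(v, T_\delta) = d_\calT(C_v, T_\delta)$ consistent; a slip in either would introduce a spurious additive term. The real work of this section is deferred to the subsequent lemmas (bounding $|T_\delta|$ against $|T_r|$, and turning $T_\delta$ into a connected vertex set of~$G$), for which this lemma is merely the distance-control prerequisite.
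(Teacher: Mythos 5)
Your proof is correct and follows essentially the same route as the paper: identify $v$ with its cluster~$C_v$, apply the defining $(r+\delta)$-domination property of~$T_\delta$ to get $d_\calT(C_v, T_\delta) \leq r(C_v) + \delta$, and conclude via $r(C_v) = \min_{u \in C_v} r(u) \leq r(v)$. The paper's version is just a terser rendering of the same three steps.
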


\begin{proof}
Let $C_v$ be the cluster of~$\calT$ containing~$v$ and let $C$ be the cluster of~$T_\delta$ closest to~$C_v$ in~$\calT$.
By construction of~$T_\delta$, $d_\calT(v, C) = d_\calT(C_v, C) \leq r(C_v) + \delta \leq r(v) + \delta$.
\qed
\end{proof}

Because the diameter of each cluster is at most~$\Delta$, Lemma~\ref{lem:LayPartVertDist} and Lemma~\ref{lem:TdeltaDist} imply the following.

\begin{corollary}
    \label{cor:TdeltaDomSet}
If a vertex set intersects all clusters of~\( T_\delta \), it is an \( \big( r + (\delta + \Delta) \big) \)-dominating set of~\( G \).
\end{corollary}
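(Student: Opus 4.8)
The plan is to derive this directly from Lemma~\ref{lem:TdeltaDist} and Lemma~\ref{lem:LayPartVertDist}, since the corollary is essentially a repackaging of those two bounds. Let $S \subseteq V$ be a vertex set that intersects every cluster of~$T_\delta$, and fix an arbitrary vertex~$v$ of~$G$. First I would invoke Lemma~\ref{lem:TdeltaDist} to obtain a cluster~$C$ of~$T_\delta$ with $d_\calT(v, C) \leq r(v) + \delta$; concretely, $C$ can be taken to be the cluster of~$T_\delta$ closest to the cluster~$C_v$ containing~$v$.

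Next, using the hypothesis that $S$ meets every cluster of~$T_\delta$, pick a vertex $u \in S \cap C$. Since $u$ lies in~$C$, we have $d_\calT(v, u) = d_\calT(C_v, C) = d_\calT(v, C) \leq r(v) + \delta$ by the definition of $d_\calT$ on vertices. Now applying the upper bound of Lemma~\ref{lem:LayPartVertDist}, namely $d_G(v, u) \leq d_\calT(v, u) + \Delta$, yields $d_G(v, u) \leq r(v) + \delta + \Delta$. Hence $d_G(v, S) \leq d_G(v, u) \leq r(v) + (\delta + \Delta)$, which is exactly the statement that $v$ is $\big(r + (\delta + \Delta)\big)$-dominated by~$S$. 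As $v$ was arbitrary, $S$ is an $\big(r + (\delta + \Delta)\big)$-dominating set of~$G$.

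I do not anticipate a genuine obstacle here: the only point that needs care is keeping the bookkeeping straight between tree-distances of clusters and of vertices, and making sure the ``$+\Delta$'' slack (the intra-cluster diameter) is accounted for exactly once — it enters solely when passing from $d_\calT$ to $d_G$ via Lemma~\ref{lem:LayPartVertDist}, while the ``$+\delta$'' slack comes solely from the $(r+\delta)$-dominating property of~$T_\delta$ via Lemma~\ref{lem:TdeltaDist}. Summing the two additive errors gives the claimed $\delta + \Delta$.
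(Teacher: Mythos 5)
Your proposal is correct and follows exactly the route the paper intends: the corollary is stated as an immediate consequence of Lemma~\ref{lem:TdeltaDist} (supplying the $+\delta$ term) and Lemma~\ref{lem:LayPartVertDist} (supplying the $+\Delta$ term), which is precisely your argument. The bookkeeping step $d_\calT(v,u) = d_\calT(C_v, C)$ for $u \in S \cap C$ is valid by the paper's definition of $d_\calT$ on vertices, so there is no gap.
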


\begin{lemma}
    \label{lem:TdeltaTrCardinality}
\( |T_{\delta}| \leq |T_r| - \delta \cdot \Lambda(T_\delta) \).
\end{lemma}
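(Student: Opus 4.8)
The plan is to compare $T_\delta$ against the larger tree $T_r$ by a surgery/pruning argument on the layering partition~$\calT$, exploiting the fact that relaxing the domination radius from $r$ to $r+\delta$ lets us amputate the lowest $\delta$ edges of every leaf-branch of $T_r$. First I would fix a leaf~$C$ of $T_\delta$ and trace the path in~$T_\delta$ starting at~$C$ and heading toward the rest of~$T_\delta$; because $T_\delta$ is a \emph{minimum} $(r+\delta)$-dominating subtree, each of its leaves is ``needed'', meaning there is a cluster $C'$ of~$\calT$ hanging off~$C$ (in the subtree of $\calT$ cut off by removing $C$ from $T_\delta$) with $d_\calT(C',T_\delta)=d_\calT(C',C)=r(C')+\delta$. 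This is the analogue of the witness-cluster argument already used in the proof of Lemma~\ref{lem:TrCardinality}, now with the extra slack~$\delta$.

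Next I would set up a map from the leaves of $T_\delta$ to disjoint length-$\delta$ paths in $T_r\setminus T_\delta$. The key observation is that $T_r$ (a minimum $r$-dominating subtree) must reach, for each such witness cluster $C'$, down to within distance $r(C')$ of it in~$\calT$ — whereas $T_\delta$ only reaches within $r(C')+\delta$. Since $\calT$ is a tree, the unique path from $C'$ toward $T_\delta$ passes through~$C$, so $T_r$ must contain the $\delta$ clusters on that path strictly beyond~$C$ (on the $C'$-side). These are $\delta$ clusters of $T_r$ that are not in $T_\delta$, and one must check — using that distinct leaves of $T_\delta$ sit in distinct branches of $\calT$ once $C$ is deleted — that the corresponding $\delta$-cluster ``tails'' chosen for different leaves are pairwise disjoint. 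Summing over all $\Lambda(T_\delta)$ leaves gives at least $\delta\cdot\Lambda(T_\delta)$ clusters in $T_r$ and not in $T_\delta$; one still needs $T_\delta\subseteq T_r$ in the relevant sense, or more carefully, that $|T_r|\ge|T_\delta\cap T_r|+\delta\cdot\Lambda(T_\delta)$ together with $|T_\delta\cap T_r|\ge|T_\delta|$, which should follow from minimality of $T_\delta$ (any subtree of $\calT$ that $(r+\delta)$-dominates and is no larger can be taken to lie inside $T_r$, or the intersection already $(r+\delta)$-dominates). Rearranging yields $|T_\delta|\le|T_r|-\delta\cdot\Lambda(T_\delta)$.

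I would handle the degenerate cases separately and up front: if $T_\delta$ has at most one cluster then $\Lambda(T_\delta)=0$ by the paper's convention and the inequality is just $|T_\delta|\le|T_r|$, which holds since $T_r=T_0$ is an $(r+\delta)$-dominating subtree and $T_\delta$ is the minimum one. Also, if some leaf-branch of $T_r$ below a leaf of $T_\delta$ has fewer than $\delta$ clusters, I need to argue that cannot happen precisely because of the witness cluster $C'$ at depth $r(C')+\delta$ from $C$: $T_r$ must descend to depth $r(C')$ from~$C'$, hence at least $\delta$ further clusters past~$C$ exist in~$\calT$ and in~$T_r$.

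The main obstacle I anticipate is the disjointness bookkeeping: ensuring that the length-$\delta$ tails assigned to two different leaves of $T_\delta$ never overlap, and that none of them accidentally re-enters $T_\delta$. Deleting each leaf $C$ of $T_\delta$ from $\calT$ produces components, and I expect each witness path to live in the component on the far side of~$C$; two distinct leaves of a tree lie in ``different directions'', so their far-side components are disjoint — but this needs to be stated carefully, perhaps by rooting $T_\delta$ and orienting every tail away from $T_\delta$, then observing that a cluster of $\calT\setminus T_\delta$ has a unique nearest cluster of $T_\delta$, which pins down which leaf's tail it can belong to. Once that structural claim is nailed down, the counting is immediate.
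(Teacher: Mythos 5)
Your proposal is correct and follows essentially the same route as the paper's proof: for each leaf of $T_\delta$, minimality yields a witness cluster at distance exactly $r(C')+\delta$, which forces $T_r$ to contain a tail of $\delta$ clusters beyond that leaf, and these tails are pairwise disjoint and disjoint from $T_\delta$. You are in fact more explicit than the paper about the points it glosses over (the exact-distance witness, $T_\delta \subseteq T_r$, and the disjointness of the tails), and your resolutions of those points are sound.
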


\begin{proof}
First, consider the case when $T_\delta$ contains only one cluster, \ie, $|T_\delta| = 1$.
Then, $\Lambda(T_\delta) = 1$ and, thus, the statement clearly holds.
Next, let $T_\delta$ contain more than one cluster, let $C_u$ be an arbitrary leaf of~$T_\delta$, and let $C_v$ be a cluster of~$T_r$ with maximum distance to~$C_u$ such that $C_u$ is the only cluster on the shortest path from $C_u$ to~$C_v$ in~$T_r$, \ie, $C_v$ is not in~$T_\delta$.
Due to the minimality of~$T_\delta$, $d_{T_r}(C_u, C_v) = \delta$.
Thus, the shortest path from $C_u$ to~$C_v$ in~$T_r$ contains $\delta$~clusters (including~$C_v$) which are not in~$T_\delta$.
Therefore, $|T_{\delta}| \leq |T_r| - \delta \cdot \Lambda(T_\delta)$.
\qed
\end{proof}

Now that we have constructed and analysed~$T_\delta$, we show how to construct~$S_\delta$.
First, we construct a set of shortest paths such that each cluster of~$T_\delta$ is intersected by exactly one path.
Second, we connect these paths with each other to from a connected set using an approach which is similar to Kruskal's algorithm for minimum spanning trees.

Let $\calL = \big \{ C_1, C_2, \ldots, C_\lambda \big \}$ be the leaf clusters of~$T_\delta$ (excluding the root) with either $\lambda = \Lambda(T_\delta) - 1$ if the root of~$T_\delta$ is a leaf, or with $\lambda = \Lambda(T_\delta)$ otherwise.
We construct a set~$\calP = \big \{ P_1, P_2, \ldots, P_\lambda \big \}$ of paths as follows.
Initially, $\calP$ is empty.
For each cluster~$C_i \in \calL$, in turn, find the ancestor~$C_i'$ of~$C_i$ which is closest to the root of~$T_\delta$ and does not intersect any path in~$\calP$ yet.
If we assume that the indices of the clusters in~$\calL$ represent the order in which they are processed, then $C_1'$ is the root of~$T_\delta$.
Then, select an arbitrary vertex~$v$ in~$\calC_i$ and find a shortest path~$P_i$ in~$G$ form~$v$ to~$C_i'$.
Add $P_i$ to~$\calP$ and continue with the next cluster in~$\calL$.
Figure~\ref{fig:LayPartPaths} gives an example.

\begin{figure}
    [htb]
    \centering
    \includegraphics{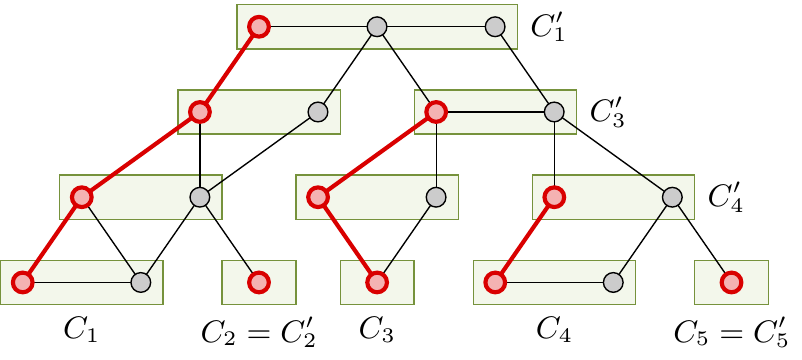}
    \caption
    [%
        Example for the set~$\calP$ for a subtree of a layering partition.
    ]
    {%
        Example for the set~$\calP$ for a subtree of a layering partition.
        Paths are shown in red.
        Each path~$P_i$, with $1 \leq i \leq 5$, starts in the leaf~$C_i$ and ends in the cluster~$C_i'$.
        For $i = 2$ and $i = 5$, $P_i$ contains only one vertex.
    }
    \label{fig:LayPartPaths}
\end{figure}

\begin{lemma}
    \label{lem:clusterPath}
For each cluster~\( C \) of~\( T_\delta \), there is exactly one path~\( P_i \in \calP \) intersecting~\( C \).
Additionally, \( C \) and~\( P_i \) share exactly one vertex, \ie, \( |C \cap P_i| = 1 \).
\end{lemma}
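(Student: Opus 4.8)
The plan is to verify the two assertions by a careful bookkeeping of how the paths in $\calP$ are constructed, exploiting the fact that each path $P_i$ runs from a leaf $C_i$ up to the cluster $C_i'$, and that $C_i'$ is, by construction, the topmost ancestor of $C_i$ not yet touched by any previously-built path. First I would establish the ``at least one'' direction: every cluster $C$ of $T_\delta$ lies on the root-to-leaf path of at least one leaf in $\calL$ (if $C$ itself is a non-root leaf, it equals some $C_i$; otherwise $C$ is an ancestor of some leaf, and one should be slightly careful about the degenerate case where the root of $T_\delta$ is itself a leaf, which is exactly why $\calL$ excludes the root and $\lambda$ is defined with the two cases). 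Processing the leaves in order, when we reach the first leaf $C_i$ whose root-to-$C_i$ path passes through $C$, the segment from $C_i$ up to $C$ is still untouched, so the topmost untouched ancestor $C_i'$ is at least as high as $C$; hence $P_i$, which goes from a vertex of $C_i$ up to $C_i'$, passes through $C$. This shows some path meets $C$.

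Next I would prove the ``at most one'' direction, which is the real content. Suppose $C$ is met by $P_i$ and later also considered when processing $C_j$ with $j > i$. When $C_j$ is processed, we take $C_j'$ to be the topmost ancestor of $C_j$ disjoint from all paths already in $\calP$; since $P_i \in \calP$ already intersects $C$, the cluster $C_j'$ must lie strictly below $C$ on the $C_j$-to-root path, so $P_j$ stops before reaching $C$ and does not meet it. The key structural point to nail down here is that the clusters already hit by the current collection of paths always form a connected ``upward-closed-along-processed-branches'' region, so that ``topmost untouched ancestor'' is well defined and the path $P_j$ from $C_j$ to $C_j'$ never jumps over a touched cluster; this follows because each $P_i$ hits a contiguous chain of ancestors from $C_i$ up to $C_i'$, and $C_1' $ is the root. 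I expect this monotonicity/connectivity invariant to be the main obstacle: it needs a clean inductive formulation over the processing order, and one must be careful that when $P_j$ is routed through $G$ (not through $T_\delta$) it still visits exactly the clusters on the tree-path from $C_j$ to $C_j'$ — this in turn uses Lemma~\ref{lem:LayPartVertDist} and the construction of the layering partition, namely that a shortest path in $G$ between two vertices whose clusters are in ancestor relation passes, cluster-wise, monotonically up the tree.

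Finally I would address the second claim, $|C \cap P_i| = 1$. The path $P_i$ is a shortest path in $G$ from a vertex $v \in C_i$ to the cluster $C_i'$; by the layering-partition property a shortest path realizing $d_\calT$-monotone ascent meets each intermediate cluster in a connected sub-path, and since consecutive vertices of $P_i$ lie in the same or in $\calT$-adjacent clusters while the cluster-indices along $P_i$ are strictly monotone in tree-depth once we move off $C_i'$, the path can enter and leave any fixed cluster $C$ only once; a more careful argument: if $P_i$ re-entered $C$ it would have to leave and come back, contradicting that the projection of $P_i$ onto $\calT$ is a (non-repeating) path from $C_i$ to $C_i'$. Combined with the length accounting that $P_i$ is shortest, no cluster is visited twice, giving $|C \cap P_i| \le 1$, and $\ge 1$ from the first part. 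I would present the cluster-projection monotonicity as a short standalone observation and then both statements drop out.
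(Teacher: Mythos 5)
Your proposal is correct and takes essentially the same route as the paper: the monotone cluster-projection of a shortest path from a cluster to one of its ancestors (giving exactly one vertex per cluster on the tree-path), plus an induction over the processing order showing that the flagged clusters are precisely the ancestor-closures of the already-processed leaves, so each $C_j'$ sits just below the flagged region. The paper merely packages your ``at least one''/``at most one'' split into a single inductive invariant (each processed leaf and all of its ancestors are intersected by exactly one path), but the content is the same.
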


\begin{proof}
Observe that, by construction of a layering partition, each vertex in a cluster~$C$ is adjacent to some vertex in the parent cluster of~$C$.
Therefore, a shortest path~$P$ in~$G$ from~$C$ to any of its ancestors~$C'$ only intersects clusters on the path from $C$ to~$C'$ in~$\calT$ and each cluster shares only one vertex with~$P$.
It remains to show that each cluster intersects exactly one path.

Without loss of generality, assume that the indices of clusters in~$\calL$ and paths in~$\calP$ represent the order in which they are processed and created, \ie, assume that the algorithms first creates $P_1$ which starts in~$C_1$, then $P_2$ which starts in~$C_2$, and so on.
Additionally, let $\calL_i = \{ C_1, C_2, \ldots, C_i \}$ and $\calP_i = \{ P_1, P_2, \ldots, P_i \}$.

To proof that each cluster intersects exactly one path, we show by induction over~$i$ that, if a cluster~$C_i$ of~$T_\delta$ satisfies the statement, then all ancestors of~$C_i$ satisfy it, too.
Thus, if $C_\lambda$ satisfies the statement, each cluster satisfies it.

First, consider $i = 1$.
Clearly, since $P_1$ is the first path, $P_1$ connects the leaf~$C_1$ with the root of~$T_\delta$ and no cluster intersects more than one path at this point.
Therefore, the statement is true for~$C_1$ and each of its ancestors.

Next, assume that $i > 1$ and that the statement is true for each cluster in~$\calL_{i - 1}$ and their respective ancestors.
Then, the algorithm creates $P_i$ which connects the leaf~$C_i$ with the cluster~$C'_i$.
Assume that there is a cluster~$C$ on the path from $C_i$ to~$C_i'$ in~$\calT$ such that $C$ intersects a path~$P_j$ with $j < i$.
Clearly, $C_i'$ is an ancestor of~$C$.
Thus, by induction hypothesis, $C_i'$ is also intersected by some path~$P \neq P_i$.
This contradicts with the way $C_i'$ is selected by the algorithm.
Therefore, each cluster on the path from $C_i$ to~$C_i'$ in~$\calT$ only intersects $P_i$ and $P_i$ does not intersect any other clusters.

Because $i > 1$, $C'_i$ has a parent cluster~$C''$ in~$T_\delta$ that is intersected by a path~$P_j$ with $j < i$.
By induction hypothesis, each ancestor of~$C''$ is intersected by a path in~$\calP_{i - 1}$.
Therefore, each ancestor of~$C_i$ is intersected by exactly one path in~$\calP_i$.
\qed
\end{proof}

Next, we use the paths in~$\calP$ to create the set~$S_\delta$.
As first step, let $S_\delta := \bigcup_{P_i \in \calP} P_i$.
Later, we add more vertices into~$S_\delta$ to ensure it is a connected set.

Now, create a partition~$\calV = \big \{ V_1, V_2, \ldots, V_\lambda \big \}$ of~$V$ such that, for each~$i$, $P_i \subseteq V_i$, $V_i$ is connected, and $d_G(v, P_i) = \min_{P \in \calP} d_G(v, P)$ for each vertex~$v \in V_i$.
That is, $V_i$ contains the vertices of~$G$ which are not more distant to~$P_i$ in~$G$ than to any other path in~$\calP$.
Additionally, for each vertex~$v \in V$, set $P(v) := P_i$ if and only if $v \in V_i$ (\ie, $P(v)$ is the path in~$\calP$ which is closest to~$v$) and set~$d(v) := d_G \big( v, P(v) \big)$.
Such a partition as well as $P(v)$ and~$d(v)$ can be computed by performing a BFS on~$G$ starting at all paths~$P_i \in \calP$ simultaneously.
Later, the BFS also allows us to easily determine the shortest path from $v$ to~$P(v)$ for each vertex~$v$.

To manage the subsets of~$\calV$, we use a Union-Find data structure such that, for two vertices $u$ and~$v$, $\mathrm{Find}(u) = \mathrm{Find}(v)$ if and only if $u$ and~$v$ are in the same set of~$\calV$.
A Union-Find data structure additionally allows us to easily join two set of~$\calV$ into one by performing a single $\mathrm{Union}$ operation.
Note that, whenever we join two sets of~$\calV$ into one, $P(v)$ and~$d(v)$ remain unchanged for each vertex~$v$.

Next, create an edge set~$E' = \{ \, uv \mid \mathrm{Find}(u) \neq \mathrm{Find}(v) \, \}$, \ie, the set of edges~$uv$ such that $u$ and~$v$ are in different sets of~$\calV$.
Sort $E'$ in such a way that an edge~$uv$ precedes an edge~$xy$ only if $d(u) + d(v) \leq d(x) + d(y)$.

The last step to create $S_\delta$ is similar to Kruskal's minimum spanning tree algorithm.
Iterate over the edges in~$E'$ in increasing order.
If, for an edge~$uv$, $\mathrm{Find}(u) \neq \mathrm{Find}(v)$, \ie, if $u$ and~$v$ are in different sets of~$\calV$, then join these sets into one by performing $\mathrm{Union}(u, v)$, add the vertices on the shortest path from $u$ to~$P(u)$ to~$S_\delta$, and add the vertices on the shortest path from $v$ to~$P(v)$ to~$S_\delta$.
Repeat this, until $\calV$ contains only one set, \ie, until $\calV = \{ V \}$.

Algorithm~\ref{algo:Tdelta} below summarises the steps to create a set~$S_\delta$ for a given subtree of a layering partition subtree~$T_\delta$.

\begin{algorithm}
    [!htb]
    \caption
    {%
        Computes a connected vertex set that intersects each cluster of a given layering partition.
    }
    \label{algo:Tdelta}

\KwIn
{%
    A graph~$G = (V, E)$ and a subtree~$T_\delta$ of some layering partition of~$G$.
}

\KwOut
{%
    A connected set~$S_\delta \subseteq V$ that intersects each cluster of~$T_\delta$ and contains at most $|T_\delta| + \big( \Lambda(T_\delta) - 1 \big) \cdot \Delta$ vertices.
}

Let $\calL = \big \{ C_1, C_2, \ldots, C_\lambda \big \}$ be the set of clusters excluding the root that are leaves of~$T_\delta$.

Create an empty set~$\calP$.
\label{line:createEmptyP}

\ForEach
{%
    cluster~\( C_i \in \calL \)
}
{%
    Select an arbitrary vertex~$v \in C_i$.

    Find the highest ancestor~$C_i'$ of~$C_i$ (\ie, the ancestor which is closest to the root of~$T_\delta$) that is not flagged.

    Find a shortest path~$P_i$ from~$v$ to an ancestor of~$v$ in~$C_i'$ (\ie, a shortest path from $C_i$ to~$C_i'$ in~$G$ that contains exactly one vertex of each cluster of the corresponding path in~$T_\delta$).

    Add $P_i$ to~$\calP$.

    Flag each cluster intersected by~$P_i$.
    \label{line:flagPiClusters}
}

Create a set $S_\delta := \bigcup_{P_i \in \calP} P_i$.
\label{line:addPtoS}

Perform a BFS on~$G$ starting at all paths~$P_i \in \calP$ simultaneously.
This results in a partition $\calV = \big \{ V_1, V_2, \ldots, V_\lambda \big \}$ of~$V$ with $P_i \subseteq V_i$ for each $P_i \in \calP$.
For each vertex~$v$, set $P(v) := P_i$ if and only if $v \in V_i$ and let $d(v) := d_G(v, P(v))$.
\label{line:performBFS}

Create a Union-Find data structure and add all vertices of~$G$ such that $\mathrm{Find}(v) = i$ if and only if $v \in V_i$.
\label{line:initUnionFind}

Determine the edge set~$E' = \{ \, uv \mid \mathrm{Find}(u) \neq \mathrm{Find}(v) \, \}$.
\label{line:determineEprime}

Sort $E'$ such that $uv \leq xy$ if and only if $d(u) + d(v) \leq d(x) + d(y)$.
Let $\langle e_1, e_2, \ldots, e_{|E'|} \rangle$ be the resulting sequence.
\label{line:sortEprime}

\For
{%
    \( i := 1 \) \KwTo \( |E'| \)%
    \label{line:EprimeLoop}
}
{%
    Let $uv = e_i$.

    \If
    {%
        \( \mathrm{Find}(u) \neq \mathrm{Find}(v) \)
    }
    {%
        Add the shortest path from $u$ to~$P(u)$ to~$S_\delta$.
        \label{line:addPuToS}

        Add the shortest path from $v$ to~$P(v)$ to~$S_\delta$.
        \label{line:addPvToS}

        $\mathrm{Union}(u, v)$
        \label{line:unionSets}
    }
}

Output~$S_\delta$.
\end{algorithm}

\begin{lemma}
    \label{lem:SdeltaCardinality}
For a given graph~\( G \) and a given subtree~\( T_\delta \) of some layering partition of~\( G \), Algorithm~\ref{algo:Tdelta} constructs, in \( \calO \big( m \, \alpha(n) \big) \)~time, a connected set~\( S_\delta \) with \( |S_\delta| \leq |T_\delta| + \Delta \cdot \Lambda(T_\delta) \) which intersects each cluster of~\( T_\delta \).
\end{lemma}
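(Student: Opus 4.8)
The plan is to establish, in turn, that $S_\delta$ meets every cluster of~$T_\delta$ and is connected, that $|S_\delta| \leq |T_\delta| + \Delta \cdot \Lambda(T_\delta)$, and that the algorithm runs in $\calO\bigl(m\,\alpha(n)\bigr)$ time. If $T_\delta$ consists of a single cluster there is nothing to do beyond picking one of its vertices, so assume $\lambda \geq 1$ (which holds as soon as $|T_\delta| > 1$).

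\emph{Structure.} Since $\bigcup_{P_i \in \calP} P_i \subseteq S_\delta$, Lemma~\ref{lem:clusterPath} shows at once that $S_\delta$ intersects every cluster of~$T_\delta$; the same lemma gives that the paths $P_i$ are pairwise disjoint and that each $P_i$ meets every cluster it passes through in exactly one vertex, while the cluster sets of the $P_i$ partition the clusters of~$T_\delta$, so $\bigl| \bigcup_{P_i \in \calP} P_i \bigr| = \sum_i |P_i| = |T_\delta|$. For connectedness I would carry the invariant that, after each pass of the main loop, for every current Union-Find set~$W$ the set $S_\delta \cap W$ is connected and contains $P_i$ whenever $V_i \subseteq W$: the BFS path from~$u$ to~$P(u)$ stays inside the cell $V_i \ni u$ and ends on $P(u) \subseteq S_\delta$, so appending it preserves connectedness, and a $\mathrm{Union}$ along an edge $uv$ joins the two pieces through~$uv$. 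Because $G$ is connected, contracting the cells $V_1, \dots, V_\lambda$ gives a connected graph, so the loop performs exactly $\lambda - 1$ unions and ends with the single set~$V$; hence $S_\delta$ is connected.

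\emph{Cardinality.} A union triggered by an edge $uv$ appends the BFS paths from~$u$ to~$P(u)$ and from~$v$ to~$P(v)$, whose last vertices already lie in~$S_\delta$, so it adds at most $d(u) + d(v)$ new vertices. As $\bigl| \bigcup_{P_i \in \calP} P_i \bigr| = |T_\delta|$ and the number of unions is $\lambda - 1 \leq \Lambda(T_\delta) - 1$, it suffices to prove the key claim that \emph{every edge along which a union is performed satisfies} $d(u) + d(v) \leq \Delta$. Form the auxiliary multigraph~$H$ on the node set $\{P_1, \dots, P_\lambda\}$ having, for each $uv \in E'$, an edge $P(u)P(v)$ of weight $d(u) + d(v)$; the main loop is exactly Kruskal's algorithm on~$H$, so, provided the subgraph~$H_{\leq\Delta}$ of edges of weight at most~$\Delta$ is connected, no union is ever performed along a heavier edge (all edges of weight $\le \Delta$ are processed first and already leave every $P_i$ in one Union-Find set). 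To see that $H_{\leq\Delta}$ is connected, root the~$P_i$ by letting $P_j$ be the parent of~$P_i$ when the cluster of~$T_\delta$ just above the top cluster~$C_i'$ of~$P_i$ lies on~$P_j$; since $C_1'$ is the root of~$T_\delta$, following parents reaches~$P_1$, so it is enough to connect each $P_i \neq P_1$ to its parent~$P_j$ inside~$H_{\leq\Delta}$. Let $t_i$ be the unique vertex of~$P_i$ in~$C_i'$, let $C''$ be the parent cluster of~$C_i'$, and let~$q$ be the unique vertex of~$P_j$ in~$C''$. Since $t_i$ has a neighbour in~$C''$ and $\diam(C'') \leq \Delta$, a shortest $t_i$--$q$ path~$\pi$ in~$G$ has length $L \leq \Delta + 1$; for a vertex~$z$ of~$\pi$ at distance~$k$ from~$t_i$ we get $d(z) \leq \min\bigl( d_G(z, P_i), d_G(z, P_j) \bigr) \leq \min(k, L - k)$, and a short calculation gives $\min(k, L - k) + \min(k + 1, L - k - 1) \leq L - 1 \leq \Delta$ for all $0 \leq k < L$. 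So every edge of~$\pi$ has weight at most~$\Delta$; in particular those of its edges running between two different cells lie in~$H_{\leq\Delta}$, and reading~$\pi$ from~$t_i \in P_i$ to~$q \in P_j$ exhibits a path in~$H_{\leq\Delta}$ from~$P_i$ to~$P_j$. This proves the claim, and hence $|S_\delta| \leq |T_\delta| + (\Lambda(T_\delta) - 1)\Delta \leq |T_\delta| + \Delta \cdot \Lambda(T_\delta)$. I expect this $d(u) + d(v) \leq \Delta$ estimate to be the crux: bounding through a single edge between~$C_i'$ and~$C''$ only yields $2\Delta$, and the point is that the length budget $L \leq \Delta + 1$ of the $t_i$--$q$ path has to be spent evenly, which is exactly what $\min(k, L - k)$ does.

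\emph{Running time.} The subtree~$T_\delta$ and the set~$\calP$ are produced in $\calO(m)$ time (each cluster is flagged once, and the pairwise disjoint paths have total length $\calO(n)$); the multi-source BFS, the edge set~$E'$, and the Union-Find structure cost $\calO(m)$; $E'$ is sorted by the integer keys $d(u) + d(v) \in \{0, \dots, 2n\}$ in $\calO(m + n)$ time via counting sort; and the main loop performs $\calO(m)$ Find/Union operations at amortised cost $\calO(\alpha(n))$ each, while the appended shortest paths add only $\calO(n)$ further work. Altogether the algorithm runs in $\calO\bigl(m\,\alpha(n)\bigr)$ time.
\qed
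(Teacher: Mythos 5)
Your proof is correct and follows essentially the same route as the paper's: the same connectivity invariant on the Union--Find cells, the same parent relation among the paths of~$\calP$ giving $d_G(P_i, P_j) \leq \Delta + 1$, and the same splitting of $d(u) + d(v)$ along a shortest connecting path to conclude that every union edge has weight at most~$\Delta$. Your explicit auxiliary multigraph~$H$ and the connectivity of~$H_{\leq\Delta}$ is just a cleaner formalization of the paper's dynamic argument that, while $|\calV| > 1$, some admissible edge of weight at most~$\Delta$ remains available to Kruskal's greedy rule.
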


\begin{proof}
    [Correctness]
First, we show that $S_\delta$ is connected at the end of the algorithm.
To do so, we show by induction that, at any time, $S_\delta \cap V'$ is a connected set for each set~$V' \in \calV$.
Clearly, when $\calV$ is created, for each set~$V_i \in \calV$, $S_\delta \cap V_i = P_i$.
Now, assume that the algorithm joins the set $V_u$ and~$V_v$ in~$\calV$ into one set based on the edge~$uv$ with $u \in V_u$ and $v \in V_v$.
Let $S_u = S_\delta \cap V_u$ and $S_v = S_\delta \cap V_v$.
Note that $P(u) \subseteq S_u$ and $P(v) \subseteq S_v$.
The algorithm now adds all vertices to~$S_\delta$ which are on a path from $P(u)$ to~$P(v)$.
Therefore, $S_\delta \cap (V_u \cup V_v)$ is a connected set.
Because $\calV = \{ V \}$ at the end of the algorithm, $S_\delta$ is connected eventually.
Additionally, since $P_i \subseteq S_\delta$ for each $P_i \in \calP$, it follows that $S_\delta$ intersects each cluster of~$T_\delta$.

Next, we show that the cardinality of $S_\delta$ is at most $|T_\delta| + \Delta \cdot \Lambda(T_\delta)$.
When first created, the set~$S_\delta$ contains all vertices of all paths in~$\calP$.
Therefore, by Lemma~\ref{lem:clusterPath}, $|S_\delta| = \sum_{P_i \in \calP} |P_i| = |T_\delta|$.
Then, each time two sets of~$\calV$ are joined into one set based on an edge~$uv$, $S_\delta$ is extended by the vertices on the shortest paths from $u$ to~$P(u)$ and from $v$ to~$P(v)$.
Therefore, the size of~$S_\delta$ increases by $d(u) + d(v)$, \ie, $|S_\delta| := |S_\delta| + d(u) + d(v)$.
Let $X$ denote the set of all edges used to join two sets of~$\calV$ into one at some point during the algorithm.
Note that $|X| = |\calP| - 1 \leq \Lambda(T_\delta)$.
Therefore, at the end of the algorithm,
\[
    |S_\delta|
        = \sum_{\mathclap{P_i \in \calP}} |P_i| + \sum_{\mathclap{uv \in X}} \big( d(u) + d(v) \big)
        \leq |T_\delta| + \Lambda(T_\delta) \cdot \max_{\mathclap{uv \in X}} \big( d(u) + d(v) \big).
\]

\begin{claim}
For each edge~\( uv \in X \), \( d(u) + d(v) \leq \Delta \).
\end{claim}

\begin{proof}[Claim]
To represent the relations between paths in~$\calP$ and vertex sets in~$\calV$, we define a function~$f \colon \calP \rightarrow \calV$ such that $f(P_i) = V_j$ if and only if $P_i \subseteq V_j$.
Directly after constructing~$\calV$, $f$ is a bijection with $f(P_i) = V_i$.
At the end of the algorithm, after all sets of~$\calV$ are joined into one, $f(P_i) = V$ for all $P_i \in \calP$.

Recall the construction of~$\calP$ and assume that the indices of the paths in~$\calP$ represent the order in which they are created.
Assume that $i > 1$.
By construction, the path~$P_i \in \calP$ connects the leaf~$C_i$ with the cluster~$C'_i$ in~$T_\delta$.
Because $i > 1$, $C'_i$ has a parent cluster in~$T_\delta$ that is intersected by a path~$P_j \in \calP$ with $j < i$.
We define~$P_j$ as the \emph{parent} of~$P_i$.
By Lemma~\ref{lem:clusterPath}, this parent~$P_j$ is unique for each $P_i \in \calP$ with~$i > 1$.
Based on this relation between paths in~$\calP$, we can construct a rooted tree~$\bbT$ with the node set~$\{ \, x_i \mid P_i \in \calP \, \}$ such that each node~$x_i$ represents the path~$P_i$ and $x_j$ is the parent of~$x_i$ if and only if $P_j$ is the parent of~$P_i$.

Because each node of~$\bbT$ represents a path in~$\calP$, $f$~defines a colouring for the nodes of~$\bbT$ such that $x_i$ and~$x_j$ have different colours if and only if $f(P_i) \neq f(P_j)$.
As long as $|\calV| > 1$, $\bbT$ contains two adjacent nodes with different colours.
Let $x_i$ and~$x_j$ be these nodes with $j < i$ and let $P_i$ and~$P_j$ be the corresponding paths in~$\calP$.
Note that $x_j$ is the parent of~$x_i$ in~$\bbT$ and, hence, $P_j$ is the parent of~$P_i$.
Therefore, $P_i$ ends in a cluster~$C'_i$ which has a parent cluster~$C$ that intersects~$P_j$.
By properties of layering partitions, it follows that $d_G(P_i, P_j) \leq \Delta + 1$.
Recall that, by construction, $d(v) = \min_{P \in \calP} d_G(v, P)$ for each vertex~$v$.
Thus, for each edge~$uv$ on a shortest path from $P_i$ to~$P_j$ in~$G$ (with $u$ being closer to~$P_i$ than to~$P_j$), $d(u) + d(v) \leq d_G(u, P_i) + d_G(v, P_j) \leq \Delta$.
Therefore, because $f(P_i) \neq f(P_j)$, there is an edge~$uv$ on a shortest path from $P_i$ to~$P_j$ such that $f \big( P(u) \big) \neq f \big( P(v) \big)$ and $d(u) + d(v) \leq \Delta$.
\qedClaim
\end{proof}

From the claim above, it follows that, as long as $\calV$ contains multiple sets, there is an edge~$uv \in E'$ such that $d(u) + d(v) \leq \Delta$ and $\mathrm{Find}(u) \neq \mathrm{Find}(v)$.
Therefore, $\max_{uv \in X} \big( d(u) + d(v) \big) \leq \Delta$ and $|S_\delta| \leq |T_\delta| + \big( \Lambda(T_\delta) - 1 \big) \cdot \Delta$.
\qed
\end{proof}

\begin{proof}
    [Complexity]
First, the algorithm computes~$\calP$ (line~\ref{line:createEmptyP} to line~\ref{line:flagPiClusters}).
If the parent of each vertex from the original BFS that was used to construct~$\calT$ is still known, $\calP$ can be constructed in $\calO(n)$ total time.
After picking a vertex~$v$ in $C_i$, simply follow the parent pointers until a vertex in~$C_i'$ is reached.
Computing $\calV$ as well as $P(v)$ and $d(v)$ for each vertex~$v$ of~$G$ (line~\ref{line:performBFS}) can be done with single BFS and, thus, requires at most $\calO(n + m)$ time.

Recall that, for a Union-Find data structure storing $n$~elements, each operation requires at most $\calO \big( \alpha(n) \big)$ amortised time.
Therefore, initialising such a data structure to store all vertices (line~\ref{line:initUnionFind}) and computing $E'$ (line~\ref{line:determineEprime}) requires at most $\calO \big( m \, \alpha(n) \big)$ time.
Note that, for each vertex~$v$, $d(v) \leq |V|$.
Thus, sorting~$E'$ (line~\ref{line:sortEprime}) can be done in linear time using counting sort.
When iterating over $E'$ (line~\ref{line:EprimeLoop} to line~\ref{line:unionSets}), for each edge~$uv \in E'$, the $\mathrm{Find}$-operation is called twice and the $\mathrm{Union}$-operation is called at most once.
Thus, the total runtime for all these operations is at most $\calO \big( m \, \alpha(n) \big)$.

Let $P_u = \{ u, \ldots, x, y, \ldots, p \}$ be the shortest path in~$G$ from a vertex~$u$ to~$P(u)$.
Assume that $y$ has been added to~$S_\delta$ in a previous iteration.
Thus, $\{ y, \ldots, p \} \subseteq S_\delta$ and, when adding $P_u$ to~$S_\delta$, the algorithm only needs to add $\{ u, \ldots, x \}$.
Therefore, by using a simple binary flag to determine if a vertex is contained in~$S_\delta$, constructing~$S_\delta$ (line~\ref{line:addPtoS}, line~\ref{line:addPuToS}, and line~\ref{line:addPvToS}) requires at most $\calO(n)$ time.

In total, Algorithm~\ref{algo:Tdelta} runs in $\calO \big( m \, \alpha(n) \big)$ time.
\qed
\end{proof}

Because, for each integer~$\delta \geq 0$, $|S_\delta| \leq |T_\delta| + \Delta \cdot \Lambda(T_\delta)$ (Lemma~\ref{lem:SdeltaCardinality}) and $|T_{\delta}| \leq |T_r| - \delta \cdot \Lambda(T_\delta)$ (Lemma~\ref{lem:TdeltaTrCardinality}), we have the following.

\begin{corollary}
    \label{cor:SdeltaCardinality}
For each \( \delta \geq \Delta \), \( |S_\delta| \leq |T_r| \) and, thus, \( |S_\delta| \leq |D_r| \).
\end{corollary}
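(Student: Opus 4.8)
The corollary is an immediate consequence of combining the two preceding lemmas, so the plan is to chain the inequalities and then invoke the earlier cardinality bound on $T_r$.

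\medskip

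\noindent\textbf{Proof proposal.}
First I would fix an arbitrary integer $\delta \geq \Delta$ and recall the two bounds just established: Lemma~\ref{lem:SdeltaCardinality} gives $|S_\delta| \leq |T_\delta| + \Delta \cdot \Lambda(T_\delta)$, and Lemma~\ref{lem:TdeltaTrCardinality} gives $|T_\delta| \leq |T_r| - \delta \cdot \Lambda(T_\delta)$. Substituting the second into the first yields
\[
    |S_\delta| \leq |T_r| - \delta \cdot \Lambda(T_\delta) + \Delta \cdot \Lambda(T_\delta) = |T_r| - (\delta - \Delta) \cdot \Lambda(T_\delta).
\]
Since $\delta \geq \Delta$ we have $\delta - \Delta \geq 0$, and $\Lambda(T_\delta) \geq 0$ by definition (with $\Lambda(T_\delta) = 0$ only in the degenerate single-node case), so the subtracted term is non-negative; hence $|S_\delta| \leq |T_r|$. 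Finally, Lemma~\ref{lem:TrCardinality} states $|T_r| \leq |D_r|$, so $|S_\delta| \leq |D_r|$, which is the claimed bound.

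\medskip

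\noindent
There is essentially no obstacle here: the work was all done in the lemmas, and this corollary is just the arithmetic that combines them. The one point worth a sentence of care is the degenerate case $|T_\delta| = 1$ (so $\Lambda(T_\delta) = 1$ by the convention fixed earlier, or a single-node tree with $\Lambda = 0$), where one should check the chained inequality still reads correctly — but in that case $|S_\delta| \leq |T_\delta| + \Delta \leq |T_r| + \Delta$ is too weak, so instead one notes that when $T_r$ itself is a single cluster the whole construction is trivial, and otherwise the monotonicity $|T_\delta| \leq |T_r|$ together with $\delta \geq \Delta$ still forces $|S_\delta| \leq |T_r|$. I would keep the main argument to the three-line display above and relegate the degenerate case to a parenthetical remark.
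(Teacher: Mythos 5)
Your proof is correct and is exactly the paper's argument: the paper derives this corollary in one sentence by chaining Lemma~\ref{lem:SdeltaCardinality} with Lemma~\ref{lem:TdeltaTrCardinality} and then invoking $|T_r| \leq |D_r|$ from Lemma~\ref{lem:TrCardinality}, precisely as you do. Your extra parenthetical about the single-cluster degenerate case is reasonable caution but not something the paper addresses (and the chained inequality already suffices there since $|S_\delta| = 1$ in that case).
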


To the best of our knowledge, there is no algorithm known that computes $\Delta$ in less than $\calO(nm)$ time.
Additionally, under reasonable assumptions, computing the diameter or radius of a general graph requires~$\Omega \big( n^2 \big)$ time~\cite{AbboWillWang2016}.
We conjecture that the runtime for computing $\Delta$ for a given graph has a similar lower bound.

To avoid the runtime required for computing~$\Delta$, we use the following approach shown in Algorithm~\ref{algo:con2DeltaDom} below.
First, compute a layering partition~$\calT$ and the subtree~$T_r$.
Second, for a certain value of~$\delta$, compute~$T_\delta$ and perform Algorithm~\ref{algo:Tdelta} on it.
If the resulting set~$S_\delta$ is larger than~$T_r$ (\ie, $|S_\delta| > |T_r|$), increase~$\delta$; otherwise, if $|S_\delta| \leq |T_r|$, decrease~$\delta$.
Repeat the second step with the new value of~$\delta$.

One strategy to select values for~$\delta$ is a classical binary search over the number of vertices of~$G$.
In this case, Algorithm~\ref{algo:Tdelta} is called up-to $\calO(\log n)$ times.
Empirical analysis~\cite{AbuAtaDragan2016}, however, have shown that $\Delta$ is usually very small.
Therefore, we use a so-called \emph{one-sided} binary search.

Consider a sorted sequence~$\langle x_1, x_2, \ldots, x_{n} \rangle$ in which we search for a value~$x_p$.
We say the value~$x_i$ is at position~$i$.
For a one-sided binary search, instead of starting in the middle at position~$n/2$, we start at position~$1$.
We then processes position~$2$, then position~$4$, then position~$8$, and so on until we reach position~$j = 2^i$ and, next, position~$k = 2^{i+1}$ with $x_j < x_p \leq x_k$.
Then, we perform a classical binary search on the sequence~$\langle x_{j + 1}, \ldots, x_k \rangle$.
Note that, because $x_j < x_p \leq x_k$, $2^i < p \leq 2^{i+1}$ and, hence, $j < p \leq k < 2p$.
Therefore, a one-sided binary search requires at most $\calO(\log p)$ iterations to find~$x_p$.

Because of Corollary~\ref{cor:SdeltaCardinality}, using a one-sided binary search allows us to find a value~$\delta \leq \Delta$ for which $|S_\delta| \leq |T_r|$ by calling Algorithm~\ref{algo:Tdelta} at most $\calO(\log \Delta)$ times.
Algorithm~\ref{algo:con2DeltaDom} below implements this approach.

\SetKwFor{OSBS}{One-Sided Binary Search}{}{}

\begin{algorithm}
    [htb]
    \caption
    {%
        Computes a connected $(r + 2 \Delta)$-dominating set for a given graph~$G$.
    }
    \label{algo:con2DeltaDom}

\KwIn
{%
    A graph~$G = (V, E)$ and a function~$r \colon V \rightarrow \bbN$.
}

\KwOut
{%
    A connected $(r + 2 \Delta)$-dominating set~$D$ for~$G$ with $|D| \leq |D_r|$.
}

Create a layering partition~$\calT$ of~$G$.

For each cluster~$C$ of~$\calT$, set $r(C) := \min_{v \in C} r(v)$.

Compute a minimum $r$-dominating subtree~$T_r$ for~$\calT$ (see~\cite{Dragan1993}).
\label{line:compTr}

\OSBS
{%
    over~\( \delta \), starting with \( \delta = 0 \)
}
{%
    Create a minimum $\delta$-dominating subtree~$T_\delta$ of~$T_r$ (\ie, $T_\delta$ is a minimum $(r + \delta)$-dominating subtree for~$\calT$).
    \label{line:compDeltaDom}

    Run Algorithm~\ref{algo:Tdelta} on~$T_\delta$ and let the set~$S_\delta$ be the corresponding output.

    \If
    {%
        \( |S_\delta| \leq |T_r| \)%
        \label{line:checkSdelta}%
    }
    {%
        Decrease~$\delta$.
    }
    \Else
    {%
        Increase~$\delta$.
    }
}

Output $S_\delta$ with the smallest~$\delta$ for which $|S_\delta| \leq |T_r|$.
\label{line:outSdelta}
\end{algorithm}

\begin{theorem}
    \label{theo:rDeltaConDom}
For a given graph~\( G \), Algorithm~\ref{algo:con2DeltaDom} computes a connected \( (r + 2 \Delta) \)-dominating set~\( D \) with \( |D| \leq |D_r| \) in \( \calO \big( m \, \alpha(n) \log \Delta \big) \) time.
\end{theorem}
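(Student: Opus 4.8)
The plan is to verify correctness and the size bound separately from the runtime analysis, leaning heavily on the lemmas already established. For correctness, I would argue as follows. The algorithm outputs some $S_\delta$ produced by Algorithm~\ref{algo:Tdelta} on a subtree $T_\delta$ of $\calT$. By Lemma~\ref{lem:SdeltaCardinality}, $S_\delta$ is connected and intersects every cluster of $T_\delta$. By Corollary~\ref{cor:TdeltaDomSet}, any vertex set meeting all clusters of $T_\delta$ is an $\big(r + (\delta + \Delta)\big)$-dominating set of $G$; so I must show that the $\delta$ ultimately selected satisfies $\delta \le \Delta$. This is exactly where Corollary~\ref{cor:SdeltaCardinality} enters: for every $\delta \ge \Delta$ we have $|S_\delta| \le |T_r|$, so the predicate ``$|S_\delta| \le |T_r|$'' tested in line~\ref{line:checkSdelta} holds for all $\delta \ge \Delta$ and (by minimality of $T_\delta$ as the search decreases $\delta$) the smallest $\delta$ passing the test is at most $\Delta$. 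Hence the output is an $(r + 2\Delta)$-dominating set. For the size bound, the test in line~\ref{line:checkSdelta} guarantees $|D| = |S_\delta| \le |T_r|$, and $|T_r| \le |D_r|$ by Lemma~\ref{lem:TrCardinality} (the case $|T_r| = 1$ being trivial since then a single vertex dominates within $r + \Delta$), so $|D| \le |D_r|$.

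The one subtlety in the correctness argument is making precise what ``the smallest $\delta$ for which $|S_\delta| \le |T_r|$'' means as an object the one-sided binary search actually converges to, and confirming the predicate is monotone in $\delta$ in the way the search assumes. I would note that $\delta \mapsto |T_\delta|$ is non-increasing (a larger slack can only shrink the minimum dominating subtree), and although $|S_\delta|$ is not literally monotone, Corollary~\ref{cor:SdeltaCardinality} gives a threshold $\Delta$ past which the predicate is permanently true; that is all the one-sided binary search needs to locate a crossing point $\delta^\star \le \Delta$. I would state this as: the search finds the least $\delta$ in its probed sequence with $|S_\delta| \le |T_r|$, and since every $\delta \ge \Delta$ qualifies, that least value is $\le \Delta$.

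For the runtime, I would account for the pieces in order. Building the layering partition $\calT$ takes linear time~\cite{ChepoiDragan2000}; computing $T_r$ takes linear time~\cite{Dragan1993}; for each probed value of $\delta$, computing a minimum $(r+\delta)$-dominating subtree $T_\delta$ of $\calT$ again takes linear time~\cite{Dragan1993}, and running Algorithm~\ref{algo:Tdelta} on it costs $\calO\big(m\,\alpha(n)\big)$ by Lemma~\ref{lem:SdeltaCardinality}. By the analysis of the one-sided binary search, since the target $\delta^\star$ satisfies $\delta^\star \le \Delta$, the number of probes is $\calO(\log \Delta)$. Multiplying, the total is $\calO\big(m\,\alpha(n)\log\Delta\big)$, which dominates the linear-time preprocessing.

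I expect the main obstacle to be purely expository rather than mathematical: cleanly justifying why the one-sided binary search terminates at a value $\delta^\star \le \Delta$ without ever computing $\Delta$ itself, i.e., that Corollary~\ref{cor:SdeltaCardinality} alone certifies both the existence of a crossing point and its location within the range that keeps the probe count at $\calO(\log\Delta)$. Everything else is assembling Lemma~\ref{lem:SdeltaCardinality}, Lemma~\ref{lem:TdeltaTrCardinality}, Lemma~\ref{lem:TrCardinality}, Corollary~\ref{cor:TdeltaDomSet}, and Corollary~\ref{cor:SdeltaCardinality} in sequence.
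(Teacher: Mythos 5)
Your proposal is correct and follows essentially the same route as the paper's proof: connectivity and cluster-intersection from Lemma~\ref{lem:SdeltaCardinality}, the domination radius from the fact that the selected $\delta$ is at most $\Delta$ (the paper invokes Lemma~\ref{lem:TdeltaDist} and Lemma~\ref{lem:LayPartVertDist} directly where you cite Corollary~\ref{cor:TdeltaDomSet}, which is the same argument packaged differently), the size bound via $|S_\delta| \leq |T_r| \leq |D_r|$, and the $\calO(\log \Delta)$ probe count of the one-sided binary search from Corollary~\ref{cor:SdeltaCardinality}. Your explicit remark that the predicate need not be monotone but that a threshold at $\Delta$ suffices is a point the paper leaves implicit, and is a welcome clarification rather than a deviation.
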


\begin{proof}
Clearly, the set~$D$ is connected because $D = S_\delta$ for some~$\delta$ and, by Lemma~\ref{lem:SdeltaCardinality}, the set~$S_\delta$ is connected.
By Corollary~\ref{cor:SdeltaCardinality}, for each $\delta \geq \Delta$, $|S_\delta| \leq |T_r|$.
Thus, for each $\delta \geq \Delta$, the binary search decreases~$\delta$ and, eventually, finds some~$\delta$ such that $\delta \leq \Delta$ and $|S_\delta| \leq |T_r|$.
Therefore, the algorithm finds a set~$D$ with $|D| \leq |D_r|$.
Note that, because $D = S_\delta$ for some $\delta \leq \Delta$ and because $S_\delta$ intersects each cluster of~$T_\delta$ (Lemma~\ref{lem:SdeltaCardinality}), it follows from Lemma~\ref{lem:TdeltaDist} that, for each vertex~$v$ of~$G$, $d_\calT(v, D) \leq r(v) + \Delta$ and, by Lemma~\ref{lem:LayPartVertDist}, $d_G(v, D) \leq r(v) + 2 \Delta$.
Thus, $D$ is an $(r + 2 \Delta)$-dominating set for~$G$.

Creating a layering partition for a given graph and computing a minimum connected $r$-dominating set of a tree can be done in linear time~\cite{Dragan1993}.
The one-sided binary search over~$\delta$ has at most $\calO(\log \Delta)$ iterations.
Each iteration of the binary search requires at most linear time to compute~$T_\delta$, $\calO \big( m \, \alpha(n) \big)$ time to compute~$S_\delta$ (Lemma~\ref{lem:SdeltaCardinality}), and constant time to decide whether to increase or decrease~$\delta$.
Therefore, Algorithm~\ref{algo:con2DeltaDom} runs in $\calO \big( m \, \alpha(n) \log \Delta \big)$ total time.
\qed
\end{proof}

\section{Using a Tree-Decomposition}

Theorem~\ref{theo:rDeltaDom} and Theorem~\ref{theo:rDeltaConDom} respectively show how to compute an $(r + \Delta)$-dominating set in linear time and a connected $(r + 2 \Delta)$-dominating set in $\calO \big( m \, \alpha(n) \log \Delta \big)$ time.
It is known that the maximum diameter~$\Delta$ of clusters of any layering partition of a graph approximates the tree-breadth and tree-length of this graph.
Indeed, for a graph~$G$ with $\tl(G) = \lambda$, $\Delta \leq 3 \lambda$~\cite{DouDraGavYan2007}.

\begin{corollary}
Let \( D \) be a minimum \( r \)-dominating set for a given graph~\( G \) with \( \tl(G) = \lambda \).
An \( (r + 3 \lambda) \)-dominating set~\( D' \) for~\( G \) with \( |D'| \leq |D| \) can be computed in linear time.
\end{corollary}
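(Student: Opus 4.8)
The plan is to obtain this statement as an immediate consequence of Theorem~\ref{theo:rDeltaDom} together with the known relationship between the cluster diameter of a layering partition and the tree-length. First I would recall that Theorem~\ref{theo:rDeltaDom} produces, in linear time, an $(r + \Delta)$-dominating set~$D'$ of~$G$ with $|D'| \leq |D|$, where $\Delta$ is the largest diameter of a cluster of the layering partition~$\calT$ that its algorithm constructs. The key observation is that this algorithm needs neither $\Delta$ nor $\lambda$ as input: it merely builds a layering partition for an arbitrary start vertex and solves $r$-domination on the resulting tree. Hence the very same linear-time computation applies here without modification, and in particular we never need to know or approximate~$\lambda$.

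Next I would invoke the inequality $\Delta \leq 3\lambda$, which holds for the layering partition from \emph{any} start vertex of a graph~$G$ with $\tl(G) = \lambda$~\cite{DouDraGavYan2007}; in particular it holds for whatever start vertex the algorithm of Theorem~\ref{theo:rDeltaDom} happens to choose. Since that theorem guarantees $d_G(v, D') \leq r(v) + \Delta$ for every vertex~$v$, and $\Delta \leq 3\lambda$, we conclude $d_G(v, D') \leq r(v) + 3\lambda$ for every~$v$, i.e.\ $D'$ is an $(r + 3\lambda)$-dominating set of~$G$. The cardinality bound $|D'| \leq |D|$ is inherited verbatim from Theorem~\ref{theo:rDeltaDom}, and the running time is linear because it is precisely the running time of that theorem's algorithm, with the bound $3\lambda$ entering only the analysis.

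There is essentially no obstacle here: the corollary is a one-line deduction from an already-proved theorem and an external structural inequality. The only point worth making explicit in the write-up is that the additive error the algorithm actually achieves is $\Delta$, and the relation $\Delta \leq 3\lambda$ simply lets us restate the guarantee in terms of the (often more familiar) tree-length parameter; no new algorithm, and no computation of $\lambda$, is required.
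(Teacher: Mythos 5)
Your proposal is correct and matches the paper's own (implicit) derivation exactly: the paper states the inequality $\Delta \leq 3\lambda$ from~\cite{DouDraGavYan2007} immediately before the corollary and obtains it as a direct consequence of Theorem~\ref{theo:rDeltaDom}. Your additional remark that the algorithm needs neither $\Delta$ nor $\lambda$ as input is a worthwhile clarification but does not change the argument.
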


\begin{corollary}
Let \( D \) be a minimum connected \( r \)-dominating set for a given graph~\( G \) with \( \tl(G) = \lambda \).
A connected \( (r + 6 \lambda) \)-dominating set~\( D' \) for~\( G \) with \( |D'| \leq |D| \) can be computed in \( \calO \big( m \, \alpha(n) \log \lambda \big) \) time.
\end{corollary}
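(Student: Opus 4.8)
The plan is to derive the statement directly from Theorem~\ref{theo:rDeltaConDom} by exploiting the known relation between tree-length and the maximum cluster diameter of a layering partition. Recall from~\cite{DouDraGavYan2007} that every layering partition~$\calT$ of a graph~$G$ with $\tl(G) = \lambda$ has maximum cluster diameter $\Delta \leq 3\lambda$. Thus any guarantee phrased in terms of~$\Delta$ can be relaxed to one in terms of~$\lambda$ at the cost of a factor~$3$, and no new algorithmic idea is required.

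First I would simply run Algorithm~\ref{algo:con2DeltaDom} on the input $G$ and~$r$. By Theorem~\ref{theo:rDeltaConDom}, this produces a connected set~$D'$ with $|D'| \leq |D_r| = |D|$ which is an $(r + 2\Delta)$-dominating set of~$G$, in $\calO\big(m\,\alpha(n)\log\Delta\big)$ time. The key point to record is that the algorithm never needs to know~$\Delta$ explicitly: it uses a one-sided binary search to locate a value $\delta \leq \Delta$ with $|S_\delta| \leq |T_r|$, and $\Delta$ enters only through the analysis (bounding the number of search iterations and the additive error).

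Second I would substitute $\Delta \leq 3\lambda$. For the approximation guarantee, $d_G(v, D') \leq r(v) + 2\Delta \leq r(v) + 6\lambda$ for every vertex~$v$, so~$D'$ is a connected $(r + 6\lambda)$-dominating set of~$G$, while the cardinality bound $|D'| \leq |D|$ is untouched. For the running time, $\log\Delta \leq \log(3\lambda) = \calO(\log\lambda)$ — using, as in Theorem~\ref{theo:rDeltaConDom}, the convention that the one-sided binary search performs at least one iteration so that the degenerate small-$\lambda$ cases are handled — hence the bound $\calO\big(m\,\alpha(n)\log\Delta\big)$ becomes $\calO\big(m\,\alpha(n)\log\lambda\big)$, as claimed.

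The only substantive ingredient beyond this bookkeeping is the inequality $\Delta \leq 3\lambda$, which is already proven in~\cite{DouDraGavYan2007}; there is no real obstacle. If anything requires a moment's care, it is merely checking that the correctness and complexity statements of Theorem~\ref{theo:rDeltaConDom} depend on~$\Delta$ only through the two quantities ($2\Delta$ in the additive error and $\log\Delta$ in the runtime) that the substitution $\Delta \leq 3\lambda$ directly controls, so the corollary follows without revisiting the internals of Algorithm~\ref{algo:Tdelta} or Algorithm~\ref{algo:con2DeltaDom}.
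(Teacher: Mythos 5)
Your proposal is correct and is exactly the argument the paper intends: the corollary is stated immediately after recalling $\Delta \leq 3\lambda$ from~\cite{DouDraGavYan2007}, so the intended proof is precisely to run Algorithm~\ref{algo:con2DeltaDom}, invoke Theorem~\ref{theo:rDeltaConDom}, and substitute $\Delta \leq 3\lambda$ into both the additive error and the $\log\Delta$ factor. Your additional observation that the algorithm never needs $\Delta$ explicitly is a worthwhile sanity check but does not change the argument.
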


In this section, we consider the case when we are given a tree-decomposition with breadth~$\rho$ and length~$\lambda$.
We present algorithms to compute an $(r + \rho)$-dominating set as well as a connected $\big(r + \min (3 \lambda, 5 \rho) \big)$-dominating set in $\calO(nm)$~time.

For the remainder of this section, assume that we are given a graph~$G = (V, E)$ and a tree-decomposition~$\calT$ of~$G$ with breadth~$\rho$ and length~$\lambda$.
We assume that $\rho$ and~$\lambda$ are known and that, for each bag~$B$ of~$\calT$, we know a vertex~$c(B)$ with $B \subseteq N_G^\rho[c(B)]$.
Let $\calT$ be minimal, \ie, $B \nsubseteq B'$ for any two bags $B$ and~$B'$.
Thus, the number of bags is not exceeding the number vertices of~$G$.
Additionally, let each vertex of~$G$ store a list of bags containing it and let each bag of~$T$ store a list of vertices it contains.
One can see this as a bipartite graph where one subset of vertices are the vertices of~$G$ and the other subset are the bags of~$\calT$.
Therefore, the total input size is in $\calO(n + m + M)$ where $M \leq n^2$ is the sum of the cardinality of all bags of~$\calT$.

\subsection{Preprocessing}

Before approaching the (Connected) $r$-Domination problem, we compute a subtree~$\calT'$ of~$\calT$ such that, for each vertex~$v$ of~$G$, $\calT'$ contains a bag~$B$ with $d_G(v, B) \leq r(v)$.
We call such a (not necessarily minimal) subtree an \emph{\( r \)-covering subtree of~\( \calT \)}.

Let $T_r$ be a minimum $r$-covering subtree of~$\calT$.
We do not know how to compute $T_r$ directly.
However, if we are given a bag~$B$ of~$\calT$, we can compute the smallest $r$-covering subtree~$T_B$ which contains~$B$.
Then, we can identify a bag~$B'$ in~$T_B$ for which we know it is a bag of~$T_r$.
Thus, we can compute $T_r$ by computing the smallest $r$-covering subtree which contains~$B'$.

The idea for computing~$T_B$ is to determine, for each vertex~$v$ of~$G$, the bag~$B_v$ of~$\calT$ for which $d_G(v, B_v) \leq r(v)$ and which is closet to~$B$.
Then, let $T_B$ be the smallest tree that contains all these bags~$B_v$.
Algorithm~\ref{algo:T_rWithBag} below implements this approach.

Additionally to computing the tree~$T_B$, we make it a rooted tree with $B$ as the root, give each vertex~$v$ a pointer~$\beta(v)$ to a bag of~$T_B$, and give each bag~$B'$ a counter~$\sigma(B')$.
The pointer~$\beta(v)$ identifies the bag~$B_v$ which is closest to~$B$ in~$T_B$ and intersects the $r$-neighbourhood of~$v$.
The counter~$\sigma(B')$ states the number of vertices~$v$ with $\beta(v) = B'$.
Even though setting $\beta$ and~$\sigma$ as well as rooting the tree are not necessary for computing~$T_B$, we use it when computing an $(r + \rho)$-dominating set later.

\begin{algorithm}
    [htb]
    \caption
    {%
        Computes the smallest $r$-covering subtree~$T_B$ of a given tree-decomposition~$\calT$ that contains a given bag~$B$ of~$\calT$.
    }
    \label{algo:T_rWithBag}

Make $\calT$ a rooted tree with the bag~$B$ as the root.

Create a set $\calB$ of bags and initialise it with $\calB := \{ B \}$.

For each bag~$B'$ of~$\calT$, set $\sigma(B') := 0$ and determine $d_\calT(B', B)$.
\label{line:bagDist}

For each vertex~$u$, determine the bag~$B(u)$ which contains~$u$ and has minimal distance to~$B$.
\label{line:determineBWithU}

\ForEach
{%
    \( u \in V \)%
    \label{line:loopBv}
}
{%
    Determine a vertex~$v$ such that $d_G(u, v) \leq r(u)$ and $d_\calT \big( B(v), B \big)$ is minimal and let $B_u := B(v)$.
    \label{line:determineBu}

    Add $B_u$ to $\calB$, set $\beta(u) := B_u$, and increase $\sigma(B_u)$ by~$1$.
    \label{line:addBvToB}
}

Output the smallest subtree~$T_B$ of~$\calT$ that contains all bags in~$\calB$.
\label{line:ConstructTB}
\end{algorithm}

\begin{lemma}
    \label{lem:TBAlgo}
For a given tree-decomposition~\( \calT \) and a given bag~\( B \) of~\( \calT \), Algorithm~\ref{algo:T_rWithBag} computes an \( r \)-covering subtree~\( T_B \) in \( \calO(nm) \) time such that \( T_B \) contains~\( B \) and has a minimal number of bags.
\end{lemma}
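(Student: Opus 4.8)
The plan is to verify three things in turn: (i) that the output $T_B$ is an $r$-covering subtree of $\calT$ containing $B$; (ii) that $T_B$ has the minimal number of bags among all such subtrees; and (iii) that the whole computation runs in $\calO(nm)$ time. Throughout, I will use the standard fact about tree-decompositions that for any vertex $v$ the bags containing $v$ form a subtree of $\calT$, so once $\calT$ is rooted at $B$ there is a unique bag $B(v)$ containing $v$ that is closest to the root; moreover, for any bag $B'$ and any vertex $v$, the distance $d_G(v,B')$ can only stay the same or increase as we move $B'$ away from $B(v)$ along $\calT$, because the bags containing $v$ separate $v$ from everything beyond them. This monotonicity is the workhorse of the argument.

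For (i): by construction $\calB$ always contains $B$, and $T_B$ is defined as the smallest subtree spanning $\calB$, so $B \in T_B$ and $T_B$ is connected. For covering, fix any vertex $u$. Line~\ref{line:determineBu} picks a witness vertex $v$ with $d_G(u,v)\le r(u)$ (note $v=u$ is always a candidate, so such a $v$ exists) and sets $B_u := B(v)$; since $u$ is within $r(u)$ of $v$ and $v\in B(v)=B_u$, we get $d_G(u,B_u)\le r(u)$, and $B_u\in\calB\subseteq T_B$. Hence $T_B$ is $r$-covering.

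For (ii), the minimality claim: I would show that \emph{every} $r$-covering subtree $T'$ of $\calT$ that contains $B$ must contain each bag $B_u$ chosen by the algorithm; then $\calB \subseteq T'$, and since $T_B$ is the smallest subtree spanning $\calB$ we get $|T_B|\le|T'|$. So fix $u$ and let $B_u=B(v)$ be the bag chosen for it, where $v$ is the witness with $d_\calT(B(v),B)$ minimal over all $v$ with $d_G(u,v)\le r(u)$. Suppose $B_u \notin T'$. Since $T'$ is $r$-covering there is a bag $B^*\in T'$ with $d_G(u,B^*)\le r(u)$; pick $w\in B^*$ realizing this, so $d_G(u,w)\le r(u)$ and $w\in B^*$. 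Then $B^*$ is one of the bags containing $w$, hence $d_\calT(B(w),B)\le d_\calT(B^*,B)$... but that is the wrong direction. The right way: because $B^* \in T'$, $B\in T'$, and $T'$ is connected, the whole $B$--$B^*$ path in $\calT$ lies in $T'$. Now $B(w)$ lies on (or is separated by) that path in the following sense: the bags containing $w$ form a subtree, and since $B(w)$ is the one closest to the root $B$, the path from $B$ to $B^*$ passes through $B(w)$ (because $B^*$ contains $w$, so $B^*$ is in the $w$-subtree, and to reach it from $B$ one must enter that subtree at $B(w)$). Therefore $B(w)\in T'$. Finally, $w$ is a legal witness for $u$ (since $d_G(u,w)\le r(u)$), and by the algorithm's choice $d_\calT(B(v),B)\le d_\calT(B(w),B)$, with $B(v)=B_u$; combined with $B(w)\in T'$, $B\in T'$, and connectivity, the bag $B_u$ lies on the $B$--$B(w)$ path, hence $B_u\in T'$, contradiction. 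This is the step I expect to be the main obstacle: pinning down exactly why the algorithm's "closest legal witness bag" is forced into every competing covering subtree, using only the subtree-of-bags property and the rooting at $B$. Everything hinges on the observation that, rooted at $B$, the bag $B(v)$ for the optimal witness is a cut bag that any connected covering subtree reaching "past" it must already contain.

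For (iii), the complexity: rooting $\calT$ and computing $d_\calT(B',B)$ for all bags (line~\ref{line:bagDist}) is one BFS/DFS on the decomposition tree, $\calO(n)$ since there are $\le n$ bags. Computing $B(u)$ for every vertex $u$ (line~\ref{line:determineBWithU}) amounts to, for each $u$, scanning its list of bags and taking the one of minimum $d_\calT(\cdot,B)$; summed over all $u$ this is $\calO(M)=\calO(n^2)$. The dominant loop is line~\ref{line:loopBv}: for each $u$ we must find, among all $v$ with $d_G(u,v)\le r(u)$, the one minimizing $d_\calT(B(v),B)$. This is done by running one BFS in $G$ from $u$ up to depth $r(u)$ and taking the minimum of $d_\calT(B(v),B)$ over the visited vertices $v$; a single BFS costs $\calO(n+m)$, and we do it once per vertex, giving $\calO(nm)$ total. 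Adding $B_u$ to $\calB$, setting $\beta(u)$, and bumping $\sigma(B_u)$ are $\calO(1)$ each. Finally, constructing the smallest subtree spanning $\calB$ (line~\ref{line:ConstructTB}): mark every bag in $\calB$, then contract/prune — concretely, repeatedly delete leaves of $\calT$ not in $\calB$, or equivalently mark all bags on $\calT$-paths between members of $\calB$; since $\calT$ has $\le n$ bags this is $\calO(n)$. Summing, the runtime is $\calO(nm)$, as claimed. $\qed$
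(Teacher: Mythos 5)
Your parts (i) and (iii) are fine and match the paper's proof. The problem is in part (ii), at exactly the step you flagged as the main obstacle: you conclude that ``the bag $B_u$ lies on the $B$--$B(w)$ path'' from the depth comparison $d_\calT(B_u, B) \leq d_\calT(B(w), B)$ together with connectivity of $T'$. That inference is invalid: two bags at comparable depths in a rooted tree need not lie on a common root-to-leaf path, so $B_u$ could a priori sit in a different branch than $B(w)$ and be missed by $T'$ entirely. Nothing in your written justification rules this out.

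The fact that closes the gap --- and the one the paper's proof rests on --- is that the set $T_u$ of \emph{all} bags $B'$ with $d_G(u, B') \leq r(u)$ induces a subtree of $\calT$ (the $r$-ball around $u$ is a connected subgraph, and the bags meeting a connected subgraph form a subtree). Rooting $\calT$ at $B$, this subtree has a unique bag closest to $B$, and one checks that the algorithm's $B_u = B(v)$ is exactly that bag: $B(v) \in T_u$ for every witness $v$, and the minimizer over witnesses attains the root of $T_u$. Hence \emph{every} bag covering $u$ --- in particular your $B^*$ directly, with no need for the detour through $w$ and $B(w)$ --- is a descendant of $B_u$, so the $B$--$B^*$ path in $\calT$ passes through $B_u$, and $T' \supseteq \{B, B^*\}$ being connected forces $B_u \in T'$. (Equivalently, in the paper's contrapositive phrasing: a connected subtree containing $B$ but not $B_u$ can contain no descendant of $B_u$ and hence no bag covering $u$.) You state a per-vertex monotonicity fact in your preamble that could be massaged into this, but the argument you actually give substitutes a depth inequality for an ancestry relation, which is where it breaks.
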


\begin{proof}
    [Correctness]
Note that, by construction of the set~$\calB$ (line~\ref{line:loopBv} to line~\ref{line:addBvToB}), $\calB$ contains a bag~$B_u$ for each vertex~$u$ of~$G$ such that $d_G(u, B_u) \leq r(u)$.
Thus, each subtree of~$\calT$ which contains all bags of~$\calB$ is an $r$-covering subtree.
To show the correctness of the algorithm, it remains to show that the smallest $r$-covering subtree of~$\calT$ which contains~$B$ has to contain each bag from the set~$\calB$.
Then, the subtree~$T_B$ constructed in line~\ref{line:ConstructTB} is the desired subtree.

By properties of tree-decompositions, the set of bags which intersect the $r$-neighbourhood of some vertex~$u$ induces a subtree~$T_u$ of~$\calT$.
That is, $T_u$ contains exactly the bags~$B'$ with $d_G(u, B') \leq r(u)$.
Note that $\calT$ is a rooted tree with $B$ as the root.
Clearly, the bag~$B_u \in \calB$ (determined in line~\ref{line:determineBu}) is the root of~$T_u$ since it is the bag closest to~$B$.
Hence, each bag~$B'$ with $d_G(u, B') \leq r(u)$ is a descendant of~$B_u$.
Therefore, if a subtree of~$\calT$ contains~$B$ and does not contain~$B_u$, then it also cannot contain any descendant of~$B_u$ and, thus, contains no bag intersecting the $r$-neighbourhood of~$u$.
\qed
\end{proof}

\begin{proof}
    [Complexity]
Recall that $\calT$ has at most $n$~bags and that the sum of the cardinality of all bags of~$\calT$ is~$M \leq n^2$.
Thus, line~\ref{line:bagDist} and line~\ref{line:determineBWithU} require at most $\calO(M)$~time.
Using a BFS, it takes at most $\calO(m)$~time, for a given vertex~$u$, to determine a vertex~$v$ such that $d_G(u, v) \leq r(u)$ and $d_\calT \big( B(v), B \big)$ is minimal (line~\ref{line:determineBu}).
Therefore, the loop starting in line~\ref{line:loopBv} and, thus, Algorithm~\ref{algo:T_rWithBag} run in at most $\calO(nm)$ total time.
\qed
\end{proof}

Lemma~\ref{lem:rDiskLeaf} and Lemma~\ref{lem:TBleavesInTr} below show that each leaf~$B' \neq B$ of~$T_B$ is a bag of a minimum $r$-covering subtree~$T_r$ of~$\calT$.
Note that both lemmas only apply if $T_B$ has at least two bags.
If $T_B$ contains only one bag, it is clearly a minimum $r$-covering subtree.

\begin{lemma}
    \label{lem:rDiskLeaf}
For each leaf~\( B' \neq B \) of~\( T_B \), there is a vertex~\( v \) in~\( G \) such that \( B' \) is the only bag of~\( T_B \) with \( d_G(v, B') \leq r(v) \).
\end{lemma}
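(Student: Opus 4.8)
The plan is to argue by contradiction, exploiting the minimality of $T_B$ together with the rooted structure established in the proof of Lemma~\ref{lem:TBAlgo}. Recall that $T_B$ is rooted at $B$, and for each vertex $u$ of $G$ the bags $B'$ with $d_G(u,B')\le r(u)$ form a subtree $T_u$ of $\calT$ whose root (the bag closest to $B$) is exactly the bag $B_u\in\calB$ selected in line~\ref{line:determineBu}; moreover $T_B$ is the smallest subtree of $\calT$ containing all the bags in $\calB$.

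First I would observe that, since $T_B$ is the smallest subtree containing $\calB$, every leaf of $T_B$ other than the root $B$ must itself belong to $\calB$: if a leaf $L\neq B$ were not in $\calB$, then $T_B\setminus\{L\}$ would still be a subtree of $\calT$ containing every bag of $\calB$, contradicting minimality. So fix a leaf $B'\neq B$ of $T_B$ and pick a vertex $v$ with $B_v = B'$, i.e. $B'$ is the root of $T_v$ and $d_G(v,B')\le r(v)$. It remains to show $B'$ is the \emph{only} bag of $T_B$ intersecting the $r$-neighbourhood of $v$. Every bag $B''$ with $d_G(v,B'')\le r(v)$ lies in $T_v$, hence is a descendant of $B' = B_v$ in the rooted tree $\calT$. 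But $B'$ is a leaf of $T_B$, so $T_B$ contains no proper descendant of $B'$. Therefore the only bag of $T_B$ that can satisfy $d_G(v,B'')\le r(v)$ is $B'$ itself, which is what we wanted.

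The step I expect to require the most care is justifying that a non-root leaf of $T_B$ must lie in $\calB$ — one has to invoke precisely the definition of ``smallest subtree containing $\calB$'' and note that removing a leaf not forced by $\calB$ keeps the tree property and the covering property intact. Everything else is a direct consequence of the subtree-of-a-tree-decomposition fact and the rooting argument already spelled out in Lemma~\ref{lem:TBAlgo}'s proof, so no further heavy machinery is needed.
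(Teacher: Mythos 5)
Your proof is correct, but it takes a more constructive route than the paper's. The paper argues by contradiction directly from the \emph{output specification} of Lemma~\ref{lem:TBAlgo}: if no vertex had $B'$ as its unique covering bag in~$T_B$, then every $r$-neighbourhood would still be met by a bag of $T_B - B'$, so $T_B - B'$ would be a smaller $r$-covering subtree containing~$B$, contradicting minimality. You instead open up the construction: the smallest subtree of~$\calT$ spanning~$\calB$ has all its leaves in~$\calB$, so a non-root leaf satisfies $B' = B_v$ for some vertex~$v$; the proof of Lemma~\ref{lem:TBAlgo} already shows that every bag meeting $N_G^{r(v)}[v]$ is a descendant of~$B_v$, and since $T_B$ is connected and contains the root~$B$, a proper descendant of~$B'$ inside~$T_B$ would force a child of~$B'$ onto the path to~$B$ and contradict $B'$ being a leaf --- that last step is the one you should spell out rather than assert. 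Both arguments are sound; yours has the advantage of exhibiting the witness explicitly (any $v$ with $\beta(v) = B'$, i.e.\ any leaf has $\sigma(B') > 0$), which is the handle Algorithm~\ref{algo:rhoDomTreeDeco} later uses, while the paper's version is shorter and independent of how~$T_B$ was actually computed.
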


\begin{proof}
Assume that Lemma~\ref{lem:rDiskLeaf} is false.
Then, there is a leaf~$B'$ such that, for each vertex~$v$ with $d_G(v, B') \leq r(v)$, $T_B$ contains a bag~$B'' \neq B'$ with $d_G(v, B'') \leq r(v)$.
Thus, for each vertex~$v$, the $r$-neighbourhood of~$v$ is intersected by a bag of the tree-decomposition~$T_B - B'$.
This contradicts with the minimality of~$T_B$.
\qed
\end{proof}

\begin{lemma}
    \label{lem:TBleavesInTr}
For each leaf~\( B' \neq B \) of~\( T_B \), there is a minimum $r$-covering subtree~\( T_r \) of~\( \calT \) which contains~\( B'\).
\end{lemma}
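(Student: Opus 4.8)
The plan is to show that the leaf $B' \neq B$ of $T_B$ identified in Lemma~\ref{lem:rDiskLeaf} survives in some minimum $r$-covering subtree $T_r$, by an exchange argument. Fix such a leaf $B'$ and let $v$ be the vertex from Lemma~\ref{lem:rDiskLeaf}, so that $B'$ is the unique bag of $T_B$ with $d_G(v, B') \leq r(v)$. Let $T_r$ be any minimum $r$-covering subtree of $\calT$. Since $T_r$ is $r$-covering, it contains at least one bag $B_v$ with $d_G(v, B_v) \leq r(v)$. If $B' \in T_r$ we are done, so assume $B' \notin T_r$; I will modify $T_r$ into another minimum $r$-covering subtree that does contain $B'$.

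The key structural fact is the one already used in the proof of Lemma~\ref{lem:TBAlgo}: by property~(iii) of tree-decompositions, the set $T_v$ of all bags $B''$ of $\calT$ with $d_G(v, B'') \leq r(v)$ induces a subtree of $\calT$, and (rooting $\calT$ at $B$) the bag of $T_v$ closest to $B$ is exactly $B' = B_v = \beta(v)$ — this is why the algorithm put $B'$ into $\calB$. Hence every bag of $T_v$, in particular the bag $B_v \in T_r$ witnessing $v$'s coverage, is a descendant of $B'$ in $\calT$. First I would argue that $B'$ is in fact a leaf of $T_r$ after we route around the rest of $T_r$: form $T_r' := \big(T_r \setminus \{\,\text{bags strictly below } B' \text{ that are not needed}\,\}\big) \cup \{B'\} \cup P$, where $P$ is the (unique, by acyclicity of $\calT$) path in $\calT$ from $B'$ up to the nearest bag of $T_r$. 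More carefully: since $B_v \in T_r$ and $B'$ lies on the $\calT$-path from $B_v$ up toward the root $B$, and $T_r$ is a subtree, the path from $B_v$ to $B$ inside $\calT$ leaves $T_r$ at some point; let $Q$ be the $\calT$-path from $B'$ to the closest bag of $T_r$. Adding $Q \cup \{B'\}$ to $T_r$ keeps it a subtree and keeps it $r$-covering, so it is still minimum only if $Q = \{B'\}$, i.e. $B'$ is adjacent to $T_r$; in that case $T_r \cup \{B'\}$ has one more bag than $T_r$, contradicting minimality — so instead I must *delete* something. This is the delicate point.

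The clean way to finish is: because $B'$ covers $v$ and $B_v$ (which $T_r$ uses) is a descendant of $B'$, the subtree $T_r$ restricted to the component containing $B_v$ after deleting the parent-edge of $B'$ contributes only to covering vertices whose entire $r$-covering bag-set lies strictly below $B'$; call this part $T_r^{\downarrow}$. Replace $T_r^{\downarrow}$ by nothing and instead attach $B'$ together with the $\calT$-path connecting $B'$ to the remainder $T_r \setminus T_r^{\downarrow}$. I would show this replacement (i) stays a subtree, (ii) stays $r$-covering — any vertex $u$ previously covered by a bag in $T_r^{\downarrow}$ has $T_u$ a subtree all of whose bags closest to $B$ is $\beta(u)$; if $\beta(u)$ is an ancestor of $B'$ it is untouched, and if $\beta(u) = B'$ or a descendant, then $B'$ itself (being an ancestor of all of $T_u \cap (\text{below } B')$ — actually $B' \in T_u$ by the uniqueness from Lemma~\ref{lem:rDiskLeaf} forces this) covers $u$ — and (iii) does not increase the number of bags, since $|T_r^{\downarrow}| \geq 1$ and we add at most the connecting path plus $B'$, whose length is bounded because $B_v \in T_r^{\downarrow}$ forces the connecting path to be short. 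The main obstacle I anticipate is precisely this bookkeeping in step (iii): making sure the bags we add (the path from $B'$ down to $T_r^{\downarrow}$) are no more numerous than the bags of $T_r^{\downarrow}$ we delete — this should follow because $B'$ lies on the path from the root-side of $T_r$ to every bag of $T_r^{\downarrow}$, so the connecting path is a sub-path of structure already present, but it needs to be written out carefully using the rootedness of $\calT$ at $B$ and the fact that $B' = \beta(v)$ is the top of $T_v$.
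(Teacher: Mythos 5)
Your opening is on the right track: you correctly identify the witness vertex~$v$ from Lemma~\ref{lem:rDiskLeaf}, observe that the bags covering~$v$ form a subtree~$T_v$ of~$\calT$ whose topmost bag (towards the root~$B$) is~$B'$, and conclude that any bag of~$T_r$ covering~$v$ is a descendant of~$B'$. But from there the proposal goes astray, and the gap is the observation that makes everything collapse: if $B' \notin T_r$, then since $T_r$ is connected, contains a strict descendant of~$B'$, and every $\calT$-path from a descendant of~$B'$ to a non-descendant passes through~$B'$, the \emph{entire} tree~$T_r$ consists of descendants of~$B'$. There is no ``remainder $T_r \setminus T_r^{\downarrow}$'' above~$B'$ to route a connecting path to --- your whole exchange machinery (the path~$Q$, the worry that $T_r \cup \{B'\}$ contradicts minimality, the bookkeeping in step~(iii)) is built around a part of~$T_r$ that is empty. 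Once you see that $T_r$ lies wholly below~$B'$, the replacement is simply~$\{B'\}$ itself, and minimality is immediate because $|\{B'\}| = 1$.

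What then remains --- and what your step~(ii) does not correctly supply --- is the proof that $\{B'\}$ is $r$-covering, i.e.\ that $d_G(u, B') \leq r(u)$ for \emph{every} vertex~$u$ of~$G$. Your justification ``$B' \in T_u$ by the uniqueness from Lemma~\ref{lem:rDiskLeaf} forces this'' is wrong: that uniqueness statement holds only for the single witness vertex~$v$ of the leaf~$B'$, not for arbitrary~$u$. The correct argument uses the separator property of the bag~$B'$ twice. Partition $V$ into~$V^{\downarrow}$ (vertices lying in~$B'$ or a descendant of~$B'$) and~$V^{\uparrow}$ (the rest). For $u \in V^{\uparrow}$, the bag of~$T_r$ covering~$u$ is a descendant of~$B'$, so a path of length at most~$r(u)$ from~$u$ to that bag must cross~$B'$. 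For $u \in V^{\downarrow}$, use that $T_B$ is $r$-covering and that $B'$ is a \emph{leaf} of~$T_B$: the bag of~$T_B$ covering~$u$ is either~$B'$ or a non-descendant of~$B'$, and in the latter case the path from~$u$ again crosses~$B'$. Either way $d_G(u, B') \leq r(u)$, so $\{B'\}$ is a minimum $r$-covering subtree containing~$B'$, proving the lemma. Without this two-sided separator argument your sketch does not close.
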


\begin{proof}
Assume that $T_r$ is a minimum $r$-covering subtree which does not contain~$B'$.
Because of Lemma~\ref{lem:rDiskLeaf}, there is a vertex~$v$ of~$G$ such that $B'$ is the only bag of~$T_B$ which intersects the $r$-neighbourhood of~$v$.
Therefore, $T_r$ contains only bags which are descendants of~$B'$.
Partition the vertices of~$G$ into the sets $V^\uparrow$ and~$V^\downarrow$ such that $V^\downarrow$ contains the vertices of~$G$ which are contained in~$B'$ or in a descendant of~$B'$.
Because $T_r$ is an $r$-covering subtree and because $T_r$ only contains descendants of~$B'$, it follows from properties of tree-decompositions that, for each vertex~$v \in V^\uparrow$, there is a path of length at most~$r(v)$ from~$v$ to a bag of~$T_r$ passing through~$B'$ and, thus, $d_G(v, B') \leq r(v)$.
Similarly, since $T_B$ is an $r$-covering subtree, it follows that, for each vertex~$v \in V^\downarrow$, $d_G(v, B') \leq r(v)$.
Therefore, for each vertex~$v$ of~$G$, $d_G(v, B') \leq r(v)$ and, thus, $B'$ induces an $r$-covering subtree~$T_r$ of~$\calT$ with $|T_r| = 1$.
\qed
\end{proof}

Algorithm~\ref{algo:T_rDeco} below uses Lemma~\ref{lem:TBleavesInTr} to compute a minimum $r$-covering subtree~$T_r$ of~$\calT$.

\begin{algorithm}
    [htb]
    \caption
    {%
        Computes a minimum $r$-covering subtree~$T_r$ of a given tree-decomposition~$\calT$.
    }
    \label{algo:T_rDeco}

Pick an arbitrary bag~$B$ of~$\calT$.

Determine the subtree~$T_B$ of~$\calT$ using Algorithm~\ref{algo:T_rWithBag}.

\If
{%
    \( |T_B| = 1 \)
}
{%
    Output~$T_r := T_B$.
}
\Else
{%
    Select an arbitrary leaf~$B' \neq B$ of~$T_B$.

    Determine the subtree~$T_{B'}$ of~$\calT$ using Algorithm~\ref{algo:T_rWithBag}.

    Output~$T_r := T_{B'}$.
}
\end{algorithm}

\begin{lemma}
    \label{lem:computeTr}
Algorithm~\ref{algo:T_rDeco} computes a minimum \( r \)-covering subtree~\( T_r \) of~\( \calT \) in \( \calO(nm) \) time.
\end{lemma}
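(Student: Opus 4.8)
The plan is to argue correctness and running time separately, leaning on the lemmas already established about Algorithm~\ref{algo:T_rWithBag}. For correctness, I would split into the two cases matched by the algorithm's branching. If $|T_B| = 1$, then $T_B$ is a single bag that already forms an $r$-covering subtree (by Lemma~\ref{lem:TBAlgo}, $T_B$ is $r$-covering), and no $r$-covering subtree can have fewer than one bag, so $T_r := T_B$ is trivially minimum. If $|T_B| > 1$, pick an arbitrary leaf $B' \neq B$ of $T_B$. By Lemma~\ref{lem:TBleavesInTr}, there exists a minimum $r$-covering subtree of~$\calT$ that contains~$B'$. Now run Algorithm~\ref{algo:T_rWithBag} again, this time with $B'$ as the designated bag; by Lemma~\ref{lem:TBAlgo} the output $T_{B'}$ is the $r$-covering subtree containing~$B'$ with the minimal number of bags. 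Since some minimum $r$-covering subtree contains~$B'$, and $T_{B'}$ is the smallest $r$-covering subtree containing~$B'$, we get $|T_{B'}| \leq |T_r^{\min}|$ for that minimum subtree, hence $T_{B'}$ is itself a minimum $r$-covering subtree. This establishes that the output is always a minimum $r$-covering subtree of~$\calT$.

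For the running time, the work consists of picking a bag (constant time given the data structures), at most two invocations of Algorithm~\ref{algo:T_rWithBag}, each costing $\calO(nm)$ by Lemma~\ref{lem:TBAlgo}, testing $|T_B| = 1$ and selecting a leaf (linear in the size of~$T_B$, which has at most $n$ bags). Summing these gives $\calO(nm)$ total time, as claimed.

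The one subtle point I would be careful about — and which I expect to be the only real obstacle — is the logical step in the $|T_B|>1$ case: we must be sure that "there is \emph{a} minimum $r$-covering subtree containing~$B'$" (Lemma~\ref{lem:TBleavesInTr}) combined with "$T_{B'}$ is the \emph{smallest} $r$-covering subtree containing~$B'$" (Lemma~\ref{lem:TBAlgo} applied to~$B'$) indeed forces $T_{B'}$ to be globally minimum. The argument is: let $T^*$ be a minimum $r$-covering subtree of~$\calT$; by Lemma~\ref{lem:TBleavesInTr} we may choose $T^*$ to contain~$B'$; then $T^*$ is an $r$-covering subtree containing~$B'$, so by minimality of~$T_{B'}$ among such subtrees, $|T_{B'}| \leq |T^*|$, and since $T^*$ is globally minimum, $|T_{B'}| = |T^*|$. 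I would state this chain explicitly so the reader sees the two lemmas interlock correctly. Everything else is bookkeeping already handled by the cited results.
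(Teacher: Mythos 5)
Your proof is correct and follows essentially the same route as the paper's: apply Lemma~\ref{lem:TBleavesInTr} to conclude that some minimum $r$-covering subtree contains the leaf~$B'$, then invoke Lemma~\ref{lem:TBAlgo} for the second call to Algorithm~\ref{algo:T_rWithBag} to conclude $T_{B'}$ is no larger than that minimum subtree, with the runtime bounded by two $\calO(nm)$ invocations. The interlocking step you flag as subtle is exactly the paper's (more tersely stated) argument, and your explicit treatment of the $|T_B|=1$ case is a harmless addition.
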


\begin{proof}
Algorithm~\ref{algo:T_rDeco} first picks an arbitrary bag~$B$ and then uses Algorithm~\ref{algo:T_rWithBag} to compute the smallest $r$-covering subtree~$T_B$ of~$\calT$ which contains~$B$.
By Lemma~\ref{lem:TBleavesInTr}, for each leaf~$B'$ of~$T_B$, there is a minimum $r$-covering subtree~$T_r$ which contains~$B'$.
Thus, performing Algorithm~\ref{algo:T_rWithBag} again with $B'$ as input creates such a subtree~$T_r$.

Clearly, with exception of calling Algorithm~\ref{algo:T_rWithBag}, all steps of Algorithm~\ref{algo:T_rDeco} require only constant time.
Because Algorithm~\ref{algo:T_rWithBag} requires at most $\calO(nm)$ time (see Lemma~\ref{lem:TBAlgo}) and is called at most two times, Algorithm~\ref{algo:T_rDeco} runs in at most $\calO(nm)$ total time.
\qed
\end{proof}

Algorithm~\ref{algo:T_rDeco} computes~$T_r$ by, first, computing~$T_B$ for some bag~$B$ and, second, computing~$T_{B'} = T_r$ for some leaf~$B'$ of~$T_B$.
Note that, because both trees are computed using Algorithm~\ref{algo:T_rWithBag}, Lemma~\ref{lem:rDiskLeaf} applies to $T_B$ and~$T_{B'}$.
Therefore, we can slightly generalise Lemma~\ref{lem:rDiskLeaf} as follows.

\begin{corollary}
    \label{cor:rDiskLeaf}
For each leaf~\( B \) of~\( T_r \), there is a vertex~\( v \) in~\( G \) such that \( B \) is the only bag of~\( T_r \) with \( d_G(v, B) \leq r(v) \).
\end{corollary}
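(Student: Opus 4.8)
The plan is to observe that the situation is symmetric in the two trees $T_B$ and $T_{B'}$ produced by Algorithm~\ref{algo:T_rDeco}, and then invoke Lemma~\ref{lem:rDiskLeaf} directly on whichever of the two trees equals $T_r$. Concretely, Algorithm~\ref{algo:T_rDeco} sets $T_r := T_B$ in the case $|T_B| = 1$, and otherwise sets $T_r := T_{B'}$ where $T_{B'}$ is itself the output of Algorithm~\ref{algo:T_rWithBag} on the input bag $B'$. So in the nontrivial case, $T_r = T_{B'}$ is a tree rooted at $B'$, and Lemma~\ref{lem:rDiskLeaf} applies to it verbatim: for every leaf $B'' \neq B'$ of $T_{B'} = T_r$, there is a vertex $v$ of $G$ such that $B''$ is the only bag of $T_r$ with $d_G(v, B'') \leq r(v)$. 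This already handles every leaf of $T_r$ except possibly the root $B'$ itself.

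The one case that needs extra care is therefore the leaf $B = B'$, i.e.\ showing the claimed property also holds for the root of $T_r$ when $T_r$ happens to be rooted at one of its own leaves. First I would note that, because $T_B$ was computed by Algorithm~\ref{algo:T_rWithBag} and $B'$ is a leaf of $T_B$ with $B' \neq B$, Lemma~\ref{lem:rDiskLeaf} (applied to $T_B$) already yields a vertex $v$ such that $B'$ is the only bag of $T_B$ with $d_G(v, B') \leq r(v)$. The key remaining point is that this same $v$ witnesses the property inside $T_r = T_{B'}$ as well: every bag of $T_r$ lies among the descendants of $B'$ in $\calT$ (viewing $\calT$ rooted at $B$), and, as argued in the proof of Lemma~\ref{lem:TBleavesInTr}, $B'$ is the root of the subtree $T_v$ of bags intersecting the $r$-neighbourhood of $v$; hence no proper descendant of $B'$ intersects that neighbourhood, so $B'$ is the only bag of $T_r$ with $d_G(v, B') \leq r(v)$. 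In the degenerate case $|T_B| = 1$ the tree $T_r = T_B$ is a single bag and the statement is vacuous (there is no leaf $\neq$ the whole tree, or one reads the single bag as its own witness via the $r$-covering property).

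The main obstacle I anticipate is precisely the bookkeeping about which bag plays the role of ``root'' in each invocation and making sure Lemma~\ref{lem:rDiskLeaf}'s hypothesis ``$B' \neq B$'' is met in each application; once one is careful that $T_r$ is genuinely the output of Algorithm~\ref{algo:T_rWithBag} rooted at some bag, the corollary is essentially a relabelling of that lemma, with the sole genuine extra argument being the transfer of the witness vertex for the root $B'$ from $T_B$ down to $T_r$ using the descendant structure already established in the proof of Lemma~\ref{lem:TBleavesInTr}.
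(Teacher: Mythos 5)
Your overall strategy coincides with the paper's: Lemma~\ref{lem:rDiskLeaf} applied to $T_{B'} = T_r$ handles every leaf other than the root $B'$, and the only substantive point is transferring the witness for the leaf $B'$ of $T_B$ over to $T_r$. You correctly isolate that point, but both structural facts you invoke to settle it are stated backwards, so the argument as written does not go through. First, it is not true that ``every bag of $T_r$ lies among the descendants of $B'$'' (rooted at $B$); that sentence in the proof of Lemma~\ref{lem:TBleavesInTr} refers to a \emph{hypothetical} minimum $r$-covering subtree that does \emph{not} contain $B'$, not to $T_{B'}$. For the actual $T_r = T_{B'}$ the opposite holds: since $B'$ is a leaf of the $r$-covering tree $T_B$, no proper descendant of $B'$ lies in $T_B$, so for every vertex $u$ the subtree $T_u$ of bags meeting $N_G^{r(u)}[u]$ contains a bag that is not a proper descendant of $B'$, and hence the bag of $T_u$ nearest to $B'$ is never a proper descendant of $B'$; consequently $T_{B'}$ contains \emph{no} proper descendant of $B'$ (it may well contain ancestors of $B'$, e.g.\ $B$ itself). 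Second, from ``$B'$ is the root of $T_v$'' it does not follow that ``no proper descendant of $B'$ intersects that neighbourhood''; what follows is the reverse: every bag of $T_v$ other than $B'$ \emph{is} a proper descendant of $B'$ (equivalently, $T_v$ cannot contain the parent of $B'$, because that parent lies on the path from $B'$ to $B$ inside $T_B$ and Lemma~\ref{lem:rDiskLeaf} gives $T_v \cap T_B = \{B'\}$).

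The correct transfer combines the corrected versions of these two statements: $T_v \subseteq \{B'\} \cup \mathrm{desc}(B')$ while $T_{B'}$ meets $\mathrm{desc}(B')$ only in non-proper descendants, \ie $T_{B'} \cap T_v = \{B'\}$, so the witness $v$ for the leaf $B'$ of $T_B$ also witnesses the claim for the root leaf $B'$ of $T_r$. Your conclusion and case split are exactly the intended ones (the paper states the corollary as an immediate ``slight generalisation'' of Lemma~\ref{lem:rDiskLeaf} without spelling out this transfer, so making it explicit is worthwhile), but the two containments must be swapped for the proof to be valid; as written, both premises are false even though they happen to combine to a true conclusion.
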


\subsection{$r$-Domination}

In this subsection, we use the minimum $r$-covering subtree~$T_r$ to determine an $(r + \rho)$-dominating set~$S$ in $\calO(nm)$ time using the following approach.
First, compute~$T_r$.
Second, pick a leaf~$B$ of~$T_r$.
If there is a vertex~$v$ such that $v$ is not dominated and $B$ is the only bag intersecting the $r$-neighbourhood of~$v$, then add the center of~$B$ into~$S$, flag all vertices~$u$ with $d_G(u, B) \leq r(u)$ as dominated, and remove $B$ from~$T_r$.
Repeat the second step until $T_r$ contains no more bags and each vertex is flagged as dominated.
Algorithm~\ref{algo:rhoDomTreeDeco} below implements this approach.
Note that, instead of removing bags from~$T_r$, we use a reversed BFS-order of the bags to ensure the algorithm processes bags in the correct order.

\begin{algorithm}
    [htb]
    \caption
    {%
        Computes an $(r + \rho)$-dominating set~$S$ for a given graph~$G$ with a given tree-decomposition~$\calT$ with breadth~$\rho$.
    }
    \label{algo:rhoDomTreeDeco}

Compute a minimum $r$-covering subtree~$T_r$ of~$\calT$ using Algorithm~\ref{algo:T_rDeco}.
\label{line:computeTr}

Give each vertex~$v$ a binary flag indicating if $v$ is \emph{dominated}.
Initially, no vertex is dominated.

Create an empty vertex set~$S_0$.

Let $\langle B_1, B_2, \ldots, B_k \rangle$ be the reverse of a BFS-order of~$T_r$ starting at its root.
\label{line:computeBFSofTr}

\For
{%
    \( i = 1 \) \KwTo \( k \)
}
{%
    \If
    {%
        \( \sigma(B_i) > 0 \)
    }
    {%
        Determine all vertices~$u$ such that $u$ has not been flagged as dominated and that $d_G(u, B_i) \leq r(u)$.
        Add all these vertices into a new set~$X_i$.
        \label{line:determineXi}

        Let $S_i = S_{i - 1} \cup \big \{ c(B_i) \big \}$.

        For each vertex~$u \in X_i$, flag $u$ as dominated, and decrease $\sigma \big( \beta(u) \big)$ by~$1$.
        \label{line:flagVertex}
    }
    \Else
    {%
        Let $S_i = S_{i - 1}$.
        \label{line:setSi}
    }
}
Output $S := S_k$.
\end{algorithm}

\begin{theorem}
Let \( D \) be a minimum \( r \)-dominating set for a given graph~\( G \).
Given a tree-decomposition with breadth~\( \rho \) for~\( G \), Algorithm~\ref{algo:rhoDomTreeDeco} computes an \( (r + \rho) \)-dominating set~\( S \) with \( |S| \leq |D| \) in \( \calO(nm) \) time.
\end{theorem}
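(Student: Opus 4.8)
The plan is to prove correctness (that $S$ is an $(r+\rho)$-dominating set), then the cardinality bound $|S| \le |D|$, and finally the $\calO(nm)$ running time; I expect the cardinality bound to be the main obstacle.

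\textbf{Correctness.} First I would argue that every vertex of $G$ ends up flagged as dominated, and that whenever a vertex $u$ is flagged, the center $c(B_i)$ that triggered the flag satisfies $d_G(u, c(B_i)) \le r(u) + \rho$. The latter is immediate: $u$ is flagged in iteration $i$ only when $d_G(u, B_i) \le r(u)$, and since $B_i \subseteq N_G^\rho[c(B_i)]$ we get $d_G(u, c(B_i)) \le d_G(u, B_i) + \rho \le r(u) + \rho$, and $c(B_i) \in S$. So it suffices to show no vertex is left unflagged. Suppose $u$ is never flagged. Let $T_u$ be the (connected, by tree-decomposition properties) set of bags of $T_r$ intersecting the $r$-neighbourhood of $u$; this is nonempty since $T_r$ is an $r$-covering subtree. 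Note $\beta(u)$ (from Algorithm~\ref{algo:T_rWithBag}, applied when computing $T_r$) lies in $T_u$, in fact is its root. Consider the last bag $B_i$ of $T_u$ in the BFS-order, i.e. the one processed earliest among bags of $T_u$ (since we iterate over the reverse BFS-order); I claim $B_i = \beta(u)$ or at least that when $B_i$ is processed, $\sigma(B_i) > 0$ because $u$ (or some still-undominated vertex pointing to $B_i$) witnesses it --- more carefully, the counter $\sigma$ is initialized so that $\sigma(B') $ counts vertices $v$ with $\beta(v) = B'$, and is only decremented when such a $v$ gets dominated. If $u$ is undominated when its $\beta(u)$ is processed, then $\sigma(\beta(u)) > 0$, so the algorithm enters the \textbf{if}-branch, computes $X_i$, and since $d_G(u,\beta(u)) \le r(u)$, $u \in X_i$ and gets flagged --- contradiction. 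I would need to check that $\beta(u)$ is indeed processed (it is a bag of $T_r$) and that it is processed while $u$ is still undominated, which follows because $\beta(u)$ is the root of $T_u$ and hence the last bag of $T_u$ in the reverse-BFS iteration, so any earlier flagging of $u$ would already have happened via a bag of $T_u$; in either case $u$ is flagged.

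\textbf{Cardinality.} This is the delicate part. I want $|S| \le |D|$ where $D$ is a minimum $r$-dominating set of $G$. The set $S$ contains one center per iteration in which the \textbf{if}-branch fires, i.e. one center per bag $B_i$ with $\sigma(B_i) > 0$ \emph{at the time it is processed}. I would set up an injection from these ``active'' bags into $D$. The key structural fact is Corollary~\ref{cor:rDiskLeaf} (generalized): processing in reverse-BFS order means that when $B_i$ is processed, all its descendants in $T_r$ have already been processed, so $B_i$ is effectively a leaf of the ``remaining'' subtree; by the leaf property there is a vertex $v$ whose $r$-neighbourhood, \emph{restricted to the remaining bags}, meets only $B_i$. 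If such a $v$ is still undominated, then $D$ must contain a vertex $r$-dominating $v$, and that vertex lies in some bag of the $r$-neighbourhood of $v$; I want to charge $B_i$ to a distinct element of $D$ for each active $B_i$. The cleanest route: show that the vertices $v$ witnessing distinct active bags $B_i$ have pairwise disjoint sets of $D$-dominators, OR directly biject active bags with bags of $T_r$ that every $r$-dominating set must hit (à la Lemma~\ref{lem:TrCardinality}/Corollary~\ref{cor:rDiskLeaf}), using that these witnessing bags are pairwise ``independent'' in $T_r$ because of the leaf-stripping order. I expect to need: if $B_i$ is active when processed, the witness vertex $u \in X_i$ has $\beta(u) = B_i$ (so $B_i$ is forced), and the map $u \mapsto$ (a vertex of $D$ dominating $u$) together with the fact that each bag of $T_r$ is ``claimed'' by at most the vertices $u$ with $\beta(u)$ in its subtree gives injectivity. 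The honest statement is: each active bag corresponds to a vertex $u$ that no previously-added center dominates and whose every $r$-dominator (in particular its $D$-dominator) lies in $B_i$ or a descendant, now all processed --- careful bookkeeping shows these $D$-dominators are distinct across active bags, giving $|S| \le |D|$.

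\textbf{Complexity.} Computing $T_r$ via Algorithm~\ref{algo:T_rDeco} costs $\calO(nm)$ by Lemma~\ref{lem:computeTr}. Computing a BFS-order of $T_r$ is $\calO(n)$. The main loop runs over $k \le n$ bags; in iteration $i$, determining $X_i$ in line~\ref{line:determineXi} requires, for each still-undominated vertex, checking $d_G(u, B_i) \le r(u)$, which via a BFS from $B_i$ costs $\calO(m)$; summing over all iterations gives $\calO(nm)$. Updating flags and the $\sigma$-counters is $\calO(1)$ amortized per vertex since each vertex is flagged once. Hence the total is $\calO(nm)$, matching the claimed bound. \qed
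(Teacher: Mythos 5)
Your correctness and complexity arguments are sound and essentially identical to the paper's: domination follows from $B_i \subseteq N_G^\rho[c(B_i)]$ plus the fact that an unflagged vertex $u$ keeps $\sigma(\beta(u)) > 0$ until $\beta(u)$ is processed, and the running time is one BFS per bag on top of the $\calO(nm)$ cost of Algorithm~\ref{algo:T_rDeco}.

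The cardinality bound, however, has a genuine gap exactly where you flag it: "careful bookkeeping shows these $D$-dominators are distinct across active bags" is the one nontrivial claim in the whole proof, and you never supply the argument. Your charging scheme is the right one (for each active bag $B_i$, pick an undominated witness $u$ with $\beta(u) = B_i$ and charge $B_i$ to a vertex $d_u \in D$ with $d_G(u, d_u) \le r(u)$), but distinctness does not follow from "$d_u$ lies in $B_i$ or a descendant, now all processed." Two active bags $B_i$ and $B_j$ with $i < j$ can be comparable in $T_r$ ($B_j$ an ancestor of $B_i$), and nothing you wrote rules out that a single $d \in D$ sitting below $B_i$ dominates both witnesses. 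The paper closes this by maintaining a stronger invariant: at every step there is a set $D_i \subseteq D$ with $|D_i| = |S_i|$ such that \emph{every} vertex $r$-dominated by $D_i$ is already $(r+\rho)$-dominated by $S_i$. Then $d_{u_j} \notin D_{j-1}$ is immediate, because $u_j$ is undominated by $S_{j-1}$. Maintaining that invariant is where the real work sits, and it is a separator argument you never make: if $v$ is $r$-dominated by $d_u$ but not yet flagged, then $\beta(v) = B_j$ with $j \ge i$ (so $B_j$ is not a descendant of $B_i$), while $d_u$ lies in $B_i$ or a descendant; hence the path of length at most $r(v)$ from $v$ either to $B_j$ (if $v$ lies below $B_i$) or to $d_u$ (if $v$ lies above) must cross the bag $B_i$, giving $d_G(v, B_i) \le r(v)$, so $v$ is flagged at step $i$ and is $(r+\rho)$-dominated by $c(B_i)$. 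Without this path-crossing argument (or an equivalent direct proof that a shared dominator would force the later witness to have been flagged earlier), the injection into $D$ is asserted rather than proved. Your statement (iii)-style observation about $\beta$ and the reverse BFS order is a necessary ingredient, but on its own it does not yield $|S| \le |D|$.
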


\begin{proof}
    [Correctness]
First, we show that $S$ is an $(r + \rho)$-dominating set for~$G$.
Note that a vertex~$v$ is flagged as dominated only if $S_i$ contains a vertex~$c(B_j)$ with $d_G(v, B_j) \leq r(v)$ (see line~\ref{line:determineXi} to line~\ref{line:flagVertex}).
Thus, $v$ is flagged as dominated only if $d_G(v, S_i) \leq d_G \big(v, c(B_j) \big) \leq r(v) + \rho$.
Additionally, by construction of~$T_r$ (see Algorithm~\ref{algo:T_rWithBag}), for each vertex~$v$, $T_r$~contains a bag~$B$ with $\beta(v) = B$, $\sigma(B)$ states the number of vertices~$v$ with $\beta(v) = B$, and $\sigma(B)$ is decreased by~$1$ only if such a vertex~$v$ is flagged as dominated (see line~\ref{line:flagVertex}).
Therefore, if $G$ contains a vertex~$v$ with $d_G(v, S_i) > r(v) + \rho$, then $v$ is not flagged as dominated and $T_r$ contains a bag~$B_i$ with $\beta(v) = B_i$ and $\sigma(B_i) > 0$.
Thus, when $B_i$ is processed by the algorithm, $c(B_i)$ will be added to~$S_i$ and, hence, $d_G(v, S_i) \leq r(v) + \rho$.

Let $V^S_i = \{ \, u \mid d_G(u, B_j) \leq r(u), c(B_j) \in S_i \, \}$ be the set of vertices which are flagged as dominated after the algorithm processed~$B_i$, \ie, each vertex in $V^S_i$ is $(r + \rho)$-dominated by~$S_i$.
Similarly, for some set~$D_i \subseteq D$, let $V^D_i = \{ \, u \mid d_G(u, D_i) \leq r(u) \, \}$ be the set of vertices dominated by~$D_i$.
To show that $|S| \leq |D|$, we show by induction over~$i$ that, for each~$i$, (i)~there is a set~$D_i \subseteq D$ such that $V^D_i \subseteq V^S_i$, (ii)~$|S_i| = |D_i|$, and (iii)~if, for some vertex~$v$, $\beta(v) = B_j$ with $j \leq i$, then $v \in V^S_i$.

For the base case, let $S_0 = D_0 = \emptyset$.
Then, $V^S_0 = V^D_0 = \emptyset$ and all three statements are satisfied.
For the inductive step, first, consider the case when $\sigma(B_i) = 0$.
Because $\sigma(B_i) = 0$, each vertex~$v$ with $\beta(v) = B_i$ is flagged as dominated, \ie, $v \in V^S_{i-1}$.
Thus, by setting $S_i = S_{i - 1}$ (line~\ref{line:setSi}) and $D_i = D_{i - 1}$, all three statements are satisfied for~$i$.
Next, consider the case when $\sigma(B_i) > 0$.
Therefore, $G$ contains a vertex~$u$ with $\beta(u) = B_i$ and $u \notin V^S_{i-1}$.
Then, the algorithm sets $S_i = S_{i-1} \cup \big \{ c(B_i) \big \}$ and flags all such $u$ as dominated (see line~\ref{line:determineXi} to line~\ref{line:flagVertex}).
Thus, $u \in V^S_i$ and statement~(iii) is satisfied.
Let $d_u$ be a vertex in $D$ with minimal distance to~$u$.
Thus, $d_G(d_u, u) \leq r(u)$, \ie, $d_u$ is in the $r$-neighbourhood of~$u$.
Note that, because $u \notin V^S_{i-1}$ and $V^D_{i-1} \subseteq V^S_{i-1}$, $d_u \notin D_{i-1}$.
Therefore, by setting $D_i = D_{i-1} \cup \{ d_u \}$, $|S_i| = |S_{i-1}| + 1 = |D_{i-1}| + 1 = |D_i|$ and statement~(ii) is satisfied.
Recall that $\beta(u)$ points to the bag closest to the root of~$T_r$ which intersects the $r$-neighbourhood of~$u$.
Thus, because $\beta(u) = B_i$, each bag~$B \neq B_i$ with $d_G(u, B) \leq r(u)$ is a descendant of~$B_i$.
Therefore, $d_u$ is in~$B_i$ or in a descendant of~$B_i$.
Let $v$ be an arbitrary vertex of~$G$ such that $v \notin V^S_{i-1}$ and $d_G(v, d_u) \leq r(v)$, \ie, $v$ is dominated by~$d_u$ but not by~$S_{i-1}$.
Due to statement~(iii) of the induction hypothesis, $\beta(v) = B_j$ with $j \geq i$, \ie, $B_j$ cannot be a descendant of~$B_i$.
Partition the vertices of~$G$ into the sets $V^\uparrow_i$ and~$V^\downarrow_i$ such that $V^\downarrow_i$ contains the vertices which are contained in~$B_i$ or in a descendant of~$B_i$.
If $v \in V^\downarrow_i$, then there is a path of length at most~$r(v)$ from $v$ to~$B_j$ passing through~$B_i$.
If $v \in V^\uparrow_i$, then, because $d_u \in V^\downarrow_i$, there is a path of length at most~$r(v)$ from $v$ to~$d_u$ passing through~$B_i$.
Therefore, $d_G(v, B_i) \leq r(v)$.
That is, each vertex $r$-dominated by~$d_u$, is $(r + \rho)$-dominated by some~$c(B_j) \in S_i$.
Therefore, because $S_i = S_{i-1} \cup \big \{ c(B_i) \big \}$ and $D_i = D_{i-1} \cup \{ d_u \}$, $v \in V^S_i \cap V^D_i $ and, thus, statement~(i) is satisfied.
\qed
\end{proof}

\begin{proof}
    [Complexity]
Computing~$T_r$ (line~\ref{line:computeTr}) takes at most~$\calO(nm)$ time (see Lemma~\ref{lem:computeTr}).
Because $T_r$ has at most $n$ bags, computing a BFS-order of $T_r$ (line~\ref{line:computeBFSofTr}) takes at most $\calO(n)$ time.
For some bag~$B_i$, determining all vertices~$u$ with $d_G(u, B_i) \leq r(u)$, flagging $u$ as dominated, and decreasing $\sigma \big( \beta(u) \big)$ (line~\ref{line:determineXi} to line~\ref{line:flagVertex}) can be done in $\calO(m)$ time by performing a BFS starting at all vertices of~$B_i$ simultaneously.
Therefore, because $T_r$ has at most $n$ bags, Algorithm~\ref{algo:rhoDomTreeDeco} requires at most $\calO(nm)$ total time.
\qed
\end{proof}

\subsection{Connected $r$-Domination}

In this subsection, we show how to compute a connected $(r + 5 \rho)$-dominating set and a connected $(r + 3 \lambda)$-dominating set for~$G$.
For both results, we use almost the same algorithm.
To identify and emphasise the differences, we use the label~\rHrt for parts which are only relevant to determine a connected $(r + 5 \rho)$-dominating set and use the label~\rDmd for parts which are only relevant to determine a connected $(r + 3 \lambda)$-dominating set.

For the remainder of this subsection, let $D_r$ be a minimum connected $r$-dominating set of~$G$.
For \rHrt~$\phi = 3 \rho$ or \rDmd~$\phi = 2 \lambda$, let $T_\phi$ be a minimum $(r + \phi)$-covering subtree of~$\calT$ as computed by Algorithm~\ref{algo:T_rDeco}.

The idea of our algorithm is to, first, compute~$T_\phi$ and, second, compute a small enough connected set~$C_\phi$ such that $C_\phi$ intersects each bag of~$T_\phi$.
Lemma~\ref{lem:rPhiDomSet} below shows that such a set~$C_\phi$ is an $\big( r + (\phi + \lambda) \big)$-dominating set.

\begin{lemma}
    \label{lem:rPhiDomSet}
Let \( C_\phi \) be a connected set that contains at least one vertex of each leaf of~\( T_\phi \).
Then, \( C_\phi \) is an \( \big( r + (\phi + \lambda) \big) \)-dominating set.
\end{lemma}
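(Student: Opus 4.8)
The plan is to show that an arbitrary vertex $v$ of $G$ is $\big(r + (\phi + \lambda)\big)$-dominated by $C_\phi$ by locating a bag of $T_\phi$ that is close to $v$ and then exploiting connectivity of $C_\phi$ together with the length bound $\lambda$ on the bags. First I would recall that $T_\phi$ is an $(r+\phi)$-covering subtree of $\calT$, so by definition there is a bag $B$ of $T_\phi$ with $d_G(v, B) \leq r(v) + \phi$. If $B$ is a leaf of $T_\phi$, we are essentially done: $C_\phi$ contains a vertex $w \in B$, and since $B$ has length at most $\lambda$, the vertex of $B$ realizing $d_G(v, B)$ is within distance $\lambda$ of $w$, giving $d_G(v, C_\phi) \leq r(v) + \phi + \lambda$.

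The main work is the case where $B$ is an internal bag of $T_\phi$. Here I would argue that $C_\phi$, being connected and meeting every leaf of $T_\phi$, must actually intersect $B$ itself. The key structural fact is that the bags of $\calT$ containing any fixed vertex of $C_\phi$ induce a subtree of $\calT$, and hence the set of bags of $T_\phi$ that $C_\phi$ meets induces a \emph{connected} subtree of $T_\phi$ (because $C_\phi$ is connected: consecutive vertices on a path in $G[C_\phi]$ share a bag, so the subtrees they hit overlap or are adjacent in $\calT$, and one can chain these together). A connected subtree of the tree $T_\phi$ that contains every leaf of $T_\phi$ must be all of $T_\phi$; in particular it contains $B$. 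Therefore $C_\phi$ has a vertex $w' \in B$, and again using that any two vertices of $B$ are within distance $\lambda$ in $G$, we get $d_G(v, w') \leq d_G(v, B) + \lambda \leq r(v) + \phi + \lambda$.

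Combining the two cases, $d_G(v, C_\phi) \leq r(v) + \phi + \lambda$ for every $v$, which is exactly the assertion that $C_\phi$ is an $\big(r + (\phi + \lambda)\big)$-dominating set. The step I expect to be the main obstacle — or at least the one requiring care — is the claim that a connected vertex set meeting every leaf of a subtree $T_\phi$ of the tree-decomposition must meet every bag of $T_\phi$; this needs the observation that $G[C_\phi]$ connected forces the family of $T_\phi$-bags hit by $C_\phi$ to be connected in $T_\phi$, which in turn relies on property (iii) of tree-decompositions (the bags containing a fixed vertex form a subtree) applied edge-by-edge along a spanning tree of $G[C_\phi]$. Everything else is a direct application of the length bound $\lambda$ and the covering property of $T_\phi$.
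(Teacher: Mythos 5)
Your proof is correct and follows essentially the same route as the paper's: show that $C_\phi$ meets every bag of $T_\phi$, then combine the $(r+\phi)$-covering property with the length bound $\lambda$. The only difference is that you spell out the step the paper dismisses as ``clearly'' (a connected set meeting all leaves of $T_\phi$ meets every bag, via the subtree property of tree-decompositions), which is a valid and welcome elaboration rather than a different approach.
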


\begin{proof}
Clearly, since $C_\phi$ is connected and contains a vertex of each leaf of~$T_\phi$, $C_\phi$ contains a vertex of every bag of~$T_\phi$.
By construction of~$T_\phi$, for each vertex~$v$ of~$G$, $T_\phi$ contains a bag~$B$ such that $d_G(v, B) \leq r(v) + \phi$.
Therefore, $d_G(v, C_\phi) \leq r(v) + \phi + \lambda$, \ie, $C_\phi$ is an $\big( r + (\phi + \lambda) \big)$-dominating set.
\qed
\end{proof}

To compute a connected set~$C_\phi$ which intersects all leaves of~$T_\phi$, we first consider the case when $T_\rho$ contains only one bag~$B$.
In this case, we can construct~$C_\phi$ by simply picking an arbitrary vertex~$v \in B$ and setting $C_\phi = \{ v \}$.
Similarly, if $T_\rho$ contains exactly two bags $B$ and~$B'$, pick a vertex~$v \in B \cap B'$ and set $C_\phi = \{ v \}$.
In both cases, due to Lemma~\ref{lem:rPhiDomSet}, $C_\phi$ is clearly an $\big( r + (\phi + \lambda) \big)$-dominating set with $|C_\phi| \leq |D_r|$.

Now, consider the case when $T_\phi$ contains at least three bags.
Additionally, assume that $T_\phi$ is a rooted tree such that its root~$R$ is a leaf.

\subsubsection{Notation.}
Based on its degree in~$T_\phi$, we refer to each bag~$B$ of~$T_\phi$ either as leaf, as \emph{path bag} if $B$ has degree~$2$, or as \emph{branching bag} if $B$ has a degree larger than~$2$.
Additionally, we call a maximal connected set of path bags a \emph{path segment} of~$T_\phi$.
Let $\bbL$ denote the set of leaves, $\bbP$ denote the set of path segments, and $\bbB$ denote the set of branching bags of~$T_\phi$.
Clearly, for any given tree~$T$, the sets $\bbL$, $\bbP$, and~$\bbB$ are pairwise disjoint and can be computed in linear time.

Let $B$ and~$B'$ be two adjacent bags of~$T_\phi$ such that $B$ is the parent of~$B'$.
We call $S = B \cap B'$ the \emph{up-separator} of~$B'$, denoted as~$\Sup(B')$, and a \emph{down-separator} of~$B$, denoted as $\Sdown(B)$, \ie, $S = \Sup(B') = \Sdown(B)$.
Note that a branching bag has multiple down-separators and that (with exception of~$R$) each bag has exactly one up-separator.
For each branching bag~$B$, let~$\calSdown(B)$ be the set of down-separators of~$B$.
Accordingly, for a path segment~$P \in \bbP$, $\Sup(P)$ is the up-separator of the bag in~$P$ closest to the root and $\Sdown(P)$ is the down separator of the bag in~$P$ furthest from the root.
Let $\nu$ be a function that assigns a vertex of~$G$ to a given separator.
Initially, $\nu(S)$ is undefined for each separator~$S$.

\subsubsection{Algorithm.}
Now, we show how to compute~$C_\phi$.
We, first, split $T_\phi$ into the sets $\bbL$, $\bbP$, and~$\bbB$.
Second, for each~$P \in \bbP$, we create a small connected set~$C_P$, and, third, for each~$B \in \bbB$, we create a small connected set~$C_B$.
If this is done properly, the union~$C_\phi$ of all these sets forms a connect set which intersects each bag of~$T_\phi$.

Note that, due to properties of tree-decompositions, it can be the case that there are two bags $B$ and~$B'$ which have a common vertex~$v$, even if $B$ and~$B'$ are non-adjacent in~$T_\phi$.
In such a case, either $v \in \Sdown(B) \cap \Sup(B')$ if $B$ is an ancestor of~$B'$, or $v \in \Sup(B) \cap \Sup(B')$ if neither is ancestor of the other.
To avoid problems caused by this phenomena and to avoid counting vertices multiple times, we consider any vertex in an up-separator as part of the bag above.
That is, whenever we process some segment or bag~$X \in \bbL \cup \bbP \cup \bbB$, even though we add a vertex~$v \in \Sup(X)$ to~$C_\phi$, $v$ is not contained in~$C_X$.

\paragraph{Processing Path Segments.}
First, after splitting~$T_\phi$, we create a set~$C_{P}$ for each path segment~$P \in \bbP$ as follows.
We determine $\Sup(P)$ and~$\Sdown(P)$ and then find a shortest path~$Q_P$ from $\Sup(P)$ to~$\Sdown(P)$.
Note that $Q_P$ contains exactly one vertex from each separator.
Let $x \in \Sup(P)$ and $y \in \Sdown(P)$ be these vertices.
Then, we set $\nu \big( \Sup(P) \big) = x$ and $\nu \big( \Sdown(P) \big) = y$.
Last, we add the vertices of~$Q_P$ into~$C_\phi$ and define $C_P$ as $Q_P \setminus \Sup(P)$.
Let $C_\bbP$ be the union of all sets~$C_P$, \ie, $C_\bbP = \bigcup_{P \in \bbP} C_P$.

\begin{lemma}
    \label{lem:CPcardinality}
\( |C_\bbP| \leq |D_r| - \phi \cdot \Lambda \big( T_\phi \big) \).
\end{lemma}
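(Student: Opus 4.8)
The plan is to bound $|C_\bbP|$ by summing, over all path segments $P \in \bbP$, the contribution $|C_P| = |Q_P| - |\Sup(P)|$, which equals the number of edges on the shortest path $Q_P$ from $\Sup(P)$ to $\Sdown(P)$. The key geometric observation is that $\Sup(P)$ and $\Sdown(P)$ are both separators of the tree-decomposition $\calT$, so the length of $Q_P$ is at least the number of bags strictly below the up-separator of $P$ along $P$ minus one — more usefully, I want a \emph{lower} bound on how many bags of $T_\phi$ get "used up" when we pass from $T_r$ to $T_\phi$. So I would mirror the argument of Lemma~\ref{lem:TdeltaTrCardinality}: each leaf of $T_\phi$ is, by Corollary~\ref{cor:rDiskLeaf} applied to the $(r+\phi)$-covering subtree, the unique bag of $T_\phi$ intersecting the $(r+\phi)$-neighbourhood of some vertex $v$; minimality of $T_\phi$ as an $(r+\phi)$-covering subtree then forces that, along the path in $T_r$ leaving that leaf, there are at least $\phi$ bags (using that $T_r$ is an $r$-covering subtree, so that a bag within $(r+\phi)$ but not yet at distance $\le r$ must still be reachable — more precisely, that removing a bag from $T_\phi$ at distance $<\phi$ from the leaf would still leave an $(r+\phi)$-covering subtree, contradicting minimality). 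Hence $|T_r| - |T_\phi| \ge \phi \cdot \Lambda(T_\phi)$, i.e. $|T_\phi| \le |T_r| - \phi\cdot\Lambda(T_\phi)$.

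Next I would relate $|C_\bbP|$ to $|T_\phi|$. Since $Q_P$ is a shortest path in $G$ meeting each separator between consecutive bags of $P$ in exactly one vertex (a standard tree-decomposition fact: consecutive bags along a path share a separator, and a shortest path from one end separator to the other is forced through each intermediate separator), $|C_P| = |Q_P \setminus \Sup(P)|$ is at most the number of bags of $P$, counting the convention that vertices in the up-separator of a segment/bag are charged to the bag above. Summing over $P \in \bbP$, and noting that every bag of $T_\phi$ is either a leaf (in $\bbL$), a branching bag (in $\bbB$), or lies in exactly one path segment (in $\bbP$), and that leaves and branching bags contribute nothing to $C_\bbP$, I get $|C_\bbP| \le |T_\phi| - |\bbL| - |\bbB| \le |T_\phi|$. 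Combining with the previous paragraph, $|C_\bbP| \le |T_r| - \phi\cdot\Lambda(T_\phi)$. Finally, since $T_r$ is a minimum $r$-covering subtree and $D_r$ is a connected $r$-dominating set, I would argue $|T_r| \le |D_r|$ by the same reasoning as in Lemma~\ref{lem:TrCardinality}: the bags of $T_r$ intersected by $D_r$ form a connected $r$-covering subtree, so if $D_r$ missed some leaf bag of $T_r$, minimality of $T_r$ plus Corollary~\ref{cor:rDiskLeaf} would exhibit an undominated vertex. Thus $|C_\bbP| \le |D_r| - \phi\cdot\Lambda(T_\phi)$.

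The main obstacle I expect is the first paragraph's counting: carefully justifying that stripping a leaf-adjacent chain of $\phi$ bags from $T_\phi$ still leaves an $(r+\phi)$-covering subtree, so that minimality gives exactly the factor $\phi$. This requires the analogue of Corollary~\ref{cor:rDiskLeaf} for $T_\phi$ together with a "sliding" argument showing that a vertex whose $(r+\phi)$-disk is hit only by a leaf $B'$ of $T_\phi$ has its $r$-disk at tree-distance at least $\phi$ from $B'$ within the ambient $T_r$ — i.e. that pushing in by $\phi$ is both necessary (minimality) and available (because $T_r$, being an $r$-covering subtree, extends $\phi$ steps past each leaf of $T_\phi$ in the relevant direction). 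A secondary technical point is handling the bookkeeping convention that up-separator vertices are charged to the parent bag, so that the per-segment bound $|C_P| \le (\text{number of bags in } P)$ is tight and the sums telescope correctly without double-counting shared separator vertices; this is routine but must be stated precisely to make the inequality $|C_\bbP| \le |T_\phi|$ airtight.
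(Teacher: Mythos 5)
Your overall strategy -- bound $|C_\bbP|$ by a bag count of $T_\phi$, relate $|T_\phi|$ to $|T_r|$ via a leaf-stripping argument, and finally bound $|T_r|$ by $|D_r|$ -- transplants the layering-partition machinery (Lemmas~\ref{lem:TrCardinality} and~\ref{lem:TdeltaTrCardinality}) into the tree-decomposition setting, where each of its three supporting facts fails. First, the step $|T_r| \leq |D_r|$ relies on the bags being pairwise disjoint so that ``$D_r$ meets every bag'' implies ``$D_r$ has at least as many vertices as bags''; clusters of a layering partition are disjoint, but bags of a tree-decomposition can overlap arbitrarily, and a single vertex of $D_r$ may lie in many bags of $T_r$. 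Second, the step $|C_P| \leq (\text{number of bags in } P)$ uses the layering-partition fact that a shortest path meets each cluster in exactly one vertex (Lemma~\ref{lem:clusterPath}); for a tree-decomposition, $Q_P$ must cross each intermediate separator, but consecutive separators can be at distance up to $\lambda$ in $G$, so $|C_P|$ can exceed the number of bags of $P$ by a factor of $\lambda$. Third, the claim $|T_\phi| \leq |T_r| - \phi \cdot \Lambda(T_\phi)$ needs tree-distance in $T_r$ to lower-bound graph distance (as $d_\calT \leq d_G$ does for layering partitions), which again has no analogue for tree-decompositions: sliding one bag along $T_r$ need not change any graph distance at all. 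So the intermediate quantity ``number of bags'' is simply not comparable to vertex counts on either side of the desired inequality.

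The paper's proof avoids bag counting entirely and charges $C_\bbP$ directly against pairwise disjoint pieces of $D_r$ itself. For each leaf $B$ of $T_\phi$, Corollary~\ref{cor:rDiskLeaf} gives a vertex $v$ whose $(r+\phi)$-neighbourhood meets only $B$ among the bags of $T_\phi$; since $D_r$ must reach the $r$-neighbourhood of $v$, it contains a path $D_v$ of at least $\phi$ vertices that enters $B$ and stays off $\Sup(B)$, contributing $\phi \cdot \Lambda(T_\phi)$ vertices in total. Separately, for each path segment $P$, since $D_r$ is connected and meets both $\Sup(P)$ and $\Sdown(P)$, the graph $G[D_r]$ contains a path $D_P$ between them, and these sets $D_P$ (each taken without its endpoint in $\Sup(P)$) are pairwise disjoint and disjoint from the $D_v$'s. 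Because $Q_P$ is a \emph{shortest} path in $G$ between the same two separators, $|C_P| = d_G\big(\Sup(P), \Sdown(P)\big) \leq |D_P|$, and summing gives $|C_\bbP| \leq |D_r| - \phi \cdot \Lambda(T_\phi)$. If you want to salvage your write-up, this disjoint-charging step is the idea you are missing; the bag-counting route cannot be repaired without losing a factor depending on $\lambda$ and on the bag overlaps.
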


\begin{proof}
Recall that $T_\phi$ is a minimum $(r + \phi)$-covering subtree of~$\calT$.
Thus, by Corollary~\ref{cor:rDiskLeaf}, for each leaf~$B \in \bbL$ of~$T_\phi$, there is a vertex~$v$ in~$G$ such that $B$ is the only bag of~$T_\phi$ with $d_G(v, B) \leq r(v) + \phi$.
Because $D_r$ is a connected $r$-dominating set, $D_r$ intersects the $r$-neighbourhood of each of these vertices~$v$.
Thus, by properties of tree-decompositions, $D_r$ intersects each bag of~$T_\phi$.
Additionally, for each such~$v$, $D_r$ contains a path~$D_v$ with $|D_v| \geq \phi$ such that $D_v$ intersects the $r$-neighbourhood of~$v$, intersects the corresponding leaf~$B$ of~$T_\phi$, and does not intersect $\Sup(B)$ ($\Sdown(B)$ if $B = R$).
Let $D_\bbL$ be the union of all such sets~$D_v$.
Therefore, $|D_\bbL| \geq \phi \cdot \Lambda \big( T_\phi \big)$.

Because $D_r$ intersects each bag of~$T_\phi$, $D_r$ also intersects the up- and down-separators of each path segment.
For a path segment~$P \in \bbP$, let $x$ and~$y$ be two vertices of~$D_r$ such that $x \in \Sup(P)$, $y \in \Sdown(P)$, and for which the distance in~$G[D_r]$ is minimal.
Let $D_P$ be the set of vertices on the shortest path in~$G[D_r]$ from $x$ to~$y$ without~$x$, \ie, $x \notin D_P$.
Note that, by construction, for each~$P \in \bbP$, $D_P$ contains exactly one vertex in $\Sdown(P)$ and no vertex in~$\Sup(P)$.
Thus, for all~$P, P' \in \bbP$, $D_P \cap D_{P'} = \emptyset$.
Let $D_\bbP$ be the union of all such sets~$D_P$, \ie, $D_\bbP = \bigcup_{P \in \bbP} D_P$.
By construction, $|D_\bbP| = \sum_{P \in \bbP} |D_P|$ and $D_\bbL \cap D_\bbP = \emptyset$.
Therefore, $|D_r| \geq |D_\bbP| + |D_\bbL|$ and, hence,
\[
    \sum_{P \in \bbP} |D_P| \leq |D_r| - |D_\bbL| \leq |D_r| - \phi \cdot \Lambda \big( T_\phi \big).
\]

Recall that, for each $P \in \bbP$, the sets $C_P$ and~$D_P$ are constructed based on a path from $\Sup(P)$ to~$\Sdown(P)$.
Since $C_P$ is based on a shortest path in~$G$, it follows that $|C_P| = d_G \big( \Sup(P), \Sdown(P) \big) \leq |D_P|$.
Therefore,
\[
    |C_\bbP| \leq \sum_{P \in \bbP} |C_P| \leq \sum_{P \in \bbP} |D_P| \leq |D_r| - \phi \cdot \Lambda \big( T_\phi \big).
    \tag*{\qed}
\]
\end{proof}

\paragraph{Processing Branching Bags.}
After processing path segments, we process the branching bags of~$T_\phi$.
Similar to path segments, we have to ensure that all separators are connected.
Branching bags, however, have multiple down-separators.
To connect all separators of some bag~$B$, we pick a vertex~$s$ in each separator~$S \in \calSdown(B) \cup \big \{ \Sup(B) \big \}$.
If $\nu(S)$ is defined, we set $s = \nu(S)$.
Otherwise, we pick an arbitrary~$s \in S$ and set $\nu(S) = s$.
Let $\calSdown(B) = \{ S_1, S_2, \ldots \}$, $s_i = \nu(S_i)$, and $t = \nu \big( \Sup(B) \big)$.
We then connect these vertices as follows.
(See Figure~\ref{fig:branchingBagConstr} for an illustration.)
\begin{enumerate}[\rHrt]
    \item[\rHrt]
        Connect each vertex~$s_i$ via a shortest path~$Q_i$ (of length at most~$\rho$) with the center~$c(B)$ of~$B$.
        Additionally, connect~$c(B)$ via a shortest path~$Q_t$ (of length at most~$\rho$) with~$t$.
        Add all vertices from the paths~$Q_i$ and from the path~$Q_t$ into $C_\phi$ and let $C_B$ be the union of these paths without~$t$.
    \item[\rDmd]
        Connect each vertex~$s_i$ via a shortest path~$Q_i$ (of length at most~$\lambda$) with~$t$.
        Add all vertices from the paths~$Q_i$ into $C_\phi$ and let $C_B$ be the union of these paths without~$t$.
\end{enumerate}
Let $C_\bbB$ be the union of all created sets~$C_B$, \ie, $C_\bbB = \bigcup_{B \in \bbB} C_B$.

\begin{figure}
    [htb]
    \centering

    \begin{tabular}{c}
        \includegraphics[]{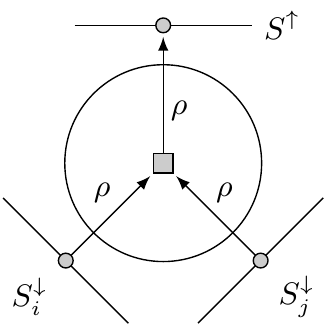} \\[5pt]
        \rHrt
    \end{tabular}
    \hfil
    \begin{tabular}{c}
        \includegraphics[]{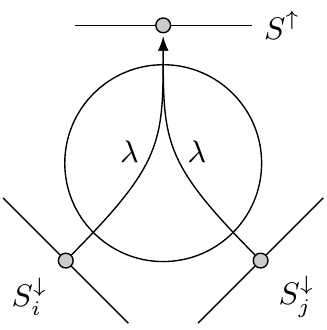} \\[5pt]
        \rDmd
    \end{tabular}

    \caption
    {%
        Construction of the set~$C_B$ for a branching bag~$B$.
    }
    \label{fig:branchingBagConstr}
\end{figure}

Before analysing the cardinality of~$C_\bbB$ in Lemma~\ref{lem:CBcardinality} below, we need an axillary lemma.

\begin{lemma}
    \label{lem:treeLeafsBranch}
For a tree~\( T \) which is rooted in one of its leaves, let \( b \) denote the number of branching nodes, \( c \) denote the total number of children of branching nodes, and \( l \) denote the number of leaves.
Then, \( c + b \leq 3 l - 1 \) and \( c \leq 2 l - 1 \).
\end{lemma}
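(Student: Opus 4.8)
The plan is to reduce the statement to the degree-sum identity $\sum_{v \in V(T)} \big( \deg_T(v) - 2 \big) = -2$, which holds for every tree since $\sum_{v} \deg_T(v) = 2\,|E(T)| = 2\big(|V(T)| - 1\big)$. The case $|V(T)| = 1$ is trivial (there $b = c = 0$), so assume $|V(T)| \geq 2$; then every node of~$T$ has degree at least~$1$, and I partition $V(T)$ by degree into $l$~leaves (degree~$1$), some path nodes (degree~$2$), and $b$~branching nodes (degree~$\geq 3$). Restricting the identity to these three classes makes the path-node terms vanish and gives $-l + \sum_{v \text{ branching}} \big( \deg_T(v) - 2 \big) = -2$, hence $\sum_{v \text{ branching}} \big( \deg_T(v) - 2 \big) = l - 2$. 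Since each branching node contributes at least~$1$ to the left-hand side, this already yields $b \leq l - 2$.

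Next I would express~$c$ through the same sum, and this is the only place where the hypothesis ``rooted in one of its leaves'' enters: the root then has degree~$1$, so it is not a branching node, and therefore every branching node is a non-root node and has exactly $\deg_T(v) - 1$ children. Summing over all branching nodes,
\[
    c = \sum_{v \text{ branching}} \big( \deg_T(v) - 1 \big) = \sum_{v \text{ branching}} \big( \deg_T(v) - 2 \big) + b = (l - 2) + b .
\]

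Finally I would combine the two facts: from $c = (l - 2) + b$ and $b \leq l - 2$ one gets $c \leq 2(l - 2) = 2l - 4 \leq 2l - 1$ and $c + b = (l - 2) + 2b \leq (l - 2) + 2(l - 2) = 3l - 6 \leq 3l - 1$, which is the claim (in fact, for $|V(T)| \geq 2$ the sharper bounds $c \leq 2l - 4$ and $c + b \leq 3l - 6$ hold). I do not expect any real obstacle: the argument is essentially a one-line application of the tree degree-sum identity, and the only step needing care is the bookkeeping that converts ``number of children'' into ``degree minus one''~-- which is exactly where rooting at a leaf is essential, since a branching root would contribute its full degree, rather than its degree minus one, to~$c$, and the clean identity $c = (l - 2) + b$ would fail.
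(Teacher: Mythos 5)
Your proof is correct, but it takes a genuinely different route from the paper. The paper proves the lemma by induction on the construction of~$T$, adding one leaf at a time and tracking how $b$, $c$, and~$l$ change in three cases according to whether the new leaf's neighbour was previously a leaf, a path node, or a branching node. You instead apply the degree-sum identity $\sum_{v}(\deg_T(v)-2)=-2$ once, observe that leaves contribute $-l$, path nodes contribute~$0$, and branching nodes contribute $\deg_T(v)-2\geq 1$ each, and that rooting at a leaf guarantees every branching node is a non-root and hence has $\deg_T(v)-1$ children. This yields the exact identity $c=b+(l-2)$ together with $b\leq l-2$, from which the sharper bounds $c\leq 2l-4$ and $c+b\leq 3l-6$ follow for $|V(T)|\geq 2$; the paper's inductive bookkeeping only delivers the weaker inequalities as stated. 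Your argument is shorter, non-inductive, and makes transparent exactly where the ``rooted at a leaf'' hypothesis is used (a branching root would contribute its full degree to~$c$ and break the identity), whereas the paper's induction is more mechanical but requires no identity beyond counting. The only shared soft spot is the degenerate one-node tree, for which the paper's convention $\Lambda(T)=0$ would make $3l-1$ negative; both your proof and the paper's implicitly treat the single node as a leaf there, and the lemma is in any case only invoked for trees with at least three nodes.
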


\begin{proof}
Assume that we construct $T$ by starting with only the root and then step by step adding leaves to it.
Let $T_i$ be the subtree of~$T$ with $i$~nodes during this construction.
We define $b_i$, $c_i$, and~$l_i$ accordingly.
Now, assume by induction over~$i$ that Lemma~\ref{lem:treeLeafsBranch} is true for~$T_i$.
Let $v$ be the leaf we add to construct~$T_{i+1}$ and let $u$ be its neighbour.

First, consider the case when $u$ is a leaf of~$T_i$.
Then, $u$ becomes a path node of~$T_{i+1}$.
Therefore, $b_{i+1} = b_i$, $c_{i+1} = c_i$, and $l_{i+1} = l_i$.
Next, assume that $u$ is path node of~$T_i$.
Then, $u$ is a branch node of~$T_{i+1}$.
Thus, $b_{i+1} = b_i + 1$, $c_{i+1} = c_i + 2$, and $l_{i+1} = l_i + 1$.
Therefore, $c_{i+1} + b_{i+1} = c_i + b_i + 3 \leq 3 (l_i + 1) - 1 = 3 l_{i+1} - 1$ and $c_{i+1} = c_i + 2 \leq 2 (l_i + 1) - 1 = 2 l_{i+1} - 1$.
It remains to check the case when $u$ is a branch node of~$T_i$.
Then, $b_{i+1} = b_i$, $c_{i+1} = c_i + 1$, and $l_{i+1} = l_i + 1$.
Thus, $c_{i+1} + b_{i+1} = c_i + b_i + 1 \leq 3 l_i - 1 + 1 \leq 3 l_{i+1} - 1$ and $c_{i+1} = c_i + 1 \leq 2 l_i - 1 + 1 \leq 2 l_{i+1} - 1$.
Therefore, in all three cases, Lemma~\ref{lem:treeLeafsBranch} is true for~$T_{i+1}$.
\qed
\end{proof}

\begin{lemma}
    \label{lem:CBcardinality}
\( |C_\bbB| \leq \phi \cdot \Lambda \big( T_\phi \big) \).
\end{lemma}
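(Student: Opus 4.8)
The plan is to estimate $|C_B|$ for each branching bag $B \in \bbB$ individually in terms of the number $k_B := |\calSdown(B)|$ of its down-separators, add these estimates up over all of $\bbB$, and then convert the resulting bound on $\sum_{B \in \bbB} k_B$ (and on $|\bbB|$) into a bound in $\Lambda(T_\phi)$ by invoking Lemma~\ref{lem:treeLeafsBranch}. Throughout, recall that $T_\phi$ is rooted in a leaf~$R$, so $R \notin \bbB$ and every branching bag has at least two children; in particular $k_B$ is at most the number of children of $B$ in~$T_\phi$.

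The first step is the per-bag bound. Fix $B \in \bbB$, let $s_1, \dots, s_{k_B}$ be the chosen representatives of its down-separators and $t = \nu\big(\Sup(B)\big)$. In case~\rHrt ($\phi = 3\rho$), $C_B$ is contained in $\big(Q_1 \cup \dots \cup Q_{k_B} \cup Q_t\big) \setminus \{t\}$, where $Q_i$ is a shortest path from $s_i$ to $c(B)$ and $Q_t$ is a shortest path from $c(B)$ to $t$; since $s_i \in \Sdown(B) \subseteq B \subseteq N_G^\rho[c(B)]$ and $t \in \Sup(B) \subseteq B \subseteq N_G^\rho[c(B)]$, each of these $k_B + 1$ paths has at most $\rho$ edges, so their union has at most $1 + (k_B + 1)\rho$ vertices (the common endpoint $c(B)$ plus at most $\rho$ further vertices per path), and deleting~$t$ leaves $|C_B| \le (k_B + 1)\rho$. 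In case~\rDmd ($\phi = 2\lambda$), $C_B$ is contained in $\big(Q_1 \cup \dots \cup Q_{k_B}\big) \setminus \{t\}$, where $Q_i$ is a shortest path from $s_i$ to $t$; since $s_i$ and $t$ both lie in the single bag~$B$ and $\calT$ has length~$\lambda$, each $Q_i$ has at most $\lambda$ edges, so their union has at most $1 + k_B \lambda$ vertices and deleting~$t$ leaves $|C_B| \le k_B \lambda$. (Any vertex of $\Sup(B)$ other than~$t$ that happens to lie on one of these paths is harmless: the counts above already include every vertex of every path.)

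The second step is to sum and apply the tree lemma. Put $b := |\bbB|$ and $c := \sum_{B \in \bbB} k_B$, so that $c$ is at most the total number of children of branching nodes of~$T_\phi$. By subadditivity of cardinality, $|C_\bbB| \le \sum_{B \in \bbB} |C_B|$, which is at most $(c + b)\rho$ in case~\rHrt and at most $c\lambda$ in case~\rDmd. Lemma~\ref{lem:treeLeafsBranch}, applied to~$T_\phi$ (rooted in a leaf, with $\Lambda(T_\phi)$ leaves), gives $c + b \le 3\Lambda(T_\phi) - 1$ and $c \le 2\Lambda(T_\phi) - 1$; hence $|C_\bbB| \le \big(3\Lambda(T_\phi) - 1\big)\rho \le 3\rho\,\Lambda(T_\phi) = \phi\,\Lambda(T_\phi)$ in case~\rHrt, and $|C_\bbB| \le \big(2\Lambda(T_\phi) - 1\big)\lambda \le 2\lambda\,\Lambda(T_\phi) = \phi\,\Lambda(T_\phi)$ in case~\rDmd. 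I do not expect a real obstacle here: the only point needing care is the per-bag count --- keeping the shared endpoint ($c(B)$ in case~\rHrt, $t$ in case~\rDmd) from being charged $k_B + 1$ times --- and the genuinely combinatorial content of the statement has already been isolated as Lemma~\ref{lem:treeLeafsBranch}.
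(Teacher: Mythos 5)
Your proof is correct and follows essentially the same route as the paper's: bound $|C_B|$ by $\rho\cdot(|\calSdown(B)|+1)$ (case \rHrt) or $\lambda\cdot|\calSdown(B)|$ (case \rDmd), sum over $\bbB$, and apply Lemma~\ref{lem:treeLeafsBranch} to the tree $T_\phi$ rooted at a leaf. Your per-bag vertex count (handling the shared endpoint and the removal of~$t$) is in fact spelled out more carefully than in the paper, but the argument is the same.
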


\begin{proof}
For some branching bag~$B \in \bbB$, the set~$C_B$ contains \rHrt~a path of length at most~$\rho$ for each~$S_i \in \calSdown(B)$ and a path of length at most~$\rho$ to~$\Sup(B)$, or \rDmd~a path of length at most~$\lambda$ for each~$S_i \in \calSdown(B)$.
Thus, \rHrt~$|C_B| \leq \rho \cdot \big| \calSdown(B) \big| + \rho$ or \rDmd~$|C_B| \leq \lambda \cdot \big| \calSdown(B) \big|$.
Recall that $\calSdown(B)$ contains exactly one down-separator for each child of~$B$ in~$T_\phi$ and that $C_\bbB$ is the union of all sets~$C_B$.
Therefore, Lemma~\ref{lem:treeLeafsBranch} implies the following.
\begin{alignat*}{2}
    |C_\bbB|
        & \leq \sum_{B \in \bbB} |C_B| \\
    \rHrt \enspace
        &  \leq \rho \cdot \! \sum_{B \in \bbB} \big| \calSdown(B) \big| + \rho \cdot |\bbB|
        && \leq 3 \rho \cdot \Lambda \big( T_\phi \big) - 1 \\
    \rDmd \enspace
        &  \leq \lambda \cdot \! \sum_{B \in \bbB} \big| \calSdown(B) \big|
        && \leq 2 \lambda \cdot \Lambda \big( T_\phi \big) - 1 \\
        &  \leq \phi \cdot \Lambda \big( T_\phi \big) - 1.
        \tag*{\qed}
\end{alignat*}
\end{proof}

\paragraph{Properties of \( C_\phi \).}
We now analyse the created set~$C_\phi$ and show that $C_\phi$ is a connected $(r + \phi)$-dominating set for~$G$.

\begin{lemma}
    \label{lem:CphiBagInt}
\( C_\phi \) contains a vertex in each bag of~\( T_\phi \).
\end{lemma}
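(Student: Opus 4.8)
The plan is to show that $C_\phi = C_\bbP \cup C_\bbB$ (together with the vertices added from up-separators) meets every bag of $T_\phi$ by going through the three types of bags -- leaves, path bags, branching bags -- separately, using the construction of $C_P$ and $C_B$ described above. Recall the key structural fact built into the construction: whenever we process a path segment or a branching bag we choose a vertex in each of its separators (via the function $\nu$), and these chosen vertices are exactly the ``hooks'' that tie neighbouring pieces together. So the heart of the argument is to trace, bag by bag, which piece is responsible for placing a vertex in it.

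First I would handle leaves. Every leaf $B$ of $T_\phi$ other than the root is the last bag of some path segment $P$, hence $B \supseteq \Sdown(P)$, and by construction the shortest path $Q_P$ used to build $C_P$ ends in a vertex $y \in \Sdown(P) \subseteq B$; since $y \in Q_P \setminus \Sup(P) = C_P \subseteq C_\phi$, the bag $B$ is hit. For the root $R$, which we assumed to be a leaf, the same argument applies with $\Sup$ replaced by $\Sdown$ (or $R$ is the top of a path segment and $Q_P$ starts in $R$). Next I would handle path bags: any path bag $B$ lies on a path segment $P$ between $\Sup(P)$ and $\Sdown(P)$, and since $Q_P$ is a shortest path in $G$ running from a vertex of $\Sup(P)$ to a vertex of $\Sdown(P)$, and since in a tree-decomposition the bags containing a given vertex form a subtree, $Q_P$ must pass through $B$ -- more precisely, consecutive separators along $P$ share the relevant bag, so a connected path from the top separator to the bottom separator of $P$ intersects every bag in between. (This is essentially the observation already used implicitly: a shortest path between two vertices lying in bags of a path segment stays ``inside'' that segment's bags.) Thus $C_\phi$ meets every path bag.

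Finally, branching bags: for a branching bag $B$, the set $C_B$ consists of shortest paths $Q_i$ from $s_i = \nu(S_i) \in S_i \subseteq B$ to the center $c(B)$ (in the \rHrt\ case) or to $t = \nu(\Sup(B)) \in \Sup(B) \subseteq B$ (in the \rDmd\ case), plus possibly $Q_t$. In either case each such path starts at a vertex of $B$, so $B \cap C_\phi \neq \emptyset$. This exhausts all bag types, so $C_\phi$ contains a vertex of every bag of $T_\phi$.

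The step I expect to require the most care is the path-bag case: one has to justify cleanly that a shortest path in $G$ between a vertex of $\Sup(P)$ and a vertex of $\Sdown(P)$ really does pass through every intermediate bag of the path segment $P$. The right way to see this is inductively along the segment: if $B_1, B_2, \ldots, B_k$ are the consecutive bags of $P$, then $\Sup(B_{j+1}) = \Sdown(B_j) = B_j \cap B_{j+1}$ separates, in $G$, the vertices contained in $B_1,\dots,B_j$ from those contained in $B_{j+1},\dots,B_k$ (a standard property of tree-decompositions), so any path from $\Sup(P) \subseteq B_1$ to $\Sdown(P) \subseteq B_k$ must cross each $B_j \cap B_{j+1}$ and hence pass through each $B_j$. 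Everything else is a direct unwinding of how $C_P$ and $C_B$ were defined, and I would keep those parts brief.
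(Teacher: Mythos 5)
Your overall strategy is the same as the paper's: path bags and branching bags are hit directly by the paths $Q_P$ and $Q_i$ placed during the construction, and leaves are hit through their separators. Your separator-based justification that $Q_P$ meets every bag of a path segment $P$ (any path from a vertex of $\Sup(P) \subseteq B_1$ to a vertex of $\Sdown(P) \subseteq B_k$ must contain a vertex of each $B_j \cap B_{j+1}$) is a correct and welcome elaboration of what the paper dismisses as ``clear by construction''; likewise the observation that each $Q_i$ starts at $s_i \in S_i \subseteq B$ correctly covers branching bags (note this is the right argument in the \rHrt case, since $c(B)$ need not belong to $B$).

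There is, however, one false step: your claim that every leaf of $T_\phi$ other than the root is the last bag of some path segment. A leaf $L$ may instead be a child of a branching bag $B$, with no path segment in between; your leaf argument, which relies on the endpoint $y \in \Sdown(P)$ of some $Q_P$, then does not apply. The gap is easy to close with material you already have: in that case $\Sup(L) = B \cap L$ is one of the down-separators $S_i \in \calSdown(B)$, and when $B$ is processed the algorithm places $w_i = \nu(S_i) \in S_i \subseteq L$ on a path that is added to $C_\phi$, so $L$ is still hit. (The same dichotomy handles the root $R$: its unique child is either the top bag of a path segment, in which case $x \in \Sup(P) \subseteq R$ lies on $Q_P$, or a branching bag $B$, in which case $u = \nu\big(\Sup(B)\big) \in \Sup(B) \subseteq R$ is added.) This is exactly how the paper argues the leaf case: every leaf is adjacent to some $X \in \bbP \cup \bbB$, and processing $X$ puts a vertex of the shared separator, which is contained in the leaf, into $C_\phi$. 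With that correction your proof is complete.
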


\begin{proof}
Clearly, by construction, $C_\phi$ contains a vertex in each path bag and in each branching bag.
Now, consider a leaf~$L$ of~$T_\phi$.
$L$ is adjacent to a path segment or branching bag~$X \in \bbP \cap \bbB$.
Whenever such an $X$ is processed, the algorithm ensures that all separators of~$X$ contain a vertex of~$C_\phi$.
Since one of these separators is also the separator of~$L$, it follows that each leaf~$L$ and, thus, each bag of~$T_\phi$ contains a vertex of~$C_\phi$.
\qed
\end{proof}

\begin{lemma}
    \label{lem:CphiCardinality}
\( |C_\phi| \leq |D_r| \).
\end{lemma}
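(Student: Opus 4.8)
The plan is to split the set $C_\phi$ into the two pieces $C_\bbP$ and $C_\bbB$ whose cardinalities were already estimated in Lemma~\ref{lem:CPcardinality} and Lemma~\ref{lem:CBcardinality}, plus at most one extra vertex, and then add up the bounds. Recall that, by construction, $C_\phi$ is the union of the shortest paths $Q_P$ taken over all path segments $P \in \bbP$ together with the union of all shortest paths built for the branching bags $B \in \bbB$ (the paths $Q_i$, and also $Q_t$ in the \rHrt case). Since $Q_P$ meets $\Sup(P)$ in exactly the single vertex $\nu\big(\Sup(P)\big)$ and $C_P = Q_P \setminus \Sup(P)$, we have $Q_P = C_P \cup \big\{ \nu(\Sup(P)) \big\}$; in the same way, the union of the paths constructed for a branching bag $B$ equals $C_B \cup \big\{ \nu(\Sup(B)) \big\}$, because only the vertex $t = \nu\big(\Sup(B)\big)$ is deleted when forming $C_B$. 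Taking the union over all of $\bbP$ and $\bbB$ gives
\[
    C_\phi \;=\; C_\bbP \,\cup\, C_\bbB \,\cup\, \big\{ \, \nu\big(\Sup(X)\big) \;\big|\; X \in \bbP \cup \bbB \, \big\}.
\]

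Next I would show that, with a single exception, every vertex $\nu\big(\Sup(X)\big)$ in the last set already lies in $C_\bbP \cup C_\bbB$. Recall that here $T_\phi$ has at least three bags and is rooted at one of its leaves $R$ (the cases where $T_\phi$ has at most two bags were disposed of separately, with $|C_\phi| = 1 \le |D_r|$). Fix $X \in \bbP \cup \bbB$ and let $Y$ be the parent bag or segment of $X$ in $T_\phi$, so that $\Sup(X)$ is a down-separator of $Y$. A leaf other than $R$ has no children (its unique neighbour being its parent), so $Y \in \bbP \cup \bbB$ \emph{unless} $Y = R$, which happens for exactly one $X$, namely the unique child of the root. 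When $Y \ne R$, the moment the algorithm processed $Y$ it assigned the relevant down-separator of $Y$ a value $\nu(\cdot)$ equal to a vertex lying on $Q_Y$ (respectively on one of the paths constructed for the branching bag $Y$) and not lying in $\Sup(Y)$; that vertex therefore belongs to $C_Y \subseteq C_\bbP \cup C_\bbB$, and $\nu$ is never reassigned afterwards. (In the degenerate situation where this vertex would coincide with $t = \nu(\Sup(Y))$, it equals $\nu(\Sup(Y))$ and the argument simply recurses one level up.) Hence $\nu\big(\Sup(X)\big) \in C_\bbP \cup C_\bbB$ for every $X$ other than the root's child, so
\[
    C_\phi \;\subseteq\; C_\bbP \,\cup\, C_\bbB \,\cup\, \big\{ \, \nu\big(\Sdown(R)\big) \, \big\}
    \qquad\text{and hence}\qquad
    |C_\phi| \;\le\; |C_\bbP| + |C_\bbB| + 1 .
\]

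Finally I would substitute the cardinality bounds: Lemma~\ref{lem:CPcardinality} gives $|C_\bbP| \le |D_r| - \phi \cdot \Lambda(T_\phi)$, and the estimate established inside the proof of Lemma~\ref{lem:CBcardinality} gives the slightly sharper $|C_\bbB| \le \phi \cdot \Lambda(T_\phi) - 1$; combined with the one surplus vertex this yields $|C_\phi| \le |D_r| - \phi\,\Lambda(T_\phi) + \phi\,\Lambda(T_\phi) - 1 + 1 = |D_r|$. I expect the one delicate point to be exactly this off-by-one accounting: verifying that the only separator vertex inserted into $C_\phi$ and not already counted in $C_\bbP \cup C_\bbB$ is $\nu\big(\Sdown(R)\big)$, and that the resulting $+1$ is precisely absorbed by the $-1$ slack in the branching-bag estimate. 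Everything else is routine unwinding of the definitions of $C_P$, $C_B$, and $\nu$.
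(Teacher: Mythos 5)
Your proof is correct and follows essentially the same route as the paper: decompose $C_\phi$ into $C_\bbP \cup C_\bbB$ plus the separator vertices $\nu\big(\Sup(X)\big)$, observe that all of these except $\nu\big(\Sdown(R)\big)$ are already accounted for in some $C_X$, and then combine $|C_\bbP| \leq |D_r| - \phi\,\Lambda(T_\phi)$ with the sharper bound $|C_\bbB| \leq \phi\,\Lambda(T_\phi) - 1$ from the proof of Lemma~\ref{lem:CBcardinality}. You in fact spell out the bookkeeping of the $\nu(\cdot)$ vertices (including the degenerate coincidence with $t$) in more detail than the paper does, but the argument is the same.
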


\begin{proof}
Note that, for each vertex~$u$ we add to~$C_\phi$, we also add $u$ to a unique set~$C_X$ for some $X \in \bbP \cap \bbB$.
The exception is the vertex~$v$ in~$\Sdown(R)$ which is added to no such set~$C_X$.
It follows from our construction of the sets~$C_X$ that there is only one such vertex~$v$ and that $v = \nu \big( \Sdown(R) \big)$.
Thus, $|C_\phi| = |C_\bbP| + |C_\bbB| + 1$.
Now, it follows from Lemma~\ref{lem:CPcardinality} and Lemma~\ref{lem:CBcardinality} that
\[
    |C_\phi| \leq |D_r| - \phi \cdot \Lambda \big( T_\phi \big) + \phi \cdot \Lambda \big( T_\phi \big) - 1 + 1 \leq |D_r|.
    \tag*{\qed}
\]
\end{proof}

\begin{lemma}
    \label{lem:CphiConnected}
\( C_\phi \) is connected.
\end{lemma}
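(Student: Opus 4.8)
The plan is to realise $C_\phi$ as a union of connected pieces, one per path segment and one per branching bag, whose pairwise intersections mirror the edges of $T_\phi$. We may assume $T_\phi$ has at least three bags, the cases of one or two bags having been disposed of above (there $C_\phi$ consists of a single vertex). For $P \in \bbP$ let $\hat C_P := Q_P$, the shortest $\Sup(P)$--$\Sdown(P)$ path used in the construction; then $\hat C_P = C_P \cup \{\nu(\Sup(P))\}$, the induced subgraph is connected, and its two endpoints are $\nu(\Sup(P))$ and $\nu(\Sdown(P))$. For $B \in \bbB$ let $\hat C_B$ be the union of the paths $Q_i$ built while processing $B$ (together with $Q_t$ in case \rHrt); then $\hat C_B = C_B \cup \{\nu(\Sup(B))\}$, it contains $\nu(S)$ for every up- or down-separator $S$ of $B$, and it is connected because in case \rHrt all these paths run through $c(B)$ and in case \rDmd they all run through $t = \nu(\Sup(B))$. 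Since the only vertices ever inserted into $C_\phi$ are those of the $Q_P$'s and of the branching-bag paths, $C_\phi = \bigcup_{X \in \bbP \cup \bbB} \hat C_X$.

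The first point to nail down is that $\nu$ is well defined, i.e., that it returns the same vertex for a separator $S$ irrespective of which incident segment or bag one reaches it through: the path segments are processed before any branching bag and no two path segments share a separator (two path bags joined by a parent--child edge would be merged into a single segment), and every branching bag, when processed, reuses $\nu(S)$ whenever that value already exists. Granting this, each $\hat C_X$ contains $\nu(S)$ for every separator $S$ of $X$, by the descriptions above.

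Next I would contract every path segment of $T_\phi$ to a single node, obtaining a tree $\hat T$ whose nodes are the leaf bags, the path segments, and the branching bags of $T_\phi$; in $\hat T$ a leaf bag has degree $1$, a path segment degree $2$, and a branching bag degree at least $3$, so the leaves of $\hat T$ are exactly the leaf bags of $T_\phi$. Hence $\bbP \cup \bbB$ is $\hat T$ with all its leaves removed, which is a nonempty connected subtree $\hat T'$. For any edge of $\hat T'$ joining a node $X$ to its parent $Y$, the corresponding separator $S$ is the up-separator of $X$ and a down-separator of $Y$, so $\nu(S) \in \hat C_X \cap \hat C_Y$. A straightforward induction on $|\bbP \cup \bbB|$, peeling off a leaf of $\hat T'$ at each step, then shows that the union of the connected sets $\hat C_X$ over $X \in \bbP \cup \bbB$ is connected; since this union is $C_\phi$, we are done.

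The routine but somewhat fiddly part is this separator bookkeeping---making the down-separators of a bag match up with the up-separators of its children and checking that $\nu$ is consistent---together with the special handling of the root leaf $R$, whose separator vertex $\nu(\Sdown(R))$ lies in $\hat C_X$ for the unique child $X \in \bbP \cup \bbB$ of $R$. No genuinely hard step is anticipated.
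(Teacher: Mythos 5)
Your proof is correct and follows essentially the same route as the paper: establish that $\nu$ is consistently defined (using maximality of path segments and the reuse check in branching bags), note that each per-segment/per-bag piece is connected and contains $\nu(S)$ for all its separators, and glue along the tree structure of $T_\phi$. Your contracted tree $\hat T'$ and explicit leaf-peeling induction merely formalize the final gluing step that the paper states more informally.
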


\begin{proof}
First, note that, by maximality, two path segments of~$T_\phi$ cannot share a common separator.
Also, note that, when processing a branching bag~$B$, the algorithm first checks if, for any separator~$S$ of~$B$, $\nu(S)$ is already defined; if this is the case, it will not be overwritten.
Therefore, for each separator~$S$ in~$T_\phi$, $\nu(S)$ is defined and never overwritten.

Next, consider a path segment or branching bag~$X \in \bbP \cup \bbB$ and let $S$ and~$S'$ be two separators of~$X$.
Whenever such an $X$ is processed, our approach ensures that $C_\phi$ connects $\nu(S)$ with~$\nu(S')$.
Additionally, observe that, when processing~$X$, each vertex added to~$C_\phi$ is connected via~$C_\phi$ with~$\nu(S)$ for some separator~$S$ of~$X$.

Thus, for any two separators $S$ and~$S'$ in~$T_\phi$, $C_\phi$ connects $\nu(S)$ with~$\nu(S')$ and, additionally, each vertex~$v \in C_\phi$ is connected via~$C_\phi$ with~$\nu(S)$ for some separator~$S$ in~$T_\phi$.
Therefore, $C_\phi$ is connected.
\qed
\end{proof}

From Lemma~\ref{lem:CphiBagInt}, Lemma~\ref{lem:CphiCardinality}, Lemma~\ref{lem:CphiConnected}, and from applying Lemma~\ref{lem:rPhiDomSet} it follows:

\begin{corollary}
    \label{cor:CphiConDomSet}
\( C_\phi \) is a connected \( \big( r + (\phi + \lambda) \big) \)-dominating set for~\( G \) with \( |C_\phi| \leq |D_r| \).
\end{corollary}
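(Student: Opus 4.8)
The plan is to assemble the corollary directly from the four results it cites, after first checking that the degenerate small cases (already handled by hand before the main construction) are covered. When $T_\phi$ has at most two bags, $C_\phi$ was defined explicitly as a single vertex lying in every bag: it is trivially connected, it satisfies $|C_\phi| = 1 \le |D_r|$ since a connected $r$-dominating set of a (connected, nonempty) graph is nonempty, and it meets every bag and hence every leaf of $T_\phi$, so Lemma~\ref{lem:rPhiDomSet} applies and gives the domination bound. Thus the substantive case is $|T_\phi| \ge 3$, where the set $C_\phi$ is the one produced by the path-segment and branching-bag constructions.

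For that case I would simply quote three lemmas. Connectedness of $C_\phi$ is exactly the content of Lemma~\ref{lem:CphiConnected}. The cardinality bound $|C_\phi| \le |D_r|$ is exactly Lemma~\ref{lem:CphiCardinality} (which in turn rests on the two counting lemmas, Lemma~\ref{lem:CPcardinality} and Lemma~\ref{lem:CBcardinality}, together with Lemma~\ref{lem:treeLeafsBranch}). For the domination guarantee I would invoke Lemma~\ref{lem:CphiBagInt}: $C_\phi$ contains a vertex of every bag of $T_\phi$, in particular at least one vertex of each \emph{leaf} of $T_\phi$, which is precisely the hypothesis of Lemma~\ref{lem:rPhiDomSet}. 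Its conclusion is that such a connected set is an $\big(r + (\phi + \lambda)\big)$-dominating set of $G$. Combining the three facts yields the corollary statement verbatim.

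To make the corollary usable downstream I would then substitute the two concrete choices of $\phi$. In the \rHrt~case, $\phi = 3\rho$, so $C_\phi$ is a connected $(r + 3\rho + \lambda)$-dominating set; since every bag of a breadth-$\rho$ tree-decomposition lies inside $N_G^\rho[c(B)]$ and therefore has diameter at most $2\rho$, we have $\lambda \le 2\rho$, and hence $C_\phi$ is a connected $(r + 5\rho)$-dominating set. In the \rDmd~case, $\phi = 2\lambda$, so $C_\phi$ is a connected $(r + 3\lambda)$-dominating set. In both cases the bound $|C_\phi| \le |D_r|$ is unchanged, giving the promised $\big(r + \min(5\rho, 3\lambda)\big)$-dominating set.

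There is no genuine obstacle in the corollary itself: all the geometry and the tree-counting were already carried out in Lemma~\ref{lem:rPhiDomSet}, Lemma~\ref{lem:CPcardinality}, Lemma~\ref{lem:CBcardinality}, and Lemma~\ref{lem:treeLeafsBranch}, so the proof is pure bookkeeping. The only points that warrant a line of care are (i) explicitly folding in the $|T_\phi| \le 2$ cases so that the statement is unconditional, and (ii) recording the inequality $\lambda \le 2\rho$ for the same tree-decomposition, which is what converts the bound $\phi + \lambda = 3\rho + \lambda$ into $5\rho$ in the \rHrt~case.
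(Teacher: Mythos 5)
Your proposal is correct and matches the paper's own argument: the corollary is obtained by directly combining Lemma~\ref{lem:CphiBagInt}, Lemma~\ref{lem:CphiCardinality}, and Lemma~\ref{lem:CphiConnected} with Lemma~\ref{lem:rPhiDomSet}. Your extra remarks on the one- and two-bag cases and on $\lambda \leq 2\rho$ are consistent with how the paper handles those points elsewhere, so the route is essentially identical.
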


\paragraph{Implementation.}
Algorithm~\ref{algo:conPhiDomination} below implements our approach described above.
This also includes the case when~$T_\phi$ contains at most two bags.

\SetKw{KwStop}{stop}

\begin{algorithm}
    [htb]
    \caption
    {%
        Computes \rHrt~a connected $(r + 5 \rho)$-dominating set or \rDmd~a connected $(r + 3 \lambda)$-dominating set for a given graph~$G$ with a given tree-decomposition~$\calT$ with breadth~$\rho$ and length~$\lambda$.
    }
    \label{algo:conPhiDomination}

\parbox[t]{\hsize}
{%
    \rHrt Set $\phi := 3 \rho$. \\
    \rDmd Set $\phi := 2 \lambda$.
}

Compute a minimum $(r + \phi)$-covering subtree~$T_\phi$ of~$\calT$ using Algorithm~\ref{algo:T_rDeco}.
\label{line:compTphi}

\If
{%
    \( T_\phi \) contains only one bag~\( B \)%
    \label{line:ifTphiOneBag}
}
{%
    Pick an arbitrary vertex~$u \in B$, output~$C_\phi := \{ u \}$, and \KwStop.
}

\If
{%
    \( T_\phi \) contains exactly two bags \( B \) and~\( B' \)
}
{%
    Pick an arbitrary vertex~$u \in B \cap B'$, output~$C_\phi := \{ u \}$, and \KwStop.
    \label{line:TphiTwoBags}
}

Pick a leaf of~$T_\phi$ and make it the root of~$T_\phi$.
\label{line:rootTphi}

Split $T_\phi$ into a set~$\bbL$ of leaves, a set~$\bbP$ of path segments, and a set~$\bbB$ of branching bags.
\label{line:splitTphi}

Create an empty set~$C_\phi$.

\ForEach
{%
    \( P \in \bbP \)
}
{%
    Find a shortest path~$Q_P$ from $\Sup(P)$ to~$\Sdown(P)$ and add its vertices into~$C_\phi$.
    \label{line:pathSegFindQ}

    Let $x \in \Sup(P)$ be the start vertex and $y \in \Sdown(P)$ be the end vertex of~$Q_P$.
    Set $\nu \big( \Sup(P) \big) := x$ and $\nu \big( \Sdown(P) \big) := y$.
    \label{line:pathSegSep}
}

\ForEach
{%
    \( B \in \bbB \)%
    \label{line:loopBranchBag}
}
{%
    If $\nu \big( \Sup(B) \big)$ is defined, let $u := \nu \big( \Sup(B) \big)$.
    Otherwise, let $u$ be an arbitrary vertex in~$\Sup(B)$ and set $\nu \big( \Sup(B) \big) := u$.

    \parbox[t]{\hsize}
    {%
        \rHrt Let $v := c(B)$ be the center of~$B$. \\
        \rDmd Let $v := u$.
    }
    \label{line:defineV}

    Find a shortest path from $u$ to~$v$ and add its vertices into~$C_\phi$.

    \ForEach
    {%
        \( S_i \in \calSdown(B) \)
    }
    {%
        If $\nu(S_i)$ is defined, let $w_i := \nu(S_i)$.
        Otherwise, let $w_i$ be an arbitrary vertex in~$S_i$ and set $\nu(S_i) := w_i$.

        Find a shortest path from $w_i$ to~$v$ and add the vertices of this path into~$C_\phi$.
        \label{line:conBagSepar}
    }
}

Output $C_\phi$.
\end{algorithm}

\begin{theorem}
Algorithm~\ref{algo:conPhiDomination} computes a connected \( \big( r + (\phi + \lambda) \big) \)-dominating set~\( C_\phi \) with \( |C_\phi| \leq |D_r| \) in \( \calO(nm) \) time.
\end{theorem}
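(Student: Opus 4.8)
The plan is to split the argument into a correctness part and a running-time part, and to exploit the fact that essentially all of the substantive work has already been done in Lemma~\ref{lem:rPhiDomSet} through Corollary~\ref{cor:CphiConDomSet}. What remains is (a)~to check that the pseudocode of Algorithm~\ref{algo:conPhiDomination} faithfully realizes the construction of~$C_\phi$ analysed there, and (b)~to account for the time.

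For correctness I would first dispose of the two degenerate branches. When $T_\phi$ consists of a single bag~$B$, the output $\{u\}$ with $u \in B$ is connected, and since $B$ is (vacuously or literally) its own only leaf, Lemma~\ref{lem:rPhiDomSet} shows it $\big(r+(\phi+\lambda)\big)$-dominates~$G$; as $D_r \neq \emptyset$ we get $|C_\phi| = 1 \le |D_r|$. When $T_\phi$ has exactly two bags $B, B'$, any $u \in B \cap B'$ lies in both (leaf) bags, so the same lemma applies. For the main branch, where $T_\phi$ has at least three bags and is rooted at a leaf (line~\ref{line:rootTphi}), I would verify that the loop over~$\bbP$ produces exactly the sets $C_P = Q_P \setminus \Sup(P)$ of the ``Processing Path Segments'' paragraph while fixing $\nu$ on $\Sup(P)$ and $\Sdown(P)$, and that the loop over~$\bbB$ produces exactly the sets~$C_B$ of the ``Processing Branching Bags'' paragraph (with $v = c(B)$ in the \rHrt variant and $v = u \in \Sup(B)$ in the \rDmd variant, and one shortest path per $S_i \in \calSdown(B)$ plus one to $\Sup(B)$). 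The one subtlety I would flag explicitly is that a separator can be shared by two branching bags that the loop visits in either order; the ``reuse $\nu(S)$ if already defined, otherwise set it'' guard is precisely what guarantees that every separator receives a single representative that is never overwritten, which is the hypothesis used in the proof of Lemma~\ref{lem:CphiConnected}. Once this correspondence is established, Corollary~\ref{cor:CphiConDomSet} immediately gives that $C_\phi$ is a connected $\big(r+(\phi+\lambda)\big)$-dominating set with $|C_\phi| \le |D_r|$.

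For the running time I would argue as follows. Computing $T_\phi$ via Algorithm~\ref{algo:T_rDeco} costs $\calO(nm)$ by Lemma~\ref{lem:computeTr}. Since $\calT$ is minimal, $T_\phi$ has $\calO(n)$ bags, so rooting it and splitting it into $\bbL$, $\bbP$, $\bbB$ together with the up- and down-separators is cheap (at most $\calO(M) \subseteq \calO(nm)$, using $m \ge n-1$). Each path segment~$P$ is handled by a single multi-source BFS from the vertices of~$\Sup(P)$, stopped the first time $\Sdown(P)$ is hit, recovering~$Q_P$ in $\calO(m)$ time, and there are at most $\Lambda(T_\phi) \le n$ path segments. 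Each branching bag~$B$ is handled by a single BFS from~$v$, which simultaneously delivers the shortest paths to~$u$ and to all the~$w_i$; this is $\calO(m)$ per branching bag, and by Lemma~\ref{lem:treeLeafsBranch} the total number of (branching bag, incident separator) pairs, and hence the aggregate cost of this loop, is controlled by $c + b = \calO(l) = \calO(n)$. Summing, Algorithm~\ref{algo:conPhiDomination} runs in $\calO(nm)$ time.

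I expect no deep obstacle: domination, cardinality, and connectivity are already packaged into Corollary~\ref{cor:CphiConDomSet}. The only genuinely delicate points are verifying that the pseudocode — in particular the order ``path segments before branching bags'' and the ``do not overwrite~$\nu$'' rule — matches the construction used in the preceding lemmas, and bounding the number of BFS traversals in the branching-bag loop, for which Lemma~\ref{lem:treeLeafsBranch} is exactly the right tool.
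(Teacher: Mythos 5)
Your proposal is correct and follows essentially the same route as the paper: correctness is delegated to Corollary~\ref{cor:CphiConDomSet} after checking the pseudocode realizes the construction (including the ``do not overwrite $\nu$'' rule), and the $\calO(nm)$ bound comes from Lemma~\ref{lem:computeTr} plus an $\calO(m)$ BFS per path segment and per branching bag over $\calO(n)$ bags. The only cosmetic difference is that you invoke Lemma~\ref{lem:treeLeafsBranch} to count branching bags, whereas the paper simply uses the fact that $T_\phi$ has at most $n$ bags.
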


\begin{proof}
Since Algorithm~\ref{algo:conPhiDomination} constructs a set~$C_\phi$ as described above, its correctness follows from Corollary~\ref{cor:CphiConDomSet}.
It remains to show that the algorithm runs in $\calO(nm)$ time.

Computing $T_\phi$ (line~\ref{line:compTphi}) can be done in $\calO(nm)$ time (see Lemma~\ref{lem:computeTr}).
Picking a vertex~$u$ in the case when $T_\phi$ contains at most two bags (line~\ref{line:ifTphiOneBag} to line~\ref{line:TphiTwoBags}) can be easily done in $\calO(n)$ time.
Recall that $T_\phi$ has at most $n$~bags.
Thus, splitting $T_\phi$ in the sets $\bbL$, $\bbP$, and~$\bbB$ can be done in $\calO(n)$ time.

Determining all up-separators in~$T_\phi$ can be done in $\calO(M)$ time as follows.
Process all bags of~$T_\phi$ in an order such that a bag is processed before its descendants, \eg, use a preorder or BFS-order.
Whenever a bag~$B$ is processed, determine a set~$S \subseteq B$ of flagged vertices, store $S$ as up-separator of~$B$, and, afterwards, flag all vertices in~$B$.
Clearly, $S$~is empty for the root.
Because a bag~$B$ is processed before its descendants, all flagged vertices in~$B$ also belong to its parent.
Thus, by properties of tree-decompositions, these vertices are exactly the vertices in~$\Sup(B)$.
Clearly, processing a single bag~$B$ takes at most $\calO(|B|)$ time.
Thus, processing all bags takes at most $\calO(M)$ time.
Note that it is not necessary to determine the down-separators of a (branching) bag.
They can easily be accessed via the children of a bag.

Processing a single path segment (line~\ref{line:pathSegFindQ} and line~\ref{line:pathSegSep}) can be easily done in $\calO(m)$ time.
Processing a branching bag~$B$ (line~\ref{line:loopBranchBag} to line~\ref{line:conBagSepar}) can be implemented to run in $\calO(m)$ time by, first, determining $\nu(S)$ for each separator~$S$ of~$B$ and, second, running a BFS starting at~$v$ (defined in line~\ref{line:defineV}) to connect $v$ with each vertex~$\nu(S)$.
Because $T_\phi$ has at most $n$~bags, it takes at most $\calO(nm)$ time to process all path segments and branching bags of~$T_\phi$.

Therefore, Algorithm~\ref{algo:conPhiDomination} runs in \( \calO(nm) \) total time.
\qed
\end{proof}

\section{Implications for the $p$-Center Problem}

The \emph{(Connected) \( p \)-Center} problem asks, given a graph~$G$ and some integer~$p$, for a (connected) vertex set~$S$ with $|S| \leq p$ such that $S$ has minimum eccentricity, \ie, there is no (connected) set~$S'$ with $\ecc_G(S') < \ecc_G(S)$.
It is known (see, \eg,~\cite{BranChepDrag1998}) that the $p$-Center problem and $r$-Domination problem are closely related.
Indeed, one can solve each of these problems by solving the other problem a logarithmic number of times.
Lemma~\ref{lem:rDomToPCenterApprox} below generalises this observation.
Informally, it states that we are able to find a $+ \phi$-approximation for the $p$-Center problem if we can find a good $(r + \phi)$-dominating set.

\begin{lemma}
    \label{lem:rDomToPCenterApprox}
For a given graph~\( G \), let \( D_r \) be an optimal (connected) \( r \)-dominating set and \( C_p \) be an optimal (connected) \( p \)-center.
If, for some non-negative integer~\( \phi \), there is an algorithm to compute a (connected) \( (r + \phi) \)-dominating set~\( D \) with \( |D| \leq |D_r| \) in \( \calO \big( T(G) \big) \) time, then there is an algorithm to compute a (connected) \( p \)-center~\( C \) with \( \ecc_G(C) \leq \ecc_G(C_p) + \phi \) in \( \calO \big( T(G) \log n \big) \) time.
\end{lemma}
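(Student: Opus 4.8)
The plan is to reduce the (Connected) $p$-Center problem to $\calO(\log n)$ invocations of the assumed $(r + \phi)$-domination algorithm by searching for the optimal radius. For an integer $\lambda \geq 0$, let $r_\lambda$ denote the constant radius function with $r_\lambda(v) = \lambda$ for all~$v$, let $D(\lambda)$ be the set returned by the assumed algorithm when run with $r = r_\lambda$, and let $\mathrm{opt}(\lambda)$ be the cardinality of an optimal (connected) $r_\lambda$-dominating set of~$G$. A vertex set is an $r_\lambda$-dominating set of~$G$ precisely when its eccentricity is at most~$\lambda$, so the (Connected) $p$-Center problem amounts to finding the smallest $\lambda$ for which a (connected) $r_\lambda$-dominating set of size at most~$p$ exists; this smallest value is exactly $\lambda^* := \ecc_G(C_p)$.

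First I would record the two observations driving the search. \textbf{(a)} For every $\lambda \geq \lambda^*$ we have $|D(\lambda)| \leq p$: indeed $C_p$ is a (connected) $r_{\lambda^*}$-dominating set, hence also a (connected) $r_\lambda$-dominating set, so $\mathrm{opt}(\lambda) \leq \mathrm{opt}(\lambda^*) \leq |C_p| \leq p$, and by hypothesis $|D(\lambda)| \leq \mathrm{opt}(\lambda)$. \textbf{(b)} For every $\lambda$, the set $D(\lambda)$ is a (connected) $(r_\lambda + \phi)$-dominating set, so $\ecc_G(D(\lambda)) \leq \lambda + \phi$. Observe that (a) relies on monotonicity of the \emph{optimum} $\mathrm{opt}(\lambda)$ in~$\lambda$; the output size $|D(\lambda)|$ of the black-box algorithm need not be monotone, and this is the one subtlety to handle carefully.

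Next I would perform a binary search for~$\lambda$ over $\{0, 1, \ldots, \diam(G)\}$ (note $\lambda^* \leq \diam(G) \leq n$), maintaining the loop invariant that the current search interval $[\mathit{lo}, \mathit{hi}]$ satisfies $|D(\mathit{hi})| \leq p$ and ($\mathit{lo} = 0$ or $|D(\mathit{lo} - 1)| > p$). Initially $\mathit{lo} = 0$ and $\mathit{hi} = \diam(G)$, so the invariant holds by~(a). At each step, put $\mathit{mid} = \lfloor (\mathit{lo} + \mathit{hi})/2 \rfloor$, run the assumed algorithm with $r = r_{\mathit{mid}}$, and set $\mathit{hi} := \mathit{mid}$ if $|D(\mathit{mid})| \leq p$ and $\mathit{lo} := \mathit{mid} + 1$ otherwise; either update preserves the invariant. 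When $\mathit{lo} = \mathit{hi}$, output $C := D(\mathit{lo})$. By the contrapositive of~(a), $|D(\lambda)| > p$ implies $\lambda < \lambda^*$, so the invariant clause $|D(\mathit{lo} - 1)| > p$ gives $\mathit{lo} - 1 < \lambda^*$, that is $\mathit{lo} \leq \lambda^*$; and $|D(\mathit{lo})| = |D(\mathit{hi})| \leq p$. Hence $C$ is a (connected) set with $|C| \leq p$, and by~(b) we get $\ecc_G(C) \leq \mathit{lo} + \phi \leq \lambda^* + \phi = \ecc_G(C_p) + \phi$, which is the claim.

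For the running time, the search performs $\calO(\log \diam(G)) = \calO(\log n)$ iterations, each of which calls the assumed algorithm once (time $\calO(T(G))$) and otherwise only inspects $|D(\mathit{mid})|$ (time $\calO(n)$, absorbed into $\calO(T(G))$ for any algorithm that reads its input), so the total is $\calO(T(G) \log n)$. The main obstacle, as flagged above, is that ``$|D(\lambda)| \leq p$'' cannot be assumed to be a monotone threshold predicate; correctness of the search is therefore argued via monotonicity of the true optimum $\mathrm{opt}(\lambda)$ together with the stated invariant, which jointly force the radius $\mathit{lo}$ returned by the search to satisfy $\mathit{lo} \leq \ecc_G(C_p)$.
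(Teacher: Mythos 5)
Your proposal is correct and takes essentially the same route as the paper: a binary search over the constant radius functions $r_i(u) = i$, using that $C_p$ is an $r_i$-dominating set for every $i \geq \ecc_G(C_p)$ so the black box returns a set of size at most~$p$ there, and that any returned set has eccentricity at most $i + \phi$; your explicit invariant for the non-monotone quantity $|D(\lambda)|$ merely spells out what the paper's terser argument ("there is an $i \leq \ecc_G(C_p)$ such that $D = D_i$") leaves implicit. One cosmetic point: search over $\{0, 1, \ldots, n\}$ (as the paper does) rather than up to $\diam(G)$, since computing the diameter exactly could dominate the claimed $\calO\big(T(G)\log n\big)$ bound, and your own remark $\diam(G) \leq n$ already makes this substitution harmless.
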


\begin{proof}
Let $\calA$ be an algorithm which computes a (connected) $(r + \phi)$ dominating set~$D = \calA(G, r)$ for~$G$ with $|D| \leq |D_r|$ in $\calO \big( T(G) \big)$ time.
Then we can compute a (connected) $p$-center for~$G$ as follows.
Make a binary search over the integers~$i \in [0, n]$.
In each iteration, set $r_i(u) = i$ for each vertex~$u$ of~$G$ and compute the set~$D_i = \calA(G, r_i)$.
Then, increase~$i$ if $|D_i| > p$ and decrease~$i$ otherwise.
Note that, by construction, $\ecc_G(D_i) \leq i + \phi$.
Let $D$ be the resulting set, \ie, out of all computed sets~$D_i$, $D$ is the set with minimal~$i$ for which~$|D_i| \leq p$.
It is easy to see that finding~$D$ requires at most $\calO \big( T(G) \log n \big)$ time.

Clearly, $C_p$ is a (connected) $r$-dominating set for~$G$ when setting $r(u) = \ecc_G(C_p)$ for each vertex~$u$ of~$G$.
Thus, for each~$i \geq \ecc_G(C_p)$, $|D_i| \leq |C_p| \leq p$ and, hence, the binary search decreases~$i$ for next iteration.
Therefore, there is an $i \leq \ecc_G(C_p)$ such that $D = D_i$.
Hence, $|D| \leq |C_p|$ and $\ecc_G(D) \leq \ecc_G(C_p) + \phi$.
\qed
\end{proof}

From Lemma~\ref{lem:rDomToPCenterApprox}, the results in Table~\ref{tbl:pCenterResults} and Table~\ref{tbl:ConPCenterResults} follow immediately.

\begin{table}
    [htb]
    \centering
    \caption
    {%
        Implications of our results for the $p$-Center problem.
    }
    \label{tbl:pCenterResults}
    \def\arraystretch{1.25}
    \begin{tabular}{lcc}
        \hline
        Approach
            & Approx.
            & Time \\
        \hline
        Layering Partition
            & $+ \Delta$
            & $\calO(m \log n)$ \\
        Tree-Decomposition
            & $+ \rho$
            & $\calO(nm \log n)$ \\
        \hline
    \end{tabular}
\end{table}

\begin{table}
    [htb]
    \centering
    \caption
    {%
        Implications of our results for the Connected $p$-Center problem.
    }
    \label{tbl:ConPCenterResults}
    \def\arraystretch{1.25}
    \begin{tabular}{lcc}
        \hline
        Approach
            & Approx.
            & Time \\
        \hline
        Layering Partition
            & $+ 2 \Delta$
            & $\calO(m \, \alpha(n) \log \Delta \log n)$ \\
        Tree-Decomposition
            & $+ \min (5 \rho, 3 \lambda)$
            & $\calO(nm \log n)$ \\
        \hline
    \end{tabular}
\end{table}

In what follows, we show that, when using a layering partition, we can achieve the results from Table~\ref{tbl:pCenterResults} and Table~\ref{tbl:ConPCenterResults} without the logarithmic overhead.

\begin{theorem}
    \label{theo:pCenterLayPart}
For a given graph~\( G \), a \( + \Delta \)-approximation for the \( p \)-Center problem can be computed in linear time.
\end{theorem}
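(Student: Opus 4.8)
The plan is to reduce the $p$-Center problem on~$G$ to the ordinary $p$-Center problem on the layering-partition tree~$\calT$, to solve the latter \emph{exactly} in linear time, and then to lift the solution back to~$G$, losing only an additive~$\Delta$ in each direction of the comparison (so no logarithmic overhead, unlike the generic reduction of Lemma~\ref{lem:rDomToPCenterApprox}).

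Concretely, I would proceed as follows. First, compute a layering partition~$\calT$ of~$G$ from an arbitrary start vertex; this takes linear time. Second, view~$\calT$ as an unweighted tree whose nodes are the clusters (note $\calT$ has at most $n$~nodes), and compute an optimal discrete $p$-center~$\calC$ of~$\calT$, i.e.\ a set of at most $p$~clusters minimizing $\ecc_\calT(\calC) = \max_{C} d_\calT(C, \calC)$; set $r^\ast := \ecc_\calT(\calC)$. Here I would invoke the fact that the $p$-Center problem on trees can be solved in linear time, so this step is linear as well. Third, form~$C$ by choosing one arbitrary vertex of~$G$ from each cluster in~$\calC$; then $|C| = |\calC| \le p$, and we output~$C$.

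For correctness it suffices to prove $r^\ast \le \ecc_G(C_p) \le \ecc_G(C) \le r^\ast + \Delta$ is replaced by the two one-sided bounds $r^\ast \le \ecc_G(C_p)$ and $\ecc_G(C) \le r^\ast + \Delta$, where $C_p$ is an optimal $p$-center of~$G$; chaining them yields $\ecc_G(C) \le \ecc_G(C_p) + \Delta$. The upper bound uses Lemma~\ref{lem:LayPartVertDist}: given a vertex~$u$ of~$G$ in cluster~$C_u$, there is a cluster $C' \in \calC$ with $d_\calT(C_u, C') \le r^\ast$, and the representative $v \in C'$ we selected satisfies $d_G(u, v) \le d_\calT(u, v) + \Delta = d_\calT(C_u, C') + \Delta \le r^\ast + \Delta$, hence $d_G(u, C) \le r^\ast + \Delta$. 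For the lower bound, let $\calD$ be the set of clusters meeting~$C_p$; since clusters are pairwise disjoint, $|\calD| \le |C_p| \le p$, and by the first inequality of Lemma~\ref{lem:LayPartVertDist}, for any cluster~$C$, picking $u \in C$ and $w \in C_p$ with $d_G(u,w) \le \ecc_G(C_p)$ gives $d_\calT(C, \calD) \le d_\calT(u, w) \le d_G(u, w) \le \ecc_G(C_p)$; thus $\ecc_\calT(\calD) \le \ecc_G(C_p)$, and optimality of~$\calC$ on~$\calT$ forces $r^\ast = \ecc_\calT(\calC) \le \ecc_\calT(\calD) \le \ecc_G(C_p)$.

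The step I expect to be the crux is the second one: shaving off the logarithmic factor that Lemma~\ref{lem:rDomToPCenterApprox} (applied to Theorem~\ref{theo:rDeltaDom}) would produce. The key observation making this possible is that the layering partition is built only once, after which the residual task is a \emph{single} $p$-Center instance on a tree, solvable in linear time; I would cite such a tree algorithm rather than re-derive it. Everything else — the linear-time layering partition, choosing representatives, and the two $\Delta$-gap estimates above — is routine given Lemma~\ref{lem:LayPartVertDist}.
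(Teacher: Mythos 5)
Your proposal is correct and follows essentially the same route as the paper: build the layering partition, solve the $p$-Center problem \emph{exactly} on the tree~$\calT$ in linear time (the paper cites Frederickson's algorithm for this), pick one representative vertex per chosen cluster, and use the two inequalities of Lemma~\ref{lem:LayPartVertDist} to get the lower bound via the clusters meeting an optimal center of~$G$ and the upper bound of $+\Delta$ for the lifted solution. No substantive differences to report.
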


\begin{proof}
First, create a layering partition~$\calT$ of~$G$.
Second, find an optimal $p$-center~$\calS$ for~$\calT$.
Third, create a set~$S$ by picking an arbitrary vertex of~$G$ from each cluster in~$\calS$.
All three steps can be performed in linear time, including the computation of~$\calS$ (see~\cite{Frederickson1991}).

Let $C$ be an optimal $p$-center for~$G$.
Note that, by Lemma~\ref{lem:LayPartVertDist}, $C$ also induces a $p$-center for~$\calT$.
Therefore, because $S$ induces an optimal $p$-center for~$\calT$, Lemma~\ref{lem:LayPartVertDist} implies that, for each vertex~$u$ of~$G$,
\[
    d_G(u, C) \leq d_G(u, S) \leq d_\calT(u, \calS) + \Delta \leq d_\calT(u, C) + \Delta \leq d_G(u, C) + \Delta.
    \tag*{\qed}
\]
\end{proof}

\begin{theorem}
For a given graph~\( G \), a \( + 2 \Delta \)-approximation for the connected \( p \)-Center problem can be computed in \( \calO \big( m \, \alpha(n) \log \min(\Delta, p) \big) \) time.
\end{theorem}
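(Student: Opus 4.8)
The idea is to avoid the generic $\calO\big(T(G)\log n\big)$ reduction of Lemma~\ref{lem:rDomToPCenterApprox} and instead run the connected $r$‑domination machinery behind Theorem~\ref{theo:rDeltaConDom} a single time, for one cleverly chosen \emph{constant} radius. Let $\calT$ be a layering partition of~$G$ (linear time), let $C_p$ be an optimal connected $p$‑center of~$G$, and let $r^\ast$ be the optimal value of the connected $p$‑center problem \emph{on the tree~$\calT$}, i.e.\ the smallest integer~$i$ such that $\calT$ admits a connected subtree with at most~$p$ clusters that $i$‑dominates~$\calT$; equivalently, $r^\ast$ is the smallest~$i$ for which $|T_{r_i}| \leq p$, where $T_{r_i}$ is the minimum connected $i$‑dominating subtree of~$\calT$ and $r_i \equiv i$. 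First I would compute $r^\ast$ and a corresponding minimum connected $r^\ast$‑dominating subtree $T_{r^\ast}$ of~$\calT$: $r^\ast$ in linear time by a tree algorithm for the connected (subtree‑shaped) $p$‑center problem (cf.\ the use of \cite{Frederickson1991} in the proof of Theorem~\ref{theo:pCenterLayPart}), and $T_{r^\ast}$ in linear time by~\cite{Dragan1993}; by the choice of~$r^\ast$ we have $|T_{r^\ast}| \leq p$. Then I would carry out the one‑sided binary search over~$\delta$ exactly as in the loop of Algorithm~\ref{algo:con2DeltaDom}, applied to~$T_{r^\ast}$ (so that for each probed~$\delta$, $T_\delta$ is the minimum $\delta$‑dominating subtree of~$T_{r^\ast}$, i.e.\ a minimum $(r^\ast+\delta)$‑dominating subtree of~$\calT$), invoking Algorithm~\ref{algo:Tdelta} to build a connected set $S_\delta$ intersecting all clusters of~$T_\delta$, and accepting~$\delta$ iff $|S_\delta| \leq |T_{r^\ast}|$. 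The output is the accepted $S_\delta$ with the smallest probed~$\delta$.

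\textbf{Correctness.} By Lemma~\ref{lem:SdeltaCardinality} the set $S_\delta$ is connected, and by the acceptance rule $|S_\delta| \leq |T_{r^\ast}| \leq p$, so $S_\delta$ is a feasible connected $p$‑center. As in the proof of Theorem~\ref{theo:rDeltaConDom}, the one‑sided search returns some $\delta \leq \Delta$, since $\delta = \Delta$ is accepted by Corollary~\ref{cor:SdeltaCardinality} (with constant radius $r^\ast$, $T_r = T_{r^\ast}$). Because $S_\delta$ intersects every cluster of the minimum $(r^\ast+\delta)$‑dominating subtree $T_\delta$, Corollary~\ref{cor:TdeltaDomSet} gives $\ecc_G(S_\delta) \leq r^\ast + \delta + \Delta \leq r^\ast + 2\Delta$. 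Finally, $r^\ast \leq \ecc_G(C_p)$: the clusters of~$\calT$ containing a vertex of~$C_p$ form a connected subtree of~$\calT$ (as in the proof of Lemma~\ref{lem:TrCardinality}, since $C_p$ is connected), this subtree has at most $|C_p| \leq p$ clusters (clusters are pairwise disjoint), and by Lemma~\ref{lem:LayPartVertDist} it $\ecc_G(C_p)$‑dominates~$\calT$; hence the connected $p$‑center value of~$\calT$, which is~$r^\ast$, is at most $\ecc_G(C_p)$. Combining, $\ecc_G(S_\delta) \leq \ecc_G(C_p) + 2\Delta$, so $S_\delta$ is a $+\,2\Delta$‑approximation.

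\textbf{Running time.} The new point is that the $\delta$‑search now terminates after very few probes. By Lemma~\ref{lem:TdeltaTrCardinality}, $|T_\delta| \leq |T_{r^\ast}| - \delta\cdot\Lambda(T_\delta)$; hence if $\delta \geq |T_{r^\ast}|$ and $T_\delta$ had more than one cluster (so $\Lambda(T_\delta) \geq 1$) we would get $|T_\delta| \leq 0$, impossible. Thus for every $\delta \geq |T_{r^\ast}|$ the tree $T_\delta$ is a single cluster, $|S_\delta| = 1 \leq |T_{r^\ast}|$, and $\delta$ is accepted; together with Corollary~\ref{cor:SdeltaCardinality}, the whole interval $\big[\min(\Delta,|T_{r^\ast}|),\infty\big)$ is accepted. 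Since the one‑sided binary search probes only $\calO\big(\log(\text{first accepted power of two})\big)$ values, this is $\calO\big(\log\min(\Delta,|T_{r^\ast}|)\big) = \calO\big(\log\min(\Delta,p)\big)$ probes, and (as in Theorem~\ref{theo:rDeltaConDom}) the returned $\delta$ is still $\leq \Delta$. Each probe computes $T_\delta$ in linear time and runs Algorithm~\ref{algo:Tdelta} in $\calO\big(m\,\alpha(n)\big)$ by Lemma~\ref{lem:SdeltaCardinality}; the preprocessing (layering partition, $r^\ast$, $T_{r^\ast}$) is linear. Hence the total time is $\calO\big(m\,\alpha(n)\log\min(\Delta,p)\big)$.

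\textbf{Main obstacle.} The delicate part is computing $r^\ast$ in linear time, i.e.\ solving the connected ($=$ subtree‑shaped) $p$‑center problem on the tree~$\calT$ exactly: a naive binary search over the radius~$i$ that tests $|T_{r_i}|\leq p$ via~\cite{Dragan1993} would use $\Theta(\log n)$ probes and reintroduce exactly the logarithmic overhead we are trying to remove, so one needs a Frederickson‑style parametric‑search / tree‑DP argument adapted to connected centers. The only other point requiring care is the bookkeeping showing simultaneously that the search visits only $\calO(\log\min(\Delta,p))$ values of~$\delta$ (using that acceptance holds on $[\min(\Delta,|T_{r^\ast}|),\infty)$) and that the particular~$\delta$ it returns is at most~$\Delta$ (so the additive error is exactly~$2\Delta$ and not a larger multiple of~$\Delta$).
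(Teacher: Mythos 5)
Your proposal is correct and essentially identical to the paper's proof: the paper also replaces $T_r$ in Algorithm~\ref{algo:con2DeltaDom} by an optimal connected $p$-center subtree of the layering tree~$\calT$ (your $T_{r^\ast}$), computed with the tree algorithm of~\cite{YenChen2007} --- which is how it disposes of the ``main obstacle'' you flag --- and then runs the same one-sided binary search over~$\delta$, with the same error chain $\ecc_G(S_\delta) \leq \ecc_\calT(T_\delta) + \Delta \leq \ecc_\calT(T_{r^\ast}) + \delta + \Delta \leq \ecc_G(C_p) + 2\Delta$. The only cosmetic difference is the threshold used to cap the number of iterations at $\calO(\log p)$: you argue via Lemma~\ref{lem:TdeltaTrCardinality} that every $\delta \geq |T_{r^\ast}|$ is accepted, whereas the paper observes that $T_\delta$ collapses to a single cluster once $\delta \geq \lceil p/2 \rceil$.
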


\begin{proof}
Recall Algorithm~\ref{algo:con2DeltaDom} for computing a connected $(r + 2 \Delta)$-dominating set.
We create Algorithm~\ref{algo:con2DeltaDom}$^*$ by slightly modifying Algorithm~\ref{algo:con2DeltaDom} as follows.
In line~\ref{line:compTr}, instead of computing an $r$-dominating subtree~$T_r$ of~$\calT$, compute an optimal connected $p$-center~$T_p$ of~$\calT$ (see~\cite{YenChen2007}).
Accordingly, in line~\ref{line:compDeltaDom}, compute a $\delta$-dominating subtree of~$T_p$, check in line~\ref{line:checkSdelta} if $|S_\delta| \leq |T_p|$ (\ie, if $|S_\delta| \leq p$), and output in line~\ref{line:outSdelta} the set~$S_\delta$ with the smallest~$\delta$ for which $|S_\delta| \leq p$.

Let $S$ be the set computed by Algorithm~\ref{algo:con2DeltaDom}$^*$.
As shown in the proof of Theorem~\ref{theo:rDeltaConDom}, it follows from Lemma~\ref{lem:SdeltaCardinality} and Corollary~\ref{cor:SdeltaCardinality} that $S$ is connected, $|S| \leq p$, and $S = S_\delta$ for some $\delta \leq \Delta$.

Now, let $C$ be an optimal connected $p$-center for~$G$.
Clearly, by definition of~$C$ and by Lemma~\ref{lem:LayPartVertDist}, $\ecc_G(C) \leq \ecc_G(S_\delta) \leq \ecc_\calT(T_\delta) + \Delta$.
Because $T_\delta$ is a $\delta$-dominating subtree of~$T_p$, $\ecc_\calT(T_\delta) \leq \ecc_\calT(T_p) + \delta$.
Let $T_C$ be the subtree of~$\calT$ induced by~$C$, \ie, the subtree of~$\calT$ induced by the clusters which contain vertices of~$C$.
Then, because $T_p$ is an optimal connected $p$-center for~$\calT$ and, clearly, $|T_C| \leq p$, $\ecc_\calT(T_p) \leq \ecc_\calT(T_C)$.
Therefore, since $\delta \leq \Delta$, $\ecc_G(C) \leq \ecc_G(S_\delta) \leq \ecc_\calT(T_C) + 2 \Delta$ and, by Lemma~\ref{lem:LayPartVertDist}, $\ecc_G(C) \leq \ecc_G(S_\delta) \leq \ecc_G(C) + 2 \Delta$.

As shown in the proof of Theorem~\ref{theo:rDeltaConDom}, the one-sided binary search of Algorithm~\ref{algo:con2DeltaDom}$^*$ has at most $\calO(\log \Delta)$ iterations.
Because $|T_p| \leq p$, $T_p$ contains a cluster with eccentricity at most $\lceil p/2 \rceil$ in~$T_p$.
Therefore, for any $\delta \geq \lceil p/2 \rceil$, $|T_\delta| = |S_\delta| = 1$ and, thus, the algorithm decreases~$\delta$.
Hence, the one-sided binary search of Algorithm~\ref{algo:con2DeltaDom}$^*$ has at most $\calO(\log p)$ iterations.
Therefore, the algorithm runs in at most $\calO \big( m \, \alpha(n) \log \min(\Delta, p) \big)$ total time.
\qed
\end{proof}

\end{document}